\newcommand{%
  \begingroup%
  \renewcommand\normalsize{\footnotesize}% Specify your font modification
  \input{}%
  \endgroup%
}[1]{%
  \begingroup%
  \renewcommand\normalsize{\footnotesize}% Specify your font modification
  \input{#1}%
  \endgroup%
}
\newlength\myindent
\tikzstyle{Arrow} = [
\theoremstyle{plain}
\newtheorem{proposition}{Proposition}
\newtheorem{assumption}{Assumption}
\theoremstyle{remark}
\newtheorem{remark}{Remark}
\newcommand{\blind}{0}
\begin{document}

\def\spacingset#1{\renewcommand{\baselinestretch}%
{#1}\small\normalsize} \spacingset{1}

%%%%%%%%%%%%%%%%%%%%%%%%%%%%%%%%%%%%%%%%%%%%%%%%%%%%%%%%%%%%%%%%%%%%%%%%%%%%%%

\if0\blind
{
  \title{\bf The effect of COVID-19 vaccinations on self-reported depression and anxiety during February 2021}
\author{Max Rubinstein\thanks{
    The authors gratefully acknowledge invaluable advice from discussions with Vinni Bhatia, Nate Breg, David Choi, Riccardo Fogliato, Joel Greenhouse, Edward Kennedy, Samantha Patel, Alex Reinhart, and Robin Mejia.} \\ Heinz College and Department of Statistics \& Data Science \\Carnegie Mellon University\and
  Amelia Haviland \\ Heinz College \\Carnegie Mellon University \and
  Joshua Breslau\\ RAND Corporation}
  \date{\vspace{-5ex}}
  \maketitle
} \fi

\if1\blind
{
  \bigskip
  \bigskip
  \bigskip
  \begin{center}
    {\LARGE\bf Title}
\end{center}
  \medskip
} \fi

\bigskip
\begin{abstract}
Using the COVID-19 Trends and Impact survey, we find that COVID-19 vaccinations reduced the prevalence of self-reported feelings of depression and anxiety, isolation, and worries about health among vaccine-accepting respondents in February 2021 by 3.7, 3.3, and 4.3 percentage points, respectively, with particularly large reductions among respondents aged 18 and 24 years old. We show that interventions targeting social isolation account for 39.1\% of the total effect of COVID-19 vaccinations on depression, while interventions targeting worries about health can account for 8.3\%. This suggests that social isolation is a stronger mediator of the effect of COVID-19 vaccinations on depression than worries about health. We caution that these causal interpretations rely on strong assumptions.
\end{abstract}

\noindent%
{\it Keywords:}  Mental health, COVID-19, vaccinations
\vfill

\newpage
\spacingset{1.45} % DON'T change the spacing!

\section{Introduction}

The COVID-19 pandemic and the associated social isolation, economic hardship associated with job loss, and social uncertainty led to an increase psychological distress (\cite{pfefferbaum2020mental}). For example, \cite{breslau2021longitudinal} showed that the same number of people experienced serious psychological distress in May 2021 as the entire year prior to the pandemic. \cite{santomauro2021global} estimated that the pandemic caused a 27.6\% increase in cases of major depressive disorder and a 25.6\% increase in cases of anxiety disorders globally. Younger adults have consistently been found to be among the most adversely subgroups (\cite{santomauro2021global}, \cite{sojli2021covid}). Secondary consequences of these mental health impacts, such as suicide and drug overdoses, are also of concern (see, e.g., \cite{kawohl2020covid}, \cite{glober2020impact}).

If COVID-19 pandemic worsened mental health, then perhaps receiving a COVID-19 vaccination improved it. This might happen, for example, by decreasing worries about health or social isolation. We investigate this hypothesis by estimating the effects of COVID-19 vaccinations on the rates of depression and anxiety, feelings of social isolation, and worries about health in February 2021. Our analysis is based on the COVID-19 Trends and Impacts Survey (CTIS) (\cite{salomon2021us}), a cross-sectional survey designed by the Delphi group at Carnegie Mellon University and administered in collaboration with Facebook. We quantify the average effect of vaccinations on these outcomes among a subset of vaccine-accepting Facebook users during February 2021. By vaccine-accepting we mean CTIS respondents who indicated that they either would receive a COVID-19 vaccine if offered one that day, or who indicated that they had already received a COVID-19 vaccine. By depression and anxiety we mean self-reported feelings, not a clinical diagnosis. A key assumption for the causal interpretation of this analysis is that conditional on vaccine-acceptance and observed covariates, vaccinated and unvaccinated respondents differed on average only in vaccination status.

We also examine effect heterogeneity and mediation. For our heterogeneity analysis, we consider both pre-specified and data-driven subgroups, using an approach similar to \cite{athey2016recursive}. For our mediation analysis, we consider a model where COVID-19 vaccinations affect depression and anxiety through feelings of social isolation, worries about health, and a direct effect capturing all other channels. We allow the causal ordering of the mediators to be unknown; however, we assume that they are determined prior to the outcome, and that no other effects of vaccinations affect these mediators. Using the interventional effect decomposition proposed by \cite{vansteelandt2017interventional}, we estimate the proportion of the total effect on depression via each mediator. While interventional indirect effects do not necessarily reflect mechanisms without stronger assumptions (\cite{miles2022causal}), this is the first study we are aware of that attempts to estimate the effects of COVID-19 vaccinations on depression via intermediate pathways.

Section~\ref{sec:literature} presents a brief review of the related literature and our contributions. Section~\ref{sec:data} describes our data and provides summary statistics. Section~\ref{sec:methods} outlines our identification, estimation, and inferential strategy. Section~\ref{sec:results} presents our results. Section~\ref{sec:limitations} considers the sensitivity of our results to our causal assumptions and their robustness to analytic choices, and Section~\ref{sec:discussion} provides discussion and conclusion. Additional materials are available in the Appendices.

\section{Related Literature}\label{sec:literature}

Two concurrent studies examined the effect of COVID-19 vaccinations on mental health: \cite{perez2021covid} and \cite{agrawal2021impact}. \cite{perez2021covid} use longitudinal data from the Understanding America Survey and fixed-effects models to estimate that vaccines caused a 4\% reduction of the probability of being at least mildly depressed and a 15\% reduction in the probability of being severely depressed. When testing for effect heterogeneity, they were unable to rule out that the effects were the same across sex, race, and education level. \cite{agrawal2021impact} use repeated cross-sectional survey data from Census Pulse. Using heterogeneity in state-level eligibility requirements as an instrumental variable, they use two-stage least squares to estimate that COVID-19 vaccination reduces anxiety and depression symptoms by nearly 30\%. They find larger effect sizes among individuals with lower education levels, who rent their housing, who are unable to telework, and who live with children. 

Our identification strategy instead hinges on an assumption that sample selection and COVID-19 vaccination are independent of potential outcomes given vaccine-acceptance and observed covariates. This assumption precludes confounding at the population-level and biases induced by conditioning on our sample. We also exploit auxiliary data to identify potential high impact subgroups hence mitigating the risk of false discoveries. Finally, we estimate interventional indirect effects to better understand the mechanisms through which COVID-19 vaccinations affect depression and anxiety. By contrast, the previous studies only identify average effects or local average effects, do not use sample-splitting for their heterogeneity analyses, and do not address mediation beyond speculation. A key contribution of our paper is therefore our mediation analysis. This analysis hinges on the idea that worries about health and social isolation are the two primary pathways through which vaccinations would change depression and anxiety. These are commonly cited as two major ways that the pandemic has adversely impacted mental health (\cite{robb2020associations}, \cite{gonccalves2020preliminary}, \cite{corbett2020health}).

\section{Data}\label{sec:data}

Our primary dataset is the COVID-19 Trends and Impact Survey (CTIS), created by the Delphi group at Carnegie Mellon University in collaboration with Facebook. From April 2020 through June 2022, Facebook offered the CTIS survey using a stratified random sample (strata defined by geographic region) of approximately 100 million US residents from the Facebook Active User Base who use one of the supported languages: English, Spanish, French, Brazilian Portuguese, Vietnamese, and simplified Chinese. A link to participate in the survey was displayed at the top of users' News Feed. Users received this offer at most once per month depending on their geographic location. The goal was to obtain approximately one million monthly responses. The CTIS evolved over time, but all waves of the survey instrument include questions about COVID symptoms, health behaviors, mental health, and demographics. A complete list of the questions over time is available \href{https://cmu-delphi.github.io/delphi-epidata/symptom-survey/coding.html}{online}. \cite{salomon2021us} provides more details about the survey.  

We analyze the period from February 1-28, 2021\footnote{This period encompasses both CTIS waves 7 and 8. Wave 8 was deployed on February 8, 2021, and incorporated changes to some of the questions we use as covariates. These include the addition of categories to the chronic health conditions and occupation questions, and changing a question on the previous receipt of flu vaccine to be defined from July 2020 instead of June 2020.} We choose this timeframe to increase the plausibility of our identifying assumptions, which we discuss further in Section~\ref{sec:timeframe}. Our primary outcome ($Y$) is whether a respondent reported feeling depressed or anxious ``most'' or ``all'' of the time in the previous five days. We hereafter refer to this variable as ``depressed.'' Our treatment variable ($A$), is an indicator of whether the respondent ever received a COVID-19 vaccine.\footnote{We recode the response ``I don't know'' to be grouped with people who indicated they did not already receive a COVID-19 vaccine.} Importantly, the CTIS does not tell us \textit{when} a respondent was vaccinated. We also consider responses to how often a respondent felt isolated from others in the previous five days ($M_1$), and how worried respondents reported feeling about themselves or an immediate family member becoming seriously ill from COVID-19 ($M_2$), each containing four levels.\footnote{The CTIS questions specifically are ``In the past 5 days, how often have you felt isolated from others?" and ``How worried are you that you or someone in your immediate family might become seriously ill from COVID-19 (coronavirus disease)?''} We hereafter refer to these variables as ``isolated'' and ``worries about health,'' respectively. We also create the variable $M$ as the sixteen level joint variable $M_1 \times M_2$. Finally, we dichotomize $M_1$ and $M_2$ when analyzing them as outcomes. Specifically, we create indicators for the responses ``most'' or ``all'' of the time for isolation, and ``very worried'' for worries about health. Whenever referring to our outcomes analyses, for simplicity we refer to any outcomes as $Y$.

We consider several covariates recorded in the CTIS pertaining to demographic, household, employment, and health status. For demographics we consider each respondent's age category, race, ethnicity, gender, and highest education attained. For household variables, we construct indicators of whether a respondent lives alone; lives with a child attending school in-person full-time, part-time, or not in school; or lives with an elderly (over 65) individual. For employment, we record whether each respondent worked for pay in the previous 30 days, and whether their work was at home or outside. Among respondents who work, we record the respondent's occupation type, including healthcare, education, service-industry workers, protective services, other employment sector. Finally, we record information regarding each respondent's health and health behaviors, including whether they have previously been tested for and/or diagnosed with COVID-19, how many ``high-risk'' health conditions they have ever been diagnosed with, (0, 1, 2, or three or more), and whether they received a flu shot since April 2020 (included as a proxy for health behaviors). We also record whether the respondent took the survey in English or Spanish. For all of the CTIS survey-measured covariates, we recode missingness as a separate category to account for item non-response. Lastly, we augment the dataset with indicators of each respondent's state and week of response. The state indicators can control for fixed state-level policies during the study period that may be associated with both vaccine access and depression, including eligibility requirements, while the week indicators control for time trends associated with both treatment and outcomes. 

We also join several county-level variables using each respondents' FIPS code. Using data from the 2019 American Community Survey, we add the county-level population, population density, percent living in poverty, percent uninsured, and Gini index. Using the National Center for Health Statistics Urban-Rural classification scheme, we classify the type of county each respondent lives in (Large Central Metro, Large Fringe Metro, Medium Metro, Small Metro, Micropolitan, and Non-core regions). These variables help control for differential COVID-19 vaccine access. We also include the county the total number of deaths per capita from COVID-19 in each county the final two weeks in January, and divide this variable into sextiles (data obtained from Johns Hopkins University). Finally, we include Biden's county vote share at the county-level (Biden minus Trump vote totals divided by total votes). We denote all demographic and county-level covariates as $X$.

Our initial dataset from February includes 1,232,398 responses. We exclude 2,530 respondents whose Facebook user language was not English or Spanish, 3,690 responses from Alaska -- Alaska did not report 2020 voting results at the county-level -- and 39,526 responses that did not provide FIPS code information -- preventing us from merging county-level covariates. We then subset our data to only include respondents who say that they ``probably'' or ``definitely'' would get the COVID-19 vaccine if offered to them today or who reported already being vaccinated ($V = 1$). This left 881,315 responses, excluding 229,894 vaccine-hesitant respondents and 75,443 respondents who did not indicate whether or not they intended to receive a COVID-19 vaccine. Finally, 758,594 respondents on our primary sample, or 86.1 percent, provided answers to all questions pertaining to vaccination and all three outcomes ($R = 1$). This defines our analytic sample. Additional details are available in Appendix~\ref{app:samplechars}.

\subsection{Summary statistics}\label{sec:sumstats}

\Cref{fig:vaccinated} displays the weekly prevalence of depression among all CTIS respondents. The plot displays a spike in May 2020 followed the police murder of George Floyd and the associated protests, another spike in early November 2020 during the presidential election, and generally decreasing depression afterwards. One exception occurred during the week of January 6th, when Donald Trump instigated an insurrection on the U.S. Capitol following his loss in the 2020 election.

\begin{figure}[H]
\begin{center}
    \includegraphics[scale = 0.5]{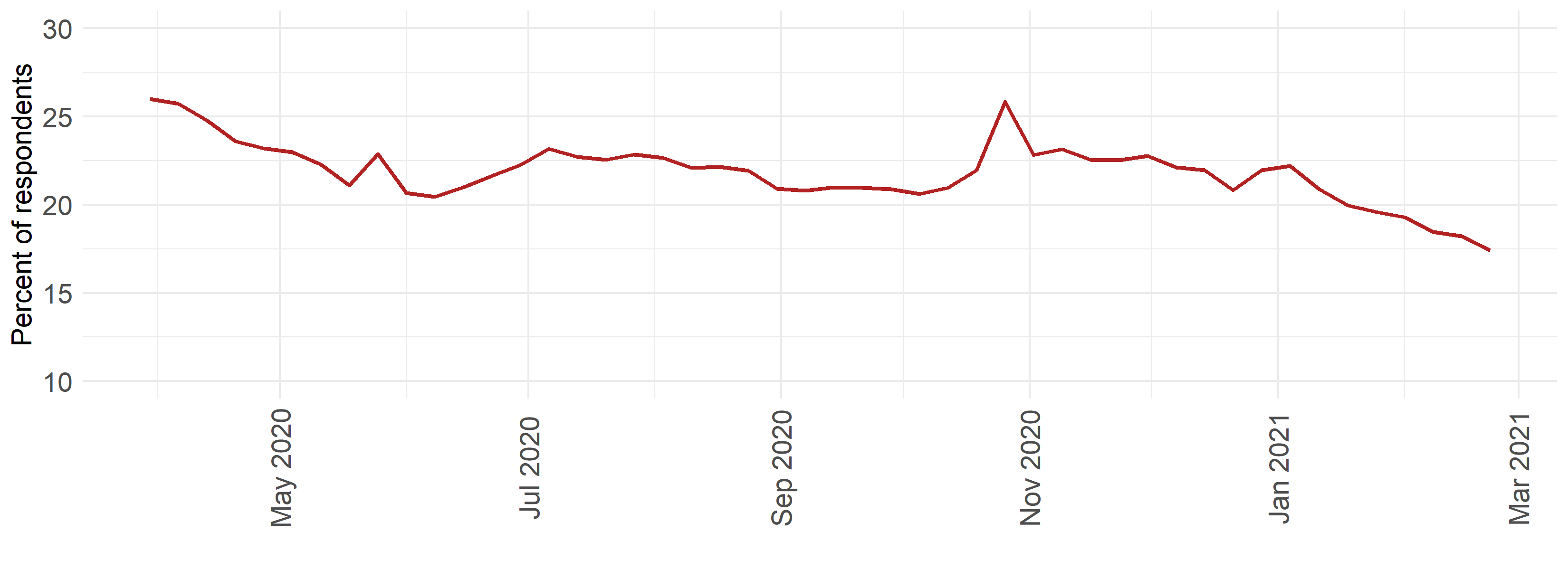}
    \caption{Trends in depression and anxiety, April 2020 - February 2021}
    \label{fig:vaccinated}
\end{center}
\end{figure}

The prevalence of depression varies substantially by age group. The left-hand panel of Figure~\ref{fig:alltrends} shows the weekly prevalence from October 2020 through February 2021, with rates increasing monotonically across seven age groups. This may depict a real negative correlation between depression and age; however, an alternative interpretation could be that younger individuals have a different vocabulary around mental health than older individuals and respond to these questions differently. While possible, Figure~\ref{fig:alltrends} displays similar patterns for isolation and worries about health. Perhaps paradoxically, the percentage of respondents who report being very worried about their health is lowest among the older respondents and generally highest among 18-34 year old respondents.\footnote{Similar patterns were observed in other surveys: for example, an \href{https://www.economist.com/graphic-detail/2020/09/01/why-young-americans-are-more-worried-than-older-ones-about-covid-19}{Economist/YouGov poll} showed similar age patterns with respect to worries about contracting COVID-19. \cite{clair2021effects} also found higher levels of social isolation among younger individuals.}

\begin{figure}[H]
\begin{center}
    \includegraphics[scale=0.5]{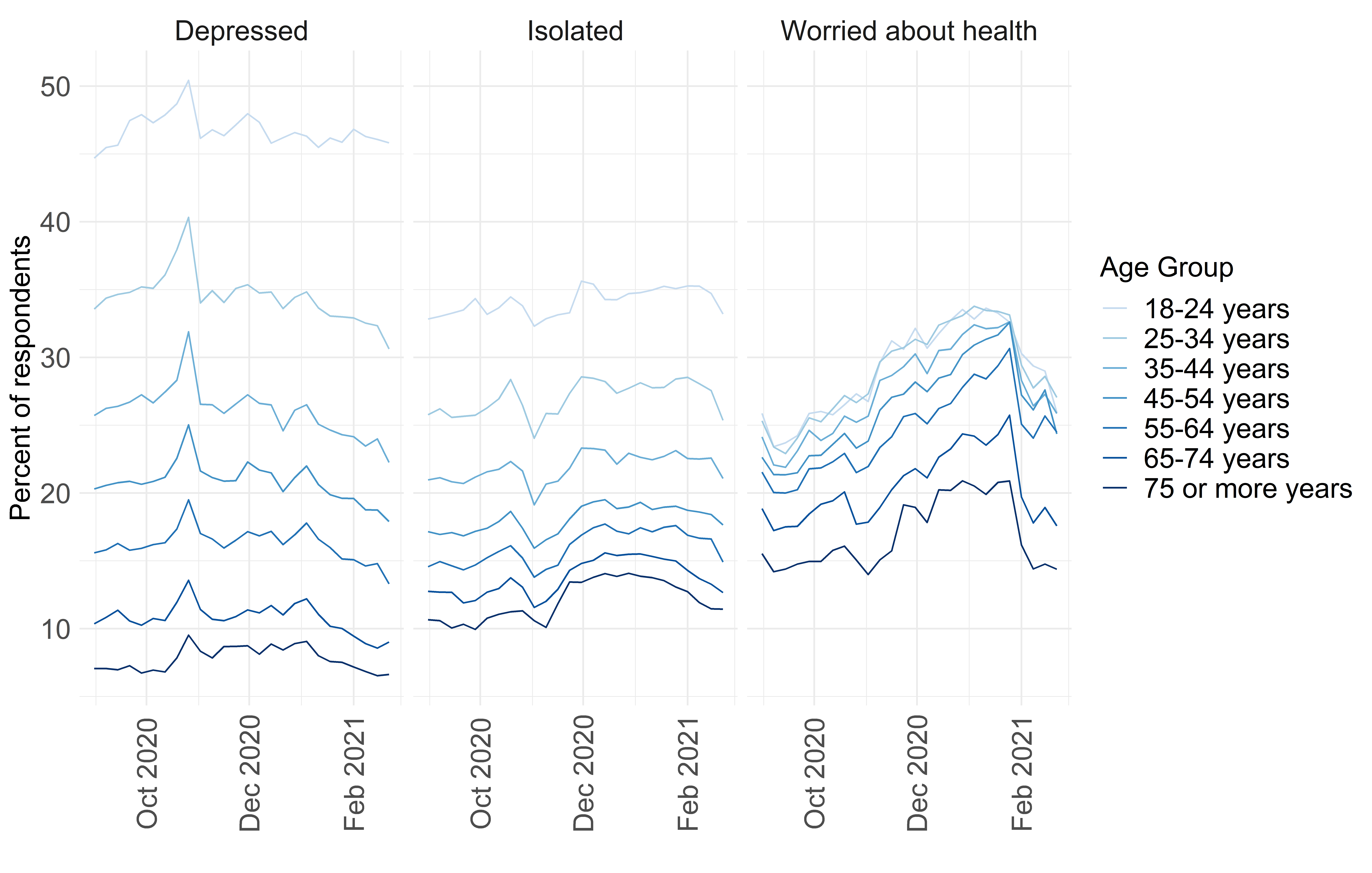}
    \caption{Mental health trends, October 2020 - February 2021}
    \label{fig:alltrends}
\end{center}
\end{figure}

Each outcome declined in prevalence throughout February 2021, with particularly large declines in worries about health. This may reflect seasonal trends and the fact that the number of new COVID-19 infections was falling substantially.\footnote{See, e.g., https://www.nytimes.com/interactive/2021/us/covid-cases.html} However, this timeframe also coincided with substantial numbers of people receiving their COVID-19 vaccinations. Table~\ref{tab:schars} below illustrates number of individuals and the percentage vaccinated within each age category among all CTIS respondents in February 2021.\footnote{As has been observed elsewhere, the CTIS over-represents vaccinated individuals compared to the U.S. population (\cite{salomon2021us}). This does not necessarily threaten the internal validity of our analysis (see Section~\ref{ssec:identification}), though may threaten our ability to generalize our results.} 

\input{tables/sample-chars-paper.tex}

Our study examines whether COVID-19 vaccinations in part account for these improving outcomes during February 2021. We begin by examining the differences in the prevalence of each outcome in vaccinated versus unvaccinated respondents that month, taking these averages within separate subgroups. Figure~\ref{fig:htesunadjusted} displays these results. 

\begin{figure}[H]
\begin{center}
    \includegraphics[scale=0.5]{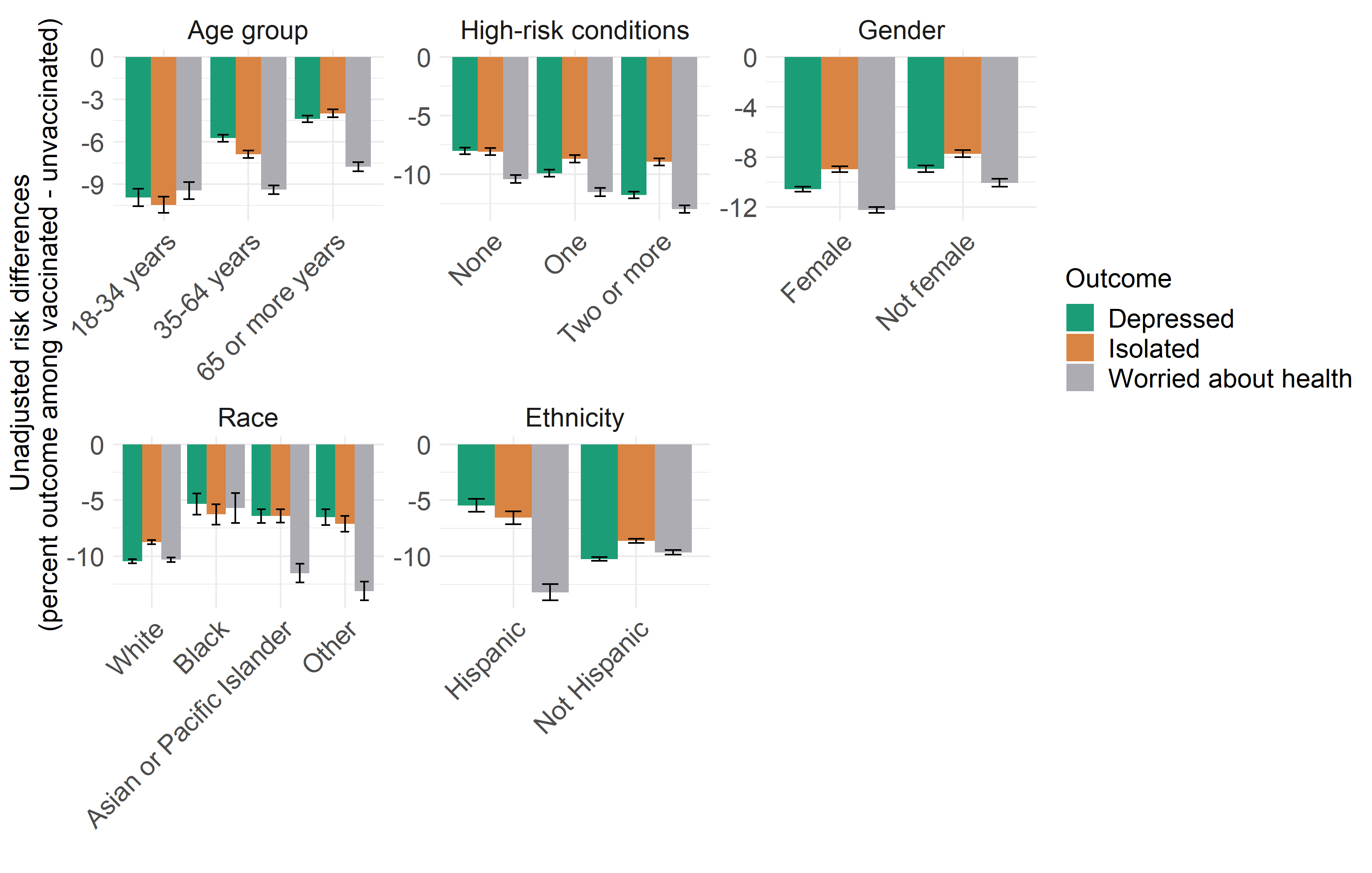}
    \caption{Associations between mental health outcomes and vaccination, February 2021}
    \label{fig:htesunadjusted}
\end{center}
\end{figure}

Given the high base-rates of depression and isolation among the youngest respondents, perhaps not surprisingly we observe the largest declines in the prevalence of each outcome among younger respondents relative to older ones. We also see larger declines in rates of depression and isolation among White relative to Black respondents, and non-Hispanic relative to Hispanic respondents. When examining worries about health by each demographic category, the declines appear lower among the the oldest age groups, higher among people with three or more high-risk conditions, and higher among Hispanic versus non-Hispanic respondents. Broadly speaking, patterns in depression and isolation appear more similar to each other than to patterns in worries about health. 

Of course, COVID-19 vaccinations were not randomly assigned, and we cannot make causal claims from these associations. For example, individuals who were most at-risk and who had the means were most likely to receive the vaccination. If these same individuals were also more likely to be depressed or anxious without the vaccine, then these associations would overstate the absolute magnitude of any treatment effects. We therefore adjust these estimates to control for the demographic, household, employment, health, and county-characteristics described previously, as well as state and week indicators, and outline the assumptions necessary for these adjusted associations to reflect causal quantities. Additional summary statistics are available in Appendix~\ref{app:samplechars}.

\section{Methods}\label{sec:methods}

\subsection{Preliminaries}

Recall that $A$ indicates prior receipt of COVID-19 vaccination, $M_1$ isolation, $M_2$ worries about health, and $Y$ depression. $X$ indicates the covariate vector described in Section~\ref{sec:data}, and $V$ indicates vaccine-acceptance. We define $S$ as an indicator of responding to the survey, and $(R_Y, R_A, R_{M_1}, R_{M_2})$ as indicators of observing $(Y, A, M_1, M_2)$, respectively. We observe $n$ independent samples from a pool of active Facebook users residing in the United States of the random variable $O = (YR_Y, AR_A, M_1R_{M_1}, M_2R_{M_2}, R_Y, R_A, R_{M_1}, R_{M_2}, X, V, S = 1)$. We define two important functions of this distribution. First, we let $R = R_YR_{M_1}R_{M_2}R_A$. This is an indicator of whether we observe all outcomes, mediators, and treatment assignment information (hereafter ``complete-cases'') for a respondent. Second, define $Z = RSV$. In other words, $Z$ is an indicator of inclusion in our analytic sample.

\subsection{Estimands}\label{ssec:estimand}

We next define causal estimands with respect to our outcomes and mediation analyses using potential outcomes notation. Before defining these quantities, we make the following assumptions. First, we invoke the SUTVA. This implies that there is only one version of treatment and rules out interference between individuals: specifically, each individual's potential outcomes only depend on their own vaccination status. Second, we assume no carryover effects. This implies that an individual's potential outcome having been vaccinated one week prior to responding to the CTIS is identical to their potential outcome were they vaccinated one month prior. Finally, we assume no anticipatory effects, implying that future vaccination status does not affect one's potential outcomes at the time an individual responded to the CTIS.

Under these assumptions, we define the potential outcomes for each individual as $Y_i^{a_i}$: that is, an individual's outcome when setting their vaccination status $A_i$ to level $a_i$. This quantity does not depend on anyone else's vaccination status (SUTVA), their past vaccination status (no carryover effects), nor their future vaccination status (no anticipatory effects). Finally, we define the function $\mu_a(X_i) = \mathbb{E}[Y_i^a \mid X_i, Z_i = 1]$, representing the expected potential outcome under treatment $A_i = a$ for an individual given their covariates, vaccine-acceptance, and being in our analytic sample.

\subsubsection{Outcomes}\label{sssec:estimand:outcomes}

We use $Y$ to denote any outcome considered in this section to ease notation, recalling that $M_1$ and $M_2$ are dichotomized for this analysis. Assume that there are $N$ adults living in the United States. A natural estimand is the average treatment effect:

\begin{align}\label{eqn:estimand:ATE}
    \psi^{ate} = N^{-1}\sum_{i=1}^N[Y_i^1 - Y_i^0]
\end{align}

This reflects the mean difference in the prevalence of each outcome if everyone were vaccinated contrasted against the case where no one was vaccinated, averaging over the empirical covariate distribution of adults residing in the United States.

This causal quantity is difficult to target for at least four reasons. First, the U.S. adult population consists of two types of individuals: the vaccine-hesitant and the vaccine-accepting. The vaccine-hesitant constitute a group of ``never-takers,'' a group would never choose be vaccinated (at least during this timeframe). We cannot estimate the treatment effect among this subgroup without strong modeling assumptions. Second, the sample is taken over the population of Facebook users who reside in the United States. This population differs from the U.S. adult population, the arguably more interesting population. Third, the average over the population of Facebook users is also difficult to credibly estimate, as survey response rates in February 2021 were approximately one percent. Unless we assume that survey non-response was completely random, we cannot estimate this quantity without further assuming, for example, that the selection process was based on pre-treatment characteristics and then modeling this process. Finally, among the survey respondents, we only observe the vaccination status and outcomes among those who respond to these items. Similar to survey non-response, generalizing the causal effects to include these individuals requires additional assumptions and modeling.

We make several analytic choices to address these challenges. First, we condition our analysis among the vaccine-accepting population. Second, our primary estimands are conditional on our analytic sample. Specifically, we define our target estimand as:\footnote{This expression slightly simplifies of our true targeted estimand. As we explain in Section~\ref{sec:data}, we also remove those with missing FIPS code information and those who live in Alaska.}

\begin{align}
\label{eqn:estimand:wate}\psi &= \frac{1}{\sum_{i = 1}^NZ_i}\sum_{i: Z_i = 1} \mathbb{E}[Y_i^1 - Y_i^0 \mid X_i, Z_i = 1] \\
\label{eqn:estimand:wate1}&= \mathbb{P}_n[\mu_1(X_i) - \mu_0(X_i)]
\end{align}

This estimand is the average expected difference in the prevalence of the outcomes given the empirical covariate distribution of the vaccine-accepting complete-cases in our analytic sample. We abbreviate this empirical average using $\mathbb{P}_n$ moving forward. We also examine all effects on both the risk-ratio scale ($\mathbb{P}_n[\mu_1(X)]/\mathbb{P}_n[\mu_0(X)]$) and averaged within distinct subgroups defined by the covariates. In Appendix~\ref{app:results-incremental} we consider estimands that do not deterministically set vaccination status, but rather shift each individual's probability of vaccination (\cite{kennedy2019nonparametric}).

\subsubsection{Interventional effects}

We next decompose the effect of vaccinations on depression into four separate channels: a direct effect, an indirect effect through social isolation, an indirect effect through worries about health, and an indirect effect reflecting the dependence of social isolation and worries about health on each other, following \cite{vansteelandt2017interventional}:

\begin{align*}
    \nonumber\psi_{IDE, 1} &= \mathbb{P}_n\left(\sum_{m_1}\sum_{m_2}(\mathbb{E}[Y^{1m_1m_2} \mid X, Z = 1] - \mathbb{E}[Y^{0m_1m_2} \mid X, Z = 1]) \right.\\
    \nonumber &\left.\times P(M_1^1 = m_1, M_2^1 = m_2 \mid X, Z = 1)\right) \\
    \nonumber\psi_{M_1, 0} &= \mathbb{P}_n\left(\sum_{m_1}\sum_{m_2}\mathbb{E}[Y^{0m_1m_2} \mid X, Z = 1](P(M_1^1 = m_1 \mid X, Z = 1) \right.\\
    \nonumber&\left.- P(M_1^0 = m_1 \mid X, Z = 1)]P(M_2^0 = m_2 \mid X, Z = 1)\right) \\
    \nonumber\psi_{M_2, 0} &= \mathbb{P}_n\left(\sum_{m_1}\sum_{m_2}\mathbb{E}[Y^{0m_1m_2} \mid X, Z = 1](P(M_2^1 = m_1 \mid X, Z = 1) \right.\\
    \nonumber&\left.- P(M_2^0 = m_2 \mid X, Z = 1))P(M_1^1 = m_2 \mid X, Z = 1)\right) \\
    \psi_{Cov, 0} &= \psi - \psi_{IDE, 1} - \psi_{M_1, 0} - \psi_{M_2, 0}
\end{align*}

As explained by \cite{vansteelandt2017interventional}, the interventional direct effect $(\psi_{IDE, 1})$ represents the contrast in mean potential outcomes when everyone is vaccinated versus no one is vaccinated, while drawing the mediators for each individual randomly from the joint distribution of the mediators conditional on $A = 0$ given their covariates $X$. By contrast the interventional indirect effect of $M_1$ $(\psi_{M_1, 0})$ represents the average contrast between subject-specific draws from the counterfactual distribution of feelings of social isolation at $A = 0$ versus $A = 1$ given their covariates, while drawing worries about health from the counterfactual distribution under $A = 0$, given their covariates and fixing vaccination status at level $A = 1$. The interventional indirect effect through $M_2$ $\left(\psi_{M_2, 0}\right)$ is defined analogously. The covariant effect $(\psi_{Cov, 0})$ is the difference between the total effect and all three of these effects. This effect captures the effect of the dependence of the mediators on each other, and would be equal to zero, for example, if these mediators were conditionally independent. Finally, a similar decomposition holds switching 0 for 1 in each estimand and reversing the signs. We consider both decompositions in our analysis. 

\cite{vansteelandt2017interventional} describes $\psi_{M_1, 0}$ as capturing the pathways $A \to M_1 \to Y$ and $A \to M_2 \to M_1 \to Y$, with $\psi_{M_2, 0}$ defined analogously. This implies that $\psi_{M_1, 0}$ captures the effect of vaccinations on depression via directly changing social isolation which then affects depression; and via directly changing worries about health, which then affects social isolation which affects depression. By contrast, $\psi_{M_2, 0}$ would capture the effect of vaccinations on depression via direct changes in worries about health which affects depression, and via changes in social isolation which affect worries about health which affect depression. We refer to \cite{vansteelandt2017interventional} for more details on this decomposition and the interpretation of these estimands.\footnote{We caution that recent results from \cite{miles2022causal} casts doubt on mechanistic interpretations of interventional effects.}

\subsubsection{Other estimands}

In Appendix~\ref{app:estimands} we describe three other estimands: the effect of the observed distribution of vaccination, the effect of incremental interventions on the propensity score, and the effect of the pandemic on mental health. The first two estimands may correspond to more realistic interventions, requiring weaker identifying assumptions than those noted below. The third estimand may be of interest if we wish to use our effect estimates to learn about the impact of the COVID-19 pandemic on mental health. We show in Appendix~\ref{app:identification} that under some assumptions, we may interpret our average effect estimates on either the risk-differences or risk-ratio scale as a lower bound on this quantity using our existing estimates of the effect of COVID-19 vaccinations on mental health.

\subsection{Identification}\label{ssec:identification}

We assume the following causal structure holds among vaccine-accepting individuals.

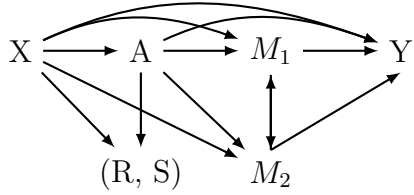
\begin{figure}[H]
\begin{center}
\begin{tikzpicture}
[
array/.style={rectangle split, 
	rectangle split parts = 3, 
	rectangle split horizontal, 
    minimum height = 2em
    }
]
 \node (3) {X};
 \node [right =of 3] (4) {A};
 \node [below =of 4] (8) {(R, S)};
 \node [right =of 4] (5) {$M_1$};
 \node [below =of 5] (6) {$M_2$};
 \node [right =of 5] (7) {Y};
 \draw[Arrow] (3.east) -- (4.west);
 \draw[Arrow] (4.east) -- (5.west);
 \draw[Arrow] (6.north) -- (7.south);
 \draw[Arrow] (3) to [out = 25, in = 160] (7);
 \draw[Arrow] (3) to [out = 25, in = 160] (5);
 \draw[Arrow] (4) to (8);
 \draw[Arrow] (3) to (8);
 \draw[Arrow] (5) to (6);
 \draw[Arrow] (6) to (5);
 \draw[Arrow] (5) to (7);
 \draw[Arrow] (3) to (6);
 \draw[Arrow]  (4) to [out = 25, in = 160] (7);
 \draw[Arrow] (4) to (6);
\end{tikzpicture}
\caption{Assumed causal structure}
\label{fig:dag}
\end{center}
\end{figure}

Figure~\ref{fig:dag} clarifies assumptions about causal ordering of the variables. Two particular features are worth noting: first, we assume no reverse causation between depression and the mediators. Second, we allow the mediators to depend on each other and require no assumptions about their causal ordering. Figure~\ref{fig:dag} also encodes the assumed relationships between the variables necessary to identify our causal estimands in terms of our observed data. We instead motivate these conditions using potential outcomes, and assume that for all values of $(a, m_1, m_2)$:\footnote{The conditions in the DAG are necessary, but not sufficient for our identification result (see Appendix~\ref{app:identification}). For example, the DAG implies that $(R, S) \perp (Y, M_1, M_2) \mid (X, A, V = 1)$, while we instead invoke Assumption~\ref{assumption:rss}. Identification under alternative conditions is possible.} 

\begin{assumption}[No unmeasured confounding]
\begin{align}
\label{eqn:unconfoundedness-ya}&(Y^{am_1m_2}, M_1^a, M_2^a) \perp A \mid (X, V = 1)
\end{align}
\end{assumption}

\begin{assumption}[Y-M ignorability]
\begin{align}
\label{eqn:unconfoundedness-ym}&Y^{am_1m_2} \perp (M_1, M_2) \mid (A = a, X, V = 1) 
\end{align}
\end{assumption}

Specifically, no unmeasured confounding states that among the vaccine-accepting population, vaccinations are independent of the potential mediators and outcomes conditional on covariates. Y-M ignorability states that the mediators are jointly independent of potential outcomes conditional on covariates and vaccination status. Finally, we we assume random sample selection: this states that survey response and complete-case indicators are random with respect to the potential outcomes given the covariates and vaccination status.

\begin{assumption}[Random sample selection]\label{assumption:rss}
\begin{align}
\label{eqn:amarx}&(S, R) \perp (Y^{am_1m_2}, M_1^a, M_2^a) \mid (X, A, V = 1) 
\end{align}
\end{assumption}

We additionally assume that the survey responses fully capture all population relationships (``no measurement error''), that all vaccine-accepting individuals in our sample have a probability of vaccination bounded away from zero and one, and that the joint mediator probabilities are all bounded away from zero (``positivity''). We formalize all identifying assumptions in Appendix~\ref{app:identification} and show that they imply that our causal parameters are identified in terms of the observed data. For example, $\mathbb{P}_n[\mu_a(X_i)] = \mathbb{P}_n[\mathbb{E}[Y_i \mid X_i, A_i = a, Z_i = 1]]$. We largely defer discussing possible violations of these assumptions to Section~\ref{sec:limitations}. 

\subsection{Analytic timeframe}\label{sec:timeframe}

We choose to limit our causal analysis to February 2021 to strengthen the credibility of our identifying assumptions. The challenge is that we only observe cross-sectional data, and the causal assumptions required for our analysis simplify a more complicated longitudinal process. Our identifying assumptions arguably become less tenable over time. As one example, consider a key unobserved variable: time since vaccination. Assuming no carryover effects implies that this variable is not required either to define or identify our causal estimands. Absent this assumption, time since vaccination would act as an unmeasured confounder inducing a bias for our outcomes analysis that plausibly increases with time from the first administered COVID-19 vaccines (we discuss this point further in Appendix~\ref{app:identification}). Our mediation analysis similarly becomes less tenable over time: for example, vaccinations may improve economic conditions over time, which in turn may act as a post-treatment confounder with respect to feelings of social isolation or worries about health. We therefore limit our analysis to February, relatively soon after the start of COVID-19 vaccination administrations. 

\subsection{Estimation and inference}

We use influence-function based estimators throughout (\cite{benkeser2021nonparametric}, \cite{kennedy2019nonparametric}), requiring estimaion of several nuisance functions. For example, for the interventional effects, we estimate the outcome model, the propensity score model, and the joint mediator probabilities. For our primary results we use logistic regression and multinomial logistic regression. We allow for separate models for each treatment level for our outcomes analysis, and each treatment-mediator level for our mediation analysis. Similarly, to estimate mediator probabilities we allow separate models within each treatment level. Our propensity score model includes only main effects for each covariate. These estimators are multiply robust in the sense that the consistency of the estimates do not require that all models are correctly specified, but rather that some subset are (\cite{benkeser2021nonparametric}). We generate 95\% confidence intervals and conduct hypothesis tests using the empirical variance of our estimated influence function values and standard normal quantiles. These intervals and tests are valid if all of our models are correctly specified. To alleviate concerns about model misspecification, we also estimate all nuisance functions using XGBoost and sample-splitting. Assuming the consistency of the influence-function estimates and that that the nuisance functions are estimated at $n^{-1/4}$ rates in the $L_2$ norm, these estimates are  asymptotically normal and root-n consistent with a simple expression for the variance (\cite{benkeser2021nonparametric}). Appendix~\ref{app:estimation} provides further details.

We also use a data-driven approach to identify subgroups with large or small treatment effects. Specifically, we estimate the influence function values using auxiliary data from March 1-14, 2021 using XGBoost. We regress these pseudo-outcomes on a subset of the demographic covariates using depth-four regression trees. After generating the trees, we use a ``human-in-the-loop'' approach to determine whether the candidate subgroups are interpretable and substantively interesting. We then conduct estimation and inference using on these candidate hypotheses using the February data.\footnote{This is similar to the ``honest'' tree-based estimation approach proposed by \cite{athey2016recursive}, and is an example of a DR-learner, discussed in \cite{kennedy2020towards}} All t-tests referenced below set $\alpha = 0.05$ and only control the Type I error rate for each test marginally.

\section{Results}\label{sec:results}

\subsection{Outcome analysis}\label{ssec:outcomes}

Table~\ref{tab:cc-glm} displays our primary results. We estimate comparable effect sizes for each variable decreases of 3.7 (4.0, 3.5), 3.3 (3.5, 3.0), and 4.3 (4.6, 4.0) in the prevalence of depression, isolation, and worries about health, respectively. These effects translate to 19 percent, 16 percent, and 15 percent decreases in the prevalence of depression, social isolation and worries about health, respectively.\footnote{As noted in Appendix~\ref{app:identification}, we may interpret these results as bounding the effect of the pandemic on depression. For example, under some assumptions we can say that the pandemic increased depression by at least 3.7 percentage points on average, and that the percentage of depressed respondents was at least 24 percent higher on average (100*($\frac{1}{1 - .19} - 1$)) than it would have been absent the pandemic among respondents in our analytic sample.} Table~\ref{tab:cc-xgb} displays our results when using XGBoost to estimate the nuisance functions. These estimates are quite similar to the GLM estimates. 

  \begingroup%
  \renewcommand\normalsize{\footnotesize}% Specify your font modification
  \input{tables/risk-difference-table-cc}%
  \endgroup%

  \begingroup%
  \renewcommand\normalsize{\footnotesize}% Specify your font modification
  \input{tables/risk-difference-table-xgb-cc}%
  \endgroup%

\subsection{Subgroup heterogeneity: pre-specified groups}

Figure~\ref{fig:alloutcomes} displays the estimated effects across pre-specified subgroups. Consistent with the unadjusted estimates, among those who reported their age, vaccinations have the largest magnitude effect sizes among respondents between 18-34 years old for all outcomes. For example, we estimate that vaccinations reduced depression by -5.9 (-6.8, -5.0) percentage points among respondents aged 18-34, while only -2.5 (-2.8, -2.2) percentage points among respondents aged 65 and older. We also find larger magnitude point estimates for females relative to males and non-binary respondents for depression. Specifically, we estimate the effect of vaccinations on depression as being -4.0 (-4.3, -3.7) percentage points among female identifying respondents and -3.1 (-3.5, -2.7) among male and non-binary respondents. Lastly, we see larger effect estimates among respondents for worries about health among respondents with two or more high-risk health conditions (-4.8 (-5.3, -4.3)) relative to those with none (-3.7 (-4.2, -3.2)).\footnote{Our estimates are somewhat large in absolute magnitude among respondents who do not provide demographic information. For example, among those who did not respond about their age, we estimate a -5.8 (-7.2, -4.3) percentage point reduction in depression and a -7.3 (-8.9, -5.6) reduction in worries about health. Because non-response is highly correlated across questions, we observe similar magnitude estimates for other non-response categories, though the point estimates for isolation tend to be comparable to other groups.}

\begin{figure}[H]
\begin{center}
    \includegraphics[scale=0.5]{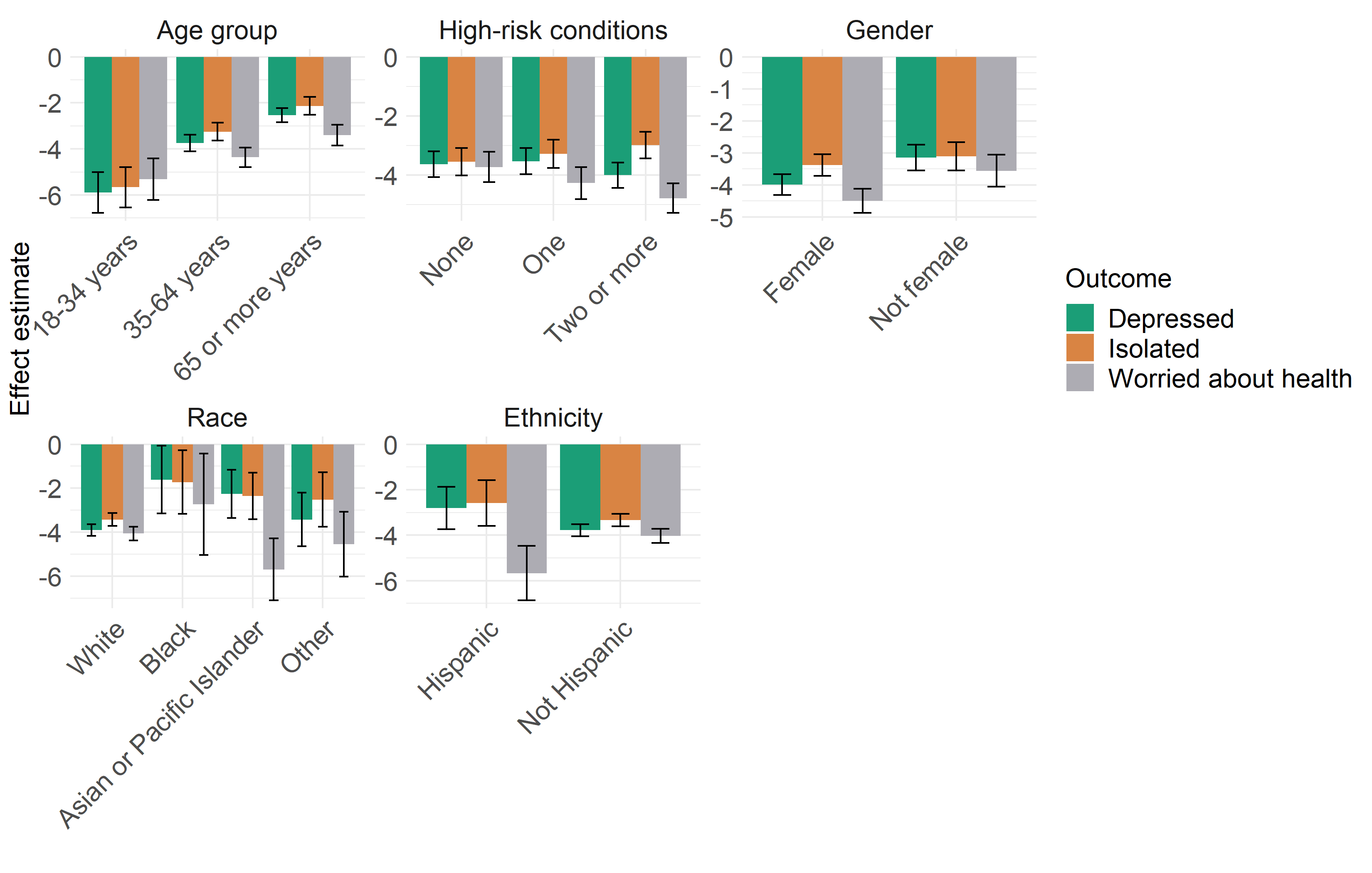}
    \caption{Outcomes analysis, February 2021}
    \label{fig:alloutcomes}
\end{center}
\end{figure}

The XGBoost estimates are very similar. These results and results on the risk-ratio scale are available in Appendix~\ref{app:results-outcomes}. 

\subsubsection{Data-driven subgroup heterogeneity}

We identify three possibly heterogeneous subgroups using the March data with respect to depression: respondents aged 18-24, non-Hispanic respondents aged 25 and older, and Hispanic respondents aged 25 and older. These categories exclude non-respondents who to age, or to ethnicity among those 25 or older. Figure~\ref{fig:hteoutcomesl1} displays the effect sizes and confidence intervals estimated on the February data.\footnote{We present estimates using our XGBoost models: our GLM estimates do not include interactions between the covariates, and are therefore unlikely to correctly capture this heterogeneity.} Consistent with our hypotheses, we estimate effects of -12.2 (-13.7, -10.7) among respondents aged 18-24, -3.6 (-3.8, -3.3) among non-Hispanic respondents  aged 25 or older, and -1.9 (-2.6, -1.2) among Hispanic respondents aged 25 or older. The youngest respondents also appear to have larger magnitude treatment effects for isolation and worries about health.\footnote{We display significance levels associated with the pairwise t-tests in Appendix~\ref{app:results-outcomes}.}

\begin{figure}[H]
\begin{center}
    \includegraphics[scale=0.5]{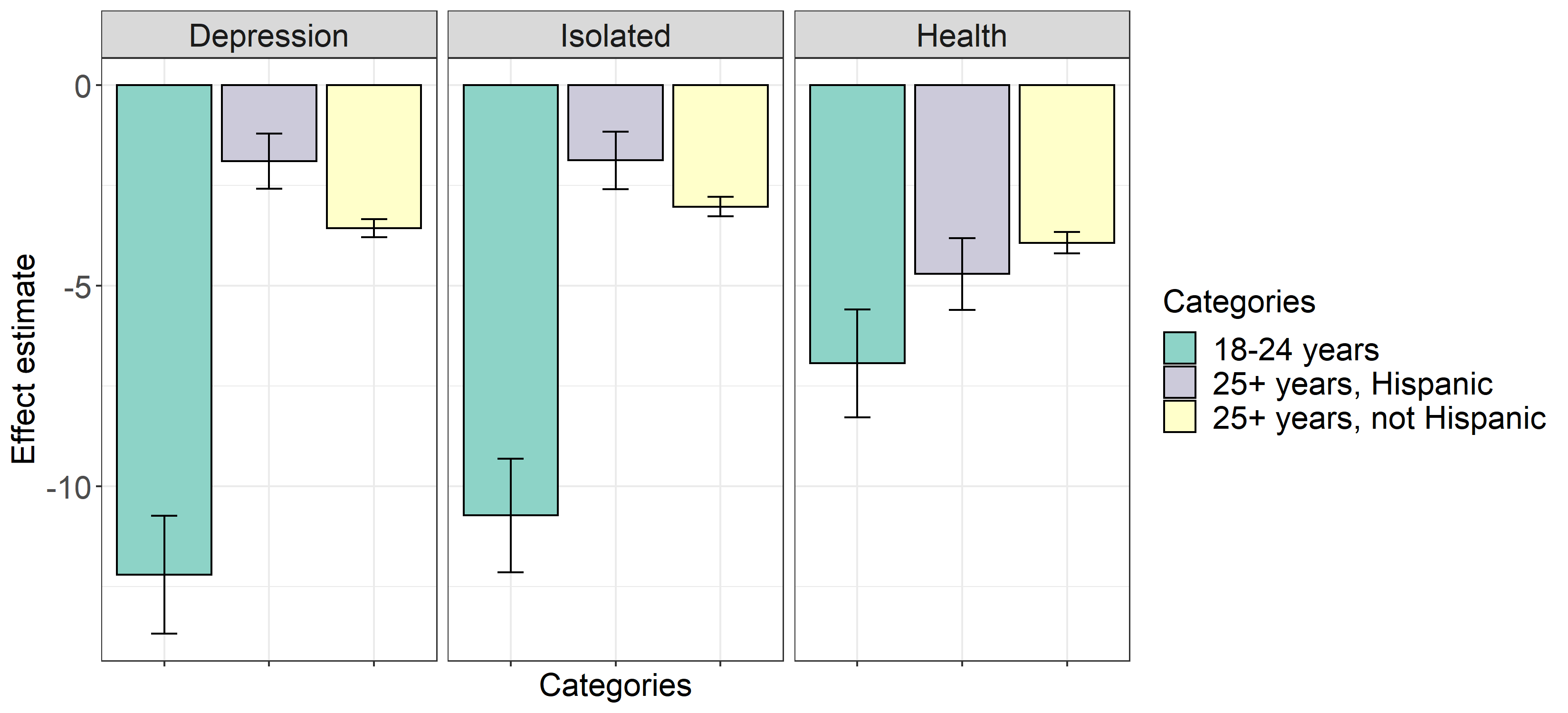}
    \caption{Data-driven subgroup estimates}
    \label{fig:hteoutcomesl1}
\end{center}
\end{figure}

We examine additional subgroups within these three primary groups. Among respondents aged 18-24, we estimate larger absolute magnitude effect sizes among those who live with an elderly individual (-21.6 (-26.5, -16.8); N = 2,542) versus those who do not (-10.9 (-12.7, -9.2); N = 20,347). We estimate effects for depression and isolation that are statistically indistinguishable from zero among Hispanic respondents whose primary Facebook user language is Spanish (N = 23,018). By contrast, all estimated effect sizes are statistically significant and negative for those whose Facebook user language is English (N = 57,236).\footnote{The differences in the estimated effects are both statistically significant at the $\alpha = 0.05$ level.} We test for differences with respect to gender and race among respondents aged 18-24, and with respect to age among Hispanic respondents aged 25 or older. We find statistically significant differences with respect to race among the youngest respondents for isolation and worries about health: the effect sizes appear smaller in absolute magnitude for non-White versus White respondents. However, our remaining hypotheses do not yield any statistically significant and substantively meaningful differences on the risk-differences scale.

Finally, we identify candidate heterogeneous subgroups using isolation and worries about health as the outcomes. Beyond the previously noted heterogeneity with respect to age, we are largely unable to confirm additional sources of heterogeneity. One exception is with respect to gender among non-White respondents aged 25 or older: we estimate that female identifying respondents had larger magnitude effects with respect to social isolation (-3.2 (-4.0, -2.5)) than non-female identifying respondents (-1.5 (-2.4, -0.7)). Additional results may be found in Appendix~\ref{app:results-outcomes}. 

Our results reveal that being aged 18-24 is a key source of effect heterogeneity, especially with respect to depression and social isolation. Moreover, living with an elderly individual appears to have a strong interaction effect among this same group. These patterns also hold on the risk-ratio scale (see Appendix~\ref{app:results-outcomes}). Additional research would be valuable to further investigate hypotheses with respect to other demographic characteristics, including interactions between age, race, gender, and ethnicity, though our results suggest that age would likely remain the factor most strongly associated with effect heterogeneity.

\subsection{Mediation analysis}\label{ssec:mediation}

Figure~\ref{fig:mediationmain} displays results from our mediation analysis. The left-hand panels displays results on the risk-differences scale while the right hand displays each estimate as a proportion of the total effect. We estimate that shifting the distribution of social isolation from what it would have been if everyone were unvaccinated to a world where everyone were vaccinated would reduce depression by 1.5 (1.6, 1.4) percentage points, while shifting the distribution of worries about health similarly reduces depression by approximately 0.3 (0.3, 0.3) percentage points. These effects would account for 39.1 percent and 8.3 percent of the total effect of vaccinations on depression, respectively. We estimate the covariant effect to be close to zero. Figure~\ref{fig:mediationmain} also displays the total indirect effect by both pathways, simply the sum of these three effects. We reject the null hypothesis that the mediated effects through isolation are equal to the mediated effects through health ($p < 0.001$). The effects are comparable when changing the reference category in the estimands (see Appendix~\ref{app:results-mediation}).

\begin{figure}[H]
\begin{center}
    \includegraphics[scale=0.4]{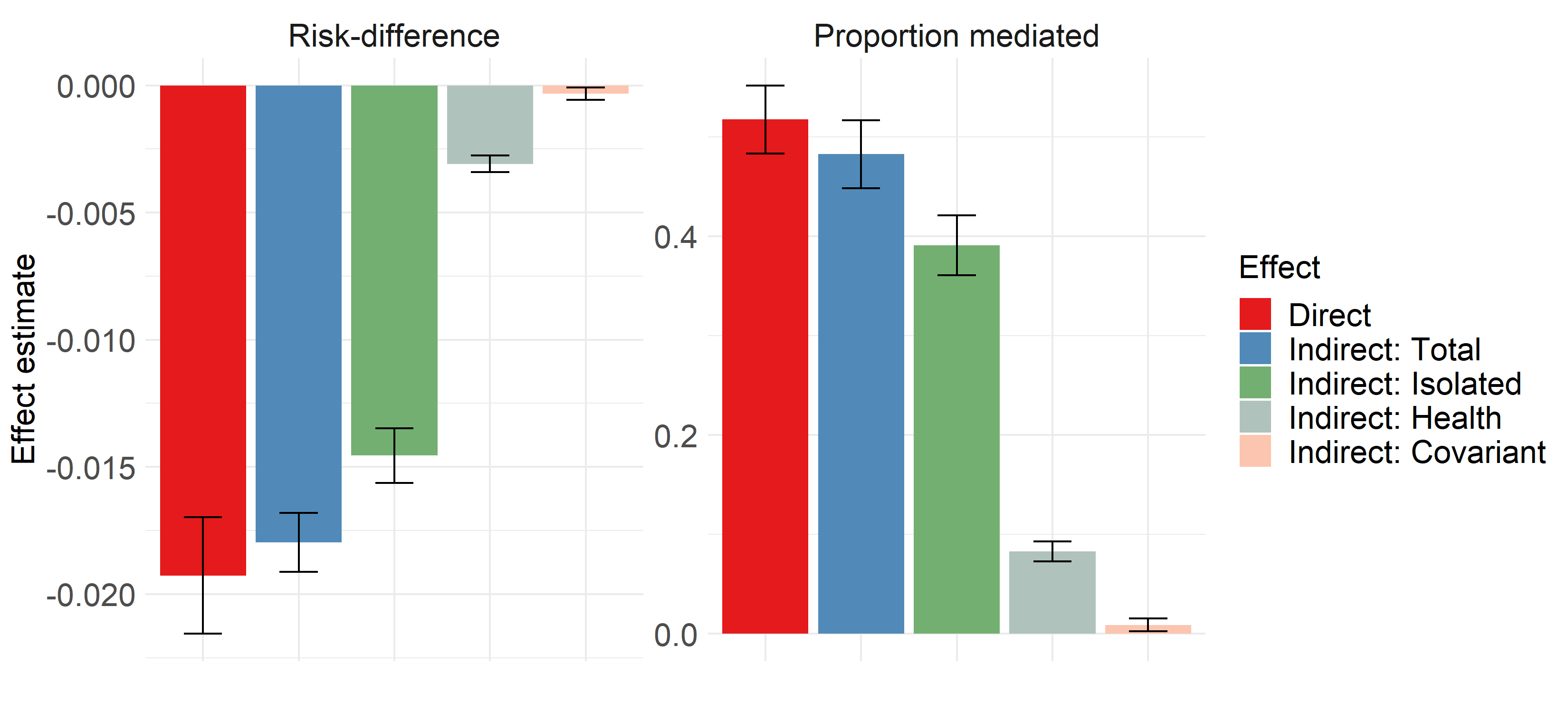}
    \caption{Mediation analysis: primary estimates}
    \label{fig:mediationmain}
\end{center}
\end{figure}

We also find evidence of a direct effect of vaccination on depression that accounts for over half of the total effect. We interpret this as reflecting a general sense of relief that vaccination provides, perhaps related to a belief in a return to normalcy. 

Figure~\ref{fig:htemedpropspaper} displays estimates of the proportion mediated across pre-specified subgroups.\footnote{We excluded Asian and Pacific Islander and Other racial categories as the estimates are very imprecise.} The patterns are consistent, with social isolation appearing to account for a larger proportion of the mediated effect relative to worries about health across all groups. Perhaps not surprisingly, worries about health appear to account for a slightly greater percentage of the total effect among people with two or more high-risk health conditions compared to those with none (though we do not formally test the statistical significance of these differences). The XGBoost results are similar. Additional results are available in Appendix~\ref{app:results-mediation}.

\begin{figure}[H]
\begin{center}
    \includegraphics[scale=0.5]{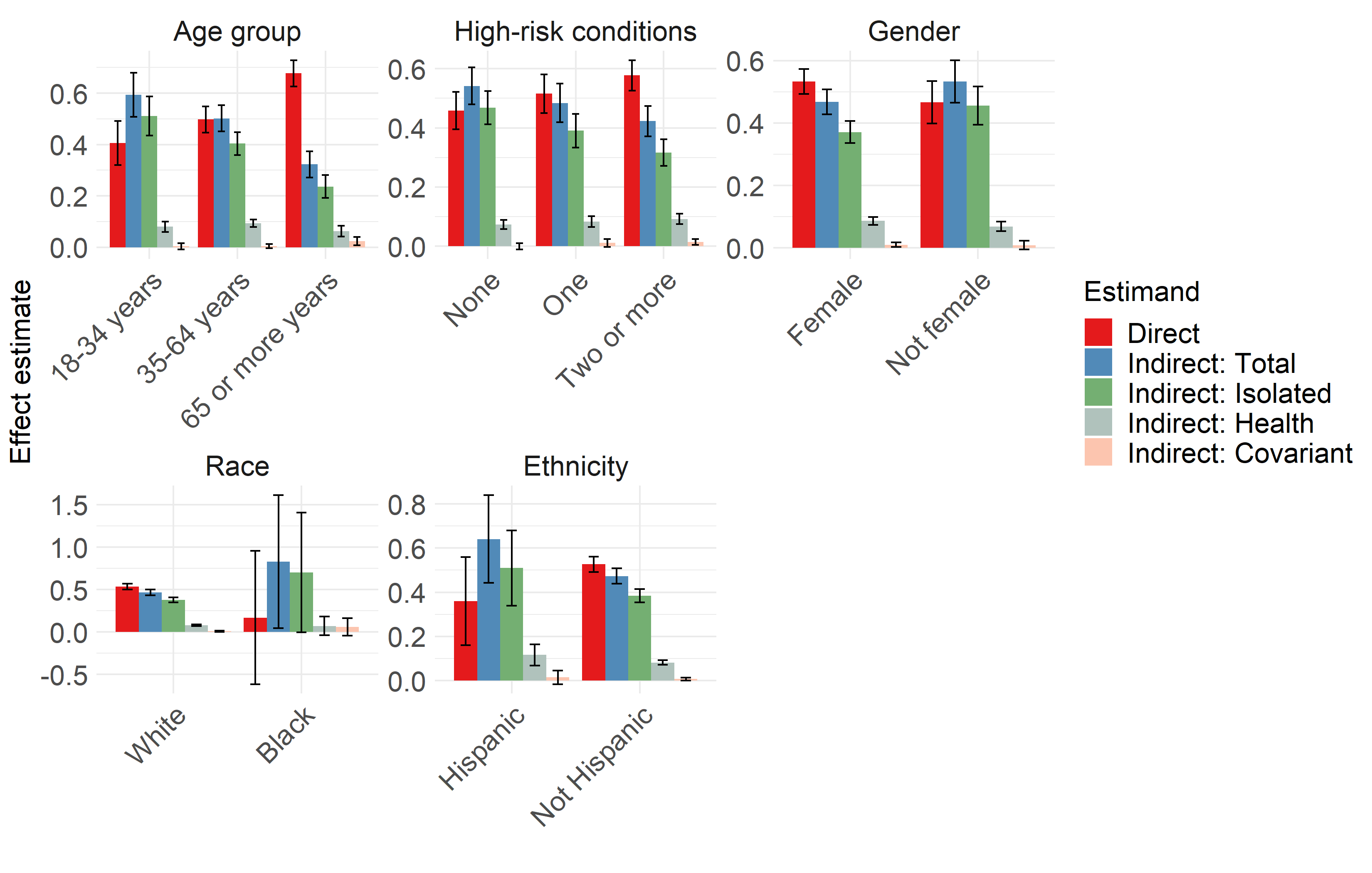}
    \caption{Mediation heterogeneity: proportion mediated by subgroup}
    \label{fig:htemedpropspaper}
\end{center}
\end{figure}

\section{Sensitivity, robustness, and limitations}\label{sec:limitations}

We consider the sensitivity of our analyses to our causal assumptions, their robustness to alternative analytic choices, and our ability to generalize our effect estimates. We highlight the limitations of our analysis throughout, and also provide heuristic arguments about how violations of our causal assumptions may bias our estimates. We conclude that our outcomes analysis is fairly robust, and that most possible violations likely bias our estimates towards rather away from zero. We also argue that generalizing the effect magnitudes of our estimates beyond our sample is not unreasonable. Finally, we discuss the limitations of our mediation analysis; however, we do not conduct formal sensitivity analyses. 

\subsection{Unobserved confounding and selection bias}\label{ssec:nuc}

\subsubsection{Sensitivity analysis}\label{ssec:sensitivity}

Our outcomes analysis assumes no unmeasured confounding and random sample selection. Together these assumptions imply that the potential outcomes are independent of treatment assignment given covariates and vaccine-acceptance on our sample. This assumption may be violated either assumption fails. Unmeasured confounders may threaten this assumption at the population level, or selection into our analytic sample may induce bias. We can relax these assumptions and instead estimate obtain bounds on our target sample estimand via a variation of a sensitivity analysis proposed by \cite{luedtke2015statistics}.  Let $\bar{\mu}_a(a', x) = \mathbb{E}[Y^a \mid X = x, A = a', Z = 1]$. We then assume there exists a $\tau$ satisfying \eqref{eqn:sensitivity1}.

\begin{align}\label{eqn:sensitivity1}
    \frac{1}{1-\tau} \ge \frac{\bar{\mu}_a(a', x)}{\bar{\mu}_a(a, x)}  \ge 1-\tau \qquad \forall (a, a', x)
\end{align}

For example, consider the left-hand side of \eqref{eqn:sensitivity1} and let $a = 1$ and $a' = 0$. The bound specifies that within each covariate stratum, the counterfactual probability of depression among unvaccinated survey respondents when vaccinated is not greater than than $1 / (1 - \tau)$ times the observed probability of depression among the vaccinated survey respondents. In Appendix~\ref{app:sensitivity}, we show that \eqref{eqn:sensitivity1} implies a bound on the treatment effect estimates using only functions of the observed data and $\tau$. We then use influence-function based estimators to estimate bounds across a range of values of $\tau$ until we find values that could explain away our estimated effects. 

We estimate that values of $\tau$ greater than or equal to 0.18, 0.15, and 0.14 would explain away our effect estimates for depression, isolation, and worries about health, respectively. In other words, our effect estimates would be statistically indistinguishable from zero if the ratios in \eqref{eqn:sensitivity1} fell outside of the (approximate) ranges of [0.83, 1.20], [0.86, 1.16], or [0.87, 1.15] for each respective outcome.\footnote{The results using the XGBoost models are virtually identical.} As a point of comparison, assume that no unmeasured confounding on the complete-cases held when controlling for the full covariate set, but that we controlled for none of them. This would yield values of $\tau$ of approximately 0.34, 0.27, and 0.26 for each respective outcome.\footnote{These values reflect cases where the observed covariates are assumed to deconfound the relationship on the observed sample. This does not account for scenarios where bias is induced by selection on the observed or potential outcomes.} Our average effect estimates are therefore robust to confounding variables that are approximately 53 to 55\% (e.g. 0.18/0.34 for depression) as associative with the outcomes as the entire observed covariate set, suggesting that our results are at least somewhat robust to these assumptions.

This analysis is quite conservative: for example, the bounds consider cases where the biases work in opposite directions for $\mu_1$ and $\mu_0$. However, the biases may work in the same direction. For example, if $\bar{\mu}_1(1, x) - \bar{\mu}_1(0, x) \approx \bar{\mu}_0(1, x) - \bar{\mu}_1(0, x)$ for all values of $x$, then then our estimates of the causal risk-differences would remain unbiased, despite biased estimates of the prevalence of each potential outcome in our sample.

\subsubsection{Heuristic reasoning: no unmeasured confounding}

We can also reason about the sign of the bias induced by violations of no unmeasured confounding at the population level. Two specific confounders we may worry about are vaccine-access and motivation to be vaccinated. We attempt to control for vaccine-access using a variety of county-level characteristics, the respondents' geographical location, employment status, and financial stress. However, these covariates may not sufficiently capture differential access across respondents. Similarly, vaccine-acceptance acts as a rough proxy for motivation. Yet even among vaccine-accepting respondents, some likely have higher motivation to be vaccinated than others. We reason that failing to control for these variables likely induces biases that work in opposite directions, resulting in an overall bias with no clear sign. 

For example, if within every covariate stratum, the probability of being vaccinated is lower for respondents with worse access, and the probability of being depressed, isolated, or worried about health is higher for these same respondents, our estimates will be negatively biased. On the other hand, if motivated respondents are more likely to be vaccinated, and more likely to be depressed, isolated, or worried about their health generally, our estimates will be positively biased. Our data is consistent with these hypotheses: when running our analyses without controlling for financial stress, employment status, or occupation, our point estimates move further away from zero (see also Section~\ref{ssec:badcontrols}). On the other hand, when including the vaccine-hesitant in our analytic sample, our estimates move closer to zero (see Table~\ref{tab:senstab1} in Appendix~\ref{app:robustness}). The total bias induced by these unmeasured confounders therefore has no clear sign.

\subsection{Generalizability}\label{ssec:generalizability}

Assuming that our estimates among this sample of CTIS respondents are valid, or more weakly that the magnitude of the effect estimates is correct, we may wish to infer similar magnitude effects among all U.S. adults. Skepticism is warranted for several reasons. First, CTIS respondents differ from the general US population across several characteristics. For example, \cite{daly2021public} found that the CTIS survey over represents White individuals, those with higher education, and those who have received COVID-19 vaccinations (see also \cite{bradley2021unrepresentative}). Second, our estimates may be subject to selection-bias: even if we knew the true sample estimand, if sample selection were based on the potential or realized outcomes, these effects may not generalize to a population of non-CTIS respondents with identical covariates. Finally, our estimates make no claims about the vaccine-hesitant, which is a substantial portion of the U.S. population (\cite{king2021covid}). 

While caution is certainly warranted, we nevertheless argue that these concerns should not entirely prohibit generalizations from our study. First, assume that our conditional effect estimates are valid. Despite our sample's non-representative covariate distribution, we do not find evidence of substantial sources of effect heterogeneity beyond age. Averages over other mixes of vaccine-accepting populations are therefore plausibly similar. Second, sample selection may in truth be a function of the observed or potential outcomes. However, this does not imply our estimates are invalid: our sample may not be representative of the prevalence of the potential outcomes, but may still be representative of the risk-differences. Third, if we wished to generalize across a population that includes a vaccine-hesitant subset, we can simply assume that there is no treatment effect on average in this subgroup. As long as the vaccine-hesitant comprise a minority of the target population, similar magnitude effect sizes may therefore hold.\footnote{As a more formal analysis, in Appendix~\ref{app:sensitivity} we extend the previous sensitivity analysis to evaluate how our effects might generalize across a population with the same covariate distribution as our sample but include entirely cases where $RS = 0$. This analysis suggests that our ability to generalize our estimates may be weak; however, it is also very conservative.}

Such generalization may still strike some as implausible. We may instead wish to consider the more limited goal of generalizing our estimates beyond the complete-cases to the entire vaccine-accepting CTIS sample. We also run this analysis and find that our estimated effect sizes are similar to our primary analysis (see Appendix \ref{app:robustness}).

\subsection{Interference, carryover, and anticipatory effects}

Our analysis makes three restrictions on the dependence of the potential outcomes across individuals and over time: SUTVA, no carryover effects, and no anticipatory effects. While violations of these assumptions are possible, we believe that such violations likely bias our effect estimates towards zero.

First consider violations of SUTVA. If the effect of receiving a COVID-19 vaccination depends on the vaccination status of other individuals in the population, then we must redefine our causal estimand. First, assume that these dependencies are limited to communities of individuals, which we index from $c = 1,..., C$. For simplicity, assume that each community has a fixed number of individuals $p$, and that $p, C > 1$. We can then define the following terms: let $\mathbf{A}_{c-i}$ represent a random vector of vaccination statuses for all individuals in a community $c$, omitting the index $ci$, and let $S(\mathbf{A}_{c-i})$ be the average number of vaccinated individuals in the random vector $\mathbf{A}_{c-i}$.  Finally, let $\bar{1}_{c-i}$ be a vector of $p - 1$ ones and $\bar{0}_{c-i}$ be a vector of $p - 1$ zeros (omitting the $ci$-th index), so that, for example, $S(\bar{1}_{c-i}) = 1$.

Consider the following estimand: 

\begin{align}
\vartheta = \mathbb{P}_n[Y^{1, \bar{1}} - Y^{0, \bar{0}}]
\end{align}
This formalizes the difference in rates of depression in a world where everyone is vaccinated versus a world where no one is vaccinated, allowing each individual's potential outcomes to depend on everyone else's vaccination status within their community. We can then consider how well have we estimated $\vartheta$. Consider the model:

\begin{align}\label{eqn:interference1}
P(Y^{A, \mathbf{A}} = 1 \mid x) = \beta_{0x} + \beta_{1x}A_{ci} + \beta_{2x}S(\mathbf{A}_{c-i})
\end{align}

Equation~(\ref{eqn:interference1}) implies that $\vartheta = \mathbb{P}_n(\beta_{1X} + \beta_{2X})$. This estimand captures both a direct effect of vaccinations, captured by $\mathbb{P}_n(\beta_{1X})$, and an indirect effect captured by $\mathbb{P}_n(\beta_{2X})$. Assume that for all $x$, $\beta_{1x}, \beta_{2x} < 0$ -- that is, that both the direct and indirect effects of vaccination are negative within all covariate strata. To make these assumptions concrete, consider depression as the outcome. These assumption would be reasonable if (1) vaccinations induce people to socialize, and socializing decreases depression, and (2) socializing increases with the number of vaccinated individuals within all covariate strata.

However, our estimates are not based on \eqref{eqn:interference1}, but instead on the model:

\begin{align}\label{eqn:interference2}
P(Y^{A, \mathbf{A}} = 1 \mid x) = \tilde{\beta}_{0x} + \tilde{\beta}_{1x}A_{ci}
\end{align}

Our estimators therefore target the biased quantity $\tilde{\vartheta} = \mathbb{P}_n[\tilde{\beta}_{1X}]$. The omitted variable bias formula implies that this estimand is positively biased with respect to $\vartheta$.\footnote{To see this, consider the model:

\begin{align}
\mathbb{E}[S(\mathbf{A}_{c-i}) \mid x] = \delta_{0x} + \delta_{1x}A_{ci}
\end{align}

Within each covariate stratum the bias of our target quantity relative to our target is equal to $\beta_{2x}(\delta_{1x} - 1)$. The sample average of this bias across all covariate strata is greater than zero: $\beta_{2x} < 0$ for all $x$ by assumption, and $\delta_{1x} \le 1$ by the definition of $S(\mathbf{A}_{c-i})$. As long as vaccinations aren't perfectly correlated within every community, there exists some $x'$ in our sample where $\delta_{1x'} < 1$, implying positive bias.

If vaccinations were independently assigned, $\delta_{1x} = 0$, $\tilde{\beta}_{1x} = \beta_{1x}$, and $\tilde{\vartheta}$ would correctly capture the direct effects of vaccination on the outcomes (but not the total direct and indirect effects). The sign of the bias of the direct effect is unclear and depends on $\delta_{1x}$.} Whether $\vartheta$ is the correct target of inference, or whether these assumptions are realistic is debatable. Nevertheless, this logic provide justification for believing that SUTVA violations would bias our effect estimates toward, rather than away, from zero.\footnote{\cite{agrawal2021impact} attempt to  estimate spillover effects by examining whether increases in community vaccination rates improve depression and anxiety symptoms among the unvaccinated population. They are unable to find evidence that this is the case, providing some empirical evidence that we may need not worry about SUTVA violations affecting our estimates.}

We also assume no carryover effects, precluding the potential outcomes from depending on the time since receipt of vaccination. Were this violated, time since vaccination would essentially be a confounder that is positively associated with the observed treatment assignment indicator. Assuming that the effects of COVID-19 vaccinations on mental health monotonically diminish over time on average within each covariate stratum, violations of no carryover effects would be bias our estimates towards zero. Finally, we assume no effects in anticipation of receiving the vaccine. If anticipatory effects were to improve mental health on average in our sample, violations would again bias our effect estimates towards zero. In short, violations of SUTVA, no carryover effects, and no anticipatory effects most plausibly bias our effect estimates towards zero.

\subsection{Bad controls}\label{ssec:badcontrols}

So-called ``bad controls'' may also bias our effect estimates (\cite{angrist2009mostly}). We argue that these biases again plausibly move our estimates toward zero.

Specifically, we control for occupation, employment status, and worries about finances. These variables are associated with vaccine-access, and -- as noted above -- failing to control for such variable likely induces a negative bias in our effect estimates. However, COVID-19 vaccinations may also affect these variables, meaning that they act as a mediator on the causal pathway from $A \to Y$, and are therefore ``bad controls.'' We argue that such effects were weak in February 2021: acquiring or switching job can take considerable time, and everyone in our sample was vaccinated for at most three months. Additionally, any  bias induced by over-controlling is likely positive. For example, COVID-19 vaccinations likely boosted economic activity in aggregate and increased employment. This would then likely reduce financial stress and the associated mental health burdens. Controlling for variables along this causal pathway therefore likely results in positive bias. 

We examine these hypotheses in our data by rerunning our analyses without controlling for worries about finances, occupation, or employment status. Our point estimates move further away from zero (see Table~\ref{tab:senstab5} in Appendix~\ref{app:robustness}). This is consistent with either over-controlling moving our point estimates closer to zero, or a negative bias induced by failing to control for these variables. Because we prefer our estimates to be biased towards zero we therefore control for these variables. 

\subsection{Measurement error}\label{ssec:msrmenterror}

Measurement error may also bias our estimates. For example, our data is self-reported, and may be subject to reporting bias. Relatedly, we assume no unmeasured confounding assumption conditional on $X$, which includes both observed covariate values and categories for missing data. The unobserved values of these covariates may act as unmeasured confounders. We are unable to fully address these issues. However, as one robustness check we run our analysis while excluding respondents who provide implausible or unreasonable answers. For example, several respondents indicate that they live with a negative number of other individuals. Following \cite{bilinski2021better}, we consider a variety of implausible answers to our questions and omit these responses when running our analysis, removing 15,762 complete-cases. All results are consistent with our primary analyses and are available in Appendix~\ref{app:robustness}.

\subsection{Mediation analysis}\label{ssec:mediation}

The validity of our mediation analysis requires strong assumptions. For example, our model precludes a prominent possible pathway through which COVID-19 vaccinations might affect depression: through household finances (\cite{wilson2020job}, \cite{berkowitz2021unemployment}). Our analysis instead assumes that vaccinations do not affect depression by changing finances. This assumption may not strictly hold: at an individual level, receipt of a COVID-19 vaccination might induce the previously unemployed to seek employment, which might then change a person's personal finances. At an aggregate level, mass distribution of COVID-19 vaccinations may have boosted economic activity, which may affect any individual's employment status and therefore finances. As noted in Section~\ref{sec:timeframe}, this is one reason we limit our analysis to February, a timeframe when such effects are plausibly negligible. 

Our analysis also requires that there exist no other post-treatment effects of vaccination that confound the mediator-outcome relationship. For example, we interpret the direct effect as reflecting the general feelings associated with a return to normalcy. Our analysis requires that this feeling does not cause social isolation or worries about health. However, causal arrows may exist from social isolation or worries about health to this variable. In this case both the direct and indirect pathways capture the effect of this variable.\footnote{\cite{perez2021covid} speculate that in addition to worries about health and social isolation, ``different work opportunities'' is a possible pathway through which COVID-19 vaccines may affect depression. While this pathway is related to household finances, worries about household finances could remain the same while one's employment status may plausibly differ. More broadly our analysis rules also out any pathways via employment changes. Limiting our analysis to February again mitigates potential bias if we think people are unlikely to change their employment within a short-time frame after being vaccinated.} These conditions and interpretations hold for any other hypothesized pathway between COVID-19 vaccines and depression.

Finally, the interventional effects are not robust to reverse causation with respect to $Y$. We must therefore interpret these results cautiously. While we do not conduct a sensitivity analysis with respect any of these assumptions, we compute all of the robustness checks noted previously and find comparable results. 

\section{Discussion}\label{sec:discussion}

We provide evidence that COVID-19 vaccinations reduced feelings of depression and anxiety, social isolation, and worries about health, among vaccine-accepting CTIS respondents in February 2021. Our results imply that not being vaccinated decreased the prevalence of each outcome by 3.7, 3.3, and 4.3 percentage points, respectively, translating to 19, 16, and 15 percent reductions in the prevalence of each outcome. We also examine effects by both pre-specified and data-driven subgroups. We observe substantial heterogeneity with respect to age, with younger age groups having larger magnitude effect sizes. Our estimates on the risk-ratio scale suggest that some of this heterogeneity may be driven by differences in base rates of each outcome by age group. However, on both the risk-differences and risk-ratio scales, we find that the total effects on each outcome are strongest among respondents aged 18-24 relative to older respondents, with particularly large effects on depression and isolation among those living with an elderly individual. We also observe that those with multiple high-risk health conditions appear to have slightly larger point estimates with respect to worries about health. Our average effect estimates are moderately robust to violations to our causal assumptions and analytic choices, and we argue that many violations would tend to bias our effect estimates towards zero on the risk-differences scale. Finally, while our estimates pertain only to our sample of CTIS respondents, we argue that inferring similar magnitude effect sizes beyond our sample may not be unreasonable.

We conclude by positing a model where vaccinations affect depression through social isolation and worries about health. We then decompose the total effect into a direct effect and indirect effects via interventions on the distribution of social isolation and worries about health while holding vaccination status fixed. We find that the interventional indirect effect via social isolation is larger in absolute magnitude than the interventional indirect effect via worries about health. We find evidence of a direct effect that accounts for approximately half of the total effect on depression, which we interpret as reflecting a pathway via a perceived return to normalcy (see also \cite{agrawal2021impact}). These patterns are broadly consistent across our pre-specified subgroups.

Our study has several implications. First, we find especially large effect sizes among respondents aged 18-24. These results complement findings that the pandemic had particularly negative effects on younger individuals (\cite{santomauro2021global}, \cite{sojli2021covid}), and suggest that COVID-19 vaccinations may have undone some of these effects, at least temporarily. Second, our results suggest that effects on social isolation play a larger mediating role than effects on worries about health in explaining the effect of COVID-19 vaccinations on depression. Finally, our results complement and support the findings from other studies documenting and understanding this unintended beneficial outcome of COVID-19 vaccinations with respect to mental health (\cite{perez2021covid}, \cite{agrawal2021impact}). For policy-makers seeking to alleviate the mental health burden of the pandemic, our results suggest targeting interventions that reduce feelings of social isolation, particularly among younger individuals. 

We conclude with three notes of caution. First, our results rely on strong causal assumptions. Second, our primary statistical target of inference is the treatment effect on a subset of vaccine-accepting CTIS-respondents, and not the U.S. population more broadly. Finally, the CTIS measure of depression and anxiety does not assess clinical criteria for mental disorders. Our measure, like those used in other studies, likely captures a broad range of psychological distress and demoralization than would be considered a clinically significant disorder, and our findings may reflect improvements in mental health status in a non-clinical range of severity (\cite{murphy1986diagnosis}; \cite{tecuta2015demoralization}). For instance, symptoms of demoralization, such as helplessness, may be more responsive to the introduction of a vaccine than symptoms of major depression, such as anhedonia (\cite{wellen2010differentiation}). While the results indicate improvements in mental health status, these results should not be interpreted as indicating that COVID-19 vaccinations had an impact on clinical disorders comparable to the impact of psychotherapy or antidepressant medication. Regardless, future work should continue to study mental health during the pandemic to seek to better understand the extent of the challenges people are facing and to design policies that can help alleviate suffering, particularly among society's most vulnerable populations.

\bibliographystyle{plain}
\bibliography{research}  
\newpage

\section{Sample characteristics}\label{app:samplechars}

Table~\ref{tab:flowtab} displays the participant flow into the analytic sample. Tables~\ref{tab:sumtab1}-\ref{tab:sumtab3} display the demographic characteristics of both the analytic sample and the entire February sample. Table~\ref{tab:sumtab1} focuses on the treatment, outcomes, and mediators. Table~\ref{tab:sumtab2} focuses on respondent-level demographic characteristics. Table~\ref{tab:sumtab3} focuses on county-level characteristics, where we divide the continuous variables into quintiles at the county-level and display the proportion of respondents falling into each quintile. Each table separately display the percentage and number of respondents in each subgroup, and the percent vaccinated within each demographic subgroup. We display this information for both the analytic sample and the entire February dataset. Importantly, the percent vaccinated within each subgroup represents the percent among those who responded to both the corresponding demographic question and the COVID-19 vaccination question on the CTIS.

  \begingroup%
  \renewcommand\normalsize{\footnotesize}% Specify your font modification
  \input{tables/flow-table}%
  \endgroup%

\begin{landscape}

  \begingroup%
  \renewcommand\normalsize{\footnotesize}% Specify your font modification
  \input{tables/sample-table-1}%
  \endgroup%

  \begingroup%
  \renewcommand\normalsize{\footnotesize}% Specify your font modification
  \input{tables/sample-table-2}%
  \endgroup%

  \begingroup%
  \renewcommand\normalsize{\footnotesize}% Specify your font modification
  \input{tables/sample-table-3}%
  \endgroup%

\end{landscape}

\clearpage

\section{Other estimands}\label{app:estimands}

As supplemental analyses we consider three additional estimands beyond those included in the main paper.

\subsection{Observed effects}

Equation (\ref{eqn:estimand:wate}) attempts to answer the question: what would have happened in a world where all respondents in our sample had been vaccinated in February 2021 contrasted against a world where no one had been vaccinated? Imagining this former world in February 2021 is arguably challenging; since production and distribution constraints likely prevented this from being possible. We can instead ask, what was the effect of the observed distribution of vaccinations? That is, how much did vaccine distribution in the factual world reduce depression compared to a counterfactual world where no one was been vaccinated in our analytic sample? Recalling that $Z = RSV$, equation~\ref{eqn:estimand:observed} formalizes this:

\begin{align}\label{eqn:estimand:observed}
    \psi^{obs} &= \mathbb{P}_n\left(\mathbb{E}[Y - Y^0 \mid X, Z = 1]\right)
\end{align}

\subsection{Incremental effects}

We can generalize the above question to further ask: how would rates of depression, isolation, and worries about health have changed were we able to change everyone's probability of being vaccinated in February 2021? To answer this, we use the incremental intervention effects proposed by \cite{kennedy2019nonparametric}, formalized in equation~(\ref{eqn:estimand:incremenetaleffects}):

\begin{align}\label{eqn:estimand:incremenetaleffects}
    \psi^{Q(\delta)} = \mathbb{P}_n\left[\frac{\delta \pi_1(X) Y^1 + (1 - \pi_1(X))Y^0}{\delta \pi_1(X) + 1 - \pi_1(X)}\right]
\end{align}
where $\pi_1(X) = P(A = 1 \mid X, V = 1)$ and $\delta$ represents a parameter that multiplies everyone's observed odds of being vaccinated. For example, take $\delta = 1.5$ and consider an individual with a 10 percent chance of being vaccinated (or 1/9 odds). This intervention would increase this individual's odds to 3/18, leaving them with approximately 14 percent probability of being vaccinated. As special cases of interest, $\delta = 1$ corresponds to average outcomes under the observed distribution of vaccinations (or $\mathbb{P}_n[\mathbb{E}[Y \mid X, Z = 1]]$), $\delta = 0$ corresponds to the average outcomes when no one is vaccinated ($\mathbb{P}_n[\mu_0(X)]$), and $\delta \to \infty$ corresponds to the average outcomes when everyone is vaccinated ($\mathbb{P}_n[\mu_1(X)]$).

We can also estimate contrasts setting different values of $\delta$ to see how mental health would have changed under different interventions on the probability of being vaccinated. Notice that \eqref{eqn:estimand:observed} is an example of such a contrast between $\delta = 1$ and $\delta = 0$. This highlights an important distinction between \eqref{eqn:estimand:observed} and \eqref{eqn:estimand:incremenetaleffects}: the former involves a causal contrast while the latter reflects a counterfactual rate. We refer to \cite{kennedy2019nonparametric} and \cite{bonvini2021incremental} for more details.

\subsection{Pandemic effects}\label{app:estimands:pandemic}

One reason we are interested in the effect of COVID-19 vaccinations on mental health is because we believe that it may ``undo'' some of the effects of the pandemic on mental health. We may therefore also wish to understand the effects of the pandemic on mental health in a world without COVID-19 vaccinations. Let $C$ be an indicator of the COVID-19 pandemic. We can then consider the analogue to our primary causal estimand ($\psi$):

\begin{align}\label{eqn:pandemic}
\psi^c &= \mathbb{P}_n[\mathbb{E}[Y^{C = 1, A = 0} - Y^{C = 0, A = 0} \mid X, Z = 1]]
\end{align}

We can also consider effects the risk-ratio scale. Under some assumptions, we can use our estimates of $\psi$ to obtain lower bounds on both quantities (see Appendix~\ref{app:identification}).

\clearpage

\section{Identification}\label{app:identification}

This section details our causal models. The first subsection outlines all identifying assumptions. The second subsection provides expressions of our causal estimands in terms of the observed data distribution, and provides briefs proofs of these identification results. These results cover both the estimands in the main paper and the additional estimands outlined in Appendix~\ref{app:estimands}.

\subsection{Causal assumptions}

As noted previously, the CTIS is a repeated cross-sectional sample of Facebook Users. Additionally, the CTIS only indicates whether a respondent was vaccinated, but not when, and does not contain information on anyone's feelings of depression, isolation, or worries about health over time. While in truth the causal process may be complex (see Remark~\ref{remark2}), we make several simplifying assumptions to identify the causal effects given this data.

Let $t = 0, ..., T$ index calendar days starting from December 1, 2020 (slightly prior to the first administrations of COVID-19 vaccinations) and $T$ indexes December 1, 2021. More generally we could consider any arbitrary start date before COVID-19 vaccinations were first administered and afterwards. Define $\mathbf{a}_{1t}$ to be a fixed vector of all vaccination-statuses across all $N$ individuals in the U.S. population at time $t$. Assumption~\ref{ass:nocarryover} precludes carryover or anticipatory effects either for the outcome or the mediators during February 2021. Specifically, we assume that, for any time-period $t$ within February 2021:

\begin{assumption}[No carryover effects]\label{ass:nocarryover}
\begin{align}\label{eqn:carryover}
    &Y_{it}^{\mathbf{a}_1\mathbf{m}_{11}\mathbf{m}_{21}, ..., \mathbf{a}_T\mathbf{m}_{1T}\mathbf{m}_{2T}} = Y_{it}^{\mathbf{a}_t\mathbf{m}_{1t}\mathbf{m}_{2t}} \\
    &M_{1it}^{\mathbf{a}_1, ..., \mathbf{a}_T} = M_{1it}^{\mathbf{a}_t} \\
    &M_{2it}^{\mathbf{a}_1, ..., \mathbf{a}_T} = M_{2it}^{\mathbf{a}_t} 
\end{align}
\end{assumption}

Assumption~\ref{ass:consistency}, or consistency, further restricts the dependence of the potential outcomes (dropping the indices $it$ above for clarity). Consistency implies that each unit has potential outcomes (potential mediator values) equal to their observed outcomes (observed mediators) given their treatment and mediator values (and their treatment values), and does not depend on the treatment assignment or mediator value (or treatment assignment) of any other individual.

\begin{assumption}[Consistency]\label{ass:consistency}
\begin{align}
    &(A_{it} = a_{it}, M_{1it} = m_{1it}, M_{2it} = m_{2it}) \implies Y_{it} = Y_{it}^{a_{it}m_{1it}m_{2it}} = Y^{\mathbf{a}_{t}\mathbf{m}_{1t}\mathbf{m}_{2t}} \\
    &A_{it} = a_{it} \implies M_{1it} = M_{1it}^{a_{it}} = M_1^{\mathbf{a}_{t}}, M_{2it} = M_{2it}^{a_{it}} = M_{2it}^{\mathbf{a}_{t}}
\end{align}
\end{assumption}

To ease notation, we drop the indices $it$ moving forward. We again assume no unmeasured confounding and M-Y ignorability, detailed in equations (\ref{eqn:unconfoundedness-ya})-(\ref{eqn:unconfoundedness-ym}), and reproduce these assumptions below for completeness. 

\begin{assumption}[No unmeasured confounding]\label{ass:nuc}
\begin{align}
&(Y^{am_1m_2}, M_1^a, M_2^a) \perp A \mid (X, V = 1)
\end{align}
\end{assumption}

\begin{assumption}[Y-M ignorability]\label{ass:myi}
\begin{align}
&Y^{am_1m_2} \perp (M_1, M_2) \mid (A = a, X, V = 1) 
\end{align}
\end{assumption}

No unmeasured confounding states that treatment assignment is independent of the potential outcomes and potential mediators conditional on covariates and vaccine-acceptance, while Y-M ignorability states that the potential outcomes are independent of the mediators conditional on treatment assignment, covariates, and vaccine-acceptance. This precludes the existence of other causal descendants of $A$ that confound the mediator-outcome relationship. 

\begin{remark}
For our outcomes analysis we only require Assumption~\ref{ass:nuc} for identification. By contrast, our mediation analysis requires Assumption~\ref{ass:myi}. We also implicitly assume no reverse causation between the mediators and outcome: that is, that depression does not cause feelings of isolation and worries about health. 
\end{remark}

\begin{remark}\label{remark1}
These assumptions reduce a potentially more complex longitudinal process into a simpler structure that permits causal identification given our data. To illustrate, we consider our outcomes analysis, and let $W_t = [Y_t, M_{1t}, M_{2t}]$. For simplicity we assume Consistency and No Anticipatory Effects. We also illustrate only four time points, though the insights hold across a longer time-frame:

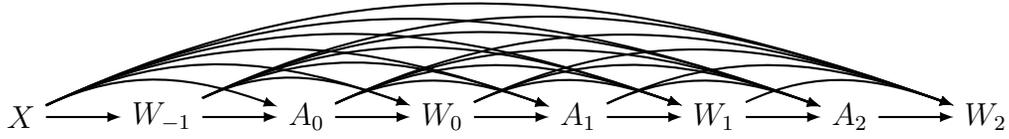
\begin{figure}[H]\label{fig:dag1}
\begin{center}
\begin{tikzpicture}
[
array/.style={rectangle split, 
	rectangle split parts = 3, 
	rectangle split horizontal, 
    minimum height = 2em
    }
]
 \node (0) {$X$};
 \node [right =of 0] (1) {$W_{-1}$};
 \node [right =of 1] (2) {$A_0$};
 \node [right =of 2] (3) {$W_0$};
 \node [right =of 3] (4) {$A_1$};
 \node [right =of 4] (5) {$W_1$};
 \node [right =of 5] (6) {$A_2$};
 \node [right =of 6] (7) {$W_2$};
 \draw[Arrow] (0.east) -- (1.west);
 \draw[Arrow] (1.east) -- (2.west);
 \draw[Arrow] (2.east) -- (3.west);
 \draw[Arrow] (3.east) -- (4.west);
 \draw[Arrow] (4.east) -- (5.west);
 \draw[Arrow] (5.east) -- (6.west);
 \draw[Arrow] (6.east) -- (7.west);
 \draw[Arrow] (0) to [out = 25, in = 160] (2);
 \draw[Arrow] (0) to [out = 25, in = 160] (3);
 \draw[Arrow] (0) to [out = 25, in = 160] (4);
 \draw[Arrow] (0) to [out = 25, in = 160] (5);
 \draw[Arrow] (0) to [out = 25, in = 160] (6);
 \draw[Arrow] (0) to [out = 25, in = 160] (7);
 \draw[Arrow] (1) to [out = 25, in = 160] (3);
 \draw[Arrow] (1) to [out = 25, in = 160] (4);
 \draw[Arrow] (1) to [out = 25, in = 160] (5);
 \draw[Arrow] (1) to [out = 25, in = 160] (6);
 \draw[Arrow] (1) to [out = 25, in = 160] (7);
 \draw[Arrow] (2) to [out = 25, in = 160] (4);
 \draw[Arrow] (2) to [out = 25, in = 160] (5);
 \draw[Arrow] (2) to [out = 25, in = 160] (6);
 \draw[Arrow] (2) to [out = 25, in = 160] (7);
 \draw[Arrow] (3) to [out = 25, in = 160] (5);
 \draw[Arrow] (3) to [out = 25, in = 160] (6);
 \draw[Arrow] (3) to [out = 25, in = 160] (7);
 \draw[Arrow] (4) to [out = 25, in = 160] (6);
 \draw[Arrow] (4) to [out = 25, in = 160] (7);
 \draw[Arrow] (5) to [out = 25, in = 160] (7);
\end{tikzpicture}
\caption{Causal structure: unrestricted}
\label{fig:dag1}
\end{center}
\end{figure}

This model allows for both vaccination status and the outcomes to depend on the prior periods vaccination status and prior outcomes. This model further implies the following conditional independence assumptions for any $(a_0, a_1, a_2)$:

\begin{align*}
&W_{0}(a_0) \perp A_0 \mid (X, W_{-1}) \\
&W_{1}(a_1, a_0) \perp A_1 \mid (X, W_{-1}, W_0, A_0) \\
&W_{2}(a_2, a_1, a_0) \perp A_2 \mid (X, W_{-1}, W_0, W_1, A_0, A_1) 
\end{align*}

Because we do not observe each person's outcomes or vaccination status over time -- only at the time they respond to the survey -- we assume no carryover effects and no unmeasured confounding. Figure~\ref{fig:dag2} illustrates these exclusions contrasted against Figure~\ref{fig:dag1}. First, we assume that vaccination status at any time $t$ is not a function of any contemporaneous or prior outcomes. Second, we assume that that the outcomes at time $t$ are only a function of that period's treatment status. We do allow that prior-period outcomes can affect current outcomes; however, they cannot affect treatment assignment at any point; otherwise, these variables would be a confounder.

\begin{figure}[H]
\begin{center}
\begin{tikzpicture}
[
array/.style={rectangle split, 
	rectangle split parts = 3, 
	rectangle split horizontal, 
    minimum height = 2em
    }
]
 \node (0) {$X$};
 \node [right =of 0] (1) {$W_{-1}$};
 \node [right =of 1] (2) {$A_0$};
 \node [right =of 2] (3) {$W_0$};
 \node [right =of 3] (4) {$A_1$};
 \node [right =of 4] (5) {$W_1$};
 \node [right =of 5] (6) {$A_2$};
 \node [right =of 6] (7) {$W_2$};
 \draw[Arrow] (0.east) -- (1.west);
% \draw[Arrow] (1.east) -- (2.west);
 \draw[Arrow] (2.east) -- (3.west);
% \draw[Arrow] (3.east) -- (4.west);
 \draw[Arrow] (4.east) -- (5.west);
% \draw[Arrow] (5.east) -- (6.west);
 \draw[Arrow] (6.east) -- (7.west);
 \draw[Arrow] (0) to [out = 25, in = 160] (2);
 \draw[Arrow] (0) to [out = 25, in = 160] (3);
 \draw[Arrow] (0) to [out = 25, in = 160] (4);
 \draw[Arrow] (0) to [out = 25, in = 160] (5);
 \draw[Arrow] (0) to [out = 25, in = 160] (6);
 \draw[Arrow] (0) to [out = 25, in = 160] (7);
 \draw[Arrow] (1) to [out = 25, in = 160] (3);
% \draw[Arrow] (1) to [out = 25, in = 160] (4);
 \draw[Arrow] (1) to [out = 25, in = 160] (5);
% \draw[Arrow] (1) to [out = 25, in = 160] (6);
 \draw[Arrow] (1) to [out = 25, in = 160] (7);
 \draw[Arrow] (2) to [out = 25, in = 160] (4);
% \draw[Arrow] (2) to [out = 25, in = 160] (5);
 \draw[Arrow] (2) to [out = 25, in = 160] (6);
% \draw[Arrow] (2) to [out = 25, in = 160] (7);
 \draw[Arrow] (3) to [out = 25, in = 160] (5);
% \draw[Arrow] (3) to [out = 25, in = 160] (6);
 \draw[Arrow] (3) to [out = 25, in = 160] (7);
 \draw[Arrow] (4) to [out = 25, in = 160] (6);
% \draw[Arrow] (4) to [out = 25, in = 160] (7);
 \draw[Arrow] (5) to [out = 25, in = 160] (7);
\end{tikzpicture}
\caption{Causal structure: restricted}
\label{fig:dag2}
\end{center}
\end{figure}
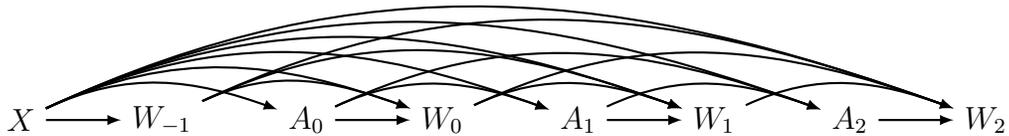
\end{remark}

\begin{remark}\label{remark2}
Allowing for carryover effects would imply that only causal estimands at time $t = 0$ would be identified (again assuming $W_{-1}$ is not a confounder). This motivates our decision to analyze data only from February 2021: this time-period is ``closer'' to $t = 0$. Heuristically, we may expect that the bias induced from violations of this assumption are smaller when closer to this ``first'' time-period and that this bias grows in magnitude over time. This may occur, for example, when treatment effects change monotonically with time since vaccination. As a related point, if our observed covariates are highly correlated with time since vaccination, controlling for these variables may reduce this bias.
\end{remark}

\begin{remark}
Modeling carryover effects may be of interest. For example, we may wish to know whether the effects of COVID-19 vaccinations on mental health endure over time. Unfortunately, even if we observed the differences in a pair of potential outcomes for each individual sampled at time $t$, we could not necessarily identify how effects change as a function of time since treatment (``timevacc''). To illustrate this point, consider the model:

\begin{align*}
Y_{it}^1 - Y_{it}^0 = \beta_{0X} + \beta_{1X} t + \beta_{2X} timevacc_{it} 
\end{align*}
where $timevacc_{it} = \max\{0, \sum_{k=0}^t(A_{ik} - 1)\}$. $\beta_{0X}$ captures the initial effect of treatment, $\beta_{1X}$ represents heterogeneity in the treatment effect as a linear function of the time index, $\beta_{2X}$ captures the durability of the treatment effect over time, and the coefficients are a function of each individual's covariates $X$. Assume that we observe $Y_{it}^1 - Y_{it}^0$ for each individual when sampled at time $t$, and let $Y_{it}^1 - Y_{it}^0$ equal $\beta_{0X} + \beta_{1X} t$ for individuals not yet vaccinated. That is, the treatment effect is assumed to equal what would have happened were they vaccinated that same time period.

Consider the case where for all $x$, $\beta_{0x} > 0$, $\beta_{1x} > 0$, $\beta_{2x} < 0$, and $\lvert\beta_{1x}\rvert > \lvert\beta_{2x}\rvert$. That is, there is some positive initial effect of treatment which increases with the time index, but that decreases with time since treatment (i.e. the effects are not ``durable''). In this case we would observe that the average effects increase with time; however, we would be wrong to infer that effects are increasing with the length of treatment. This is because in this scenario we cannot separately identify heterogeneity across time from carryover effects.

This illustrates the difficulty of modeling such effects generally. And in fact, as noted in Remark~\ref{remark2}, we must actually preclude the existence of carryover effects identify our causal estimands given our data structure. Therefore, our particular data makes it difficult, if not impossible, to learn about effect durability, or carryover effects generally.
\end{remark}

We next assume random sample selection to rule out collider-bias by conditioning on our sample. This assumption is detailed in \eqref{eqn:amarx}, but reproduced below for completeness.

\begin{assumption}[Random sample selection]\label{ass:amarx1}
\begin{align}
    \label{eqn:marx1}& (S, R) \perp (Y^{am_1m_2}, M_1^a, M_2^a) \mid (A, X, V = 1) \\
\end{align}
\end{assumption}
where $R = R_yR_aR_{m_1}R_{m_2}$. 

Finally, we assume positivity of the probability of treatment assignment, the joint mediator probabilities, and being a complete-case for all values of $(a, m_1, m_2)$ in the support of the data and $\epsilon > 0$:\footnote{Our primary estimand is the complete-case estimator, which does not technically require \eqref{eqn:positivity3}; however, this is required to estimate the effects over all non-hesitant respondents where $V = 1$.}

\begin{assumption}[Positivity]\label{ass:positivity}
\begin{align}
    \label{eqn:positivity1}&P\left\{\min_a P(A = a \mid Z = 1, X) > \epsilon\right\} = 1 \\
    \label{eqn:positivity2}&P\left\{\min_{a, m_1, m_2}P(M_1 = m_1, M_2 = m_2 \mid Z = 1, A = a, X) > \epsilon\right\} = 1 \\
    \label{eqn:positivity3}&P\left\{P(R = 1 \mid V = 1, S = 1, X) > \epsilon\right\} = 1 
\end{align}
\end{assumption}

These assumptions are sufficient to identify our causal parameters.

\subsubsection{Identification}

We provide all identification results conditional on $X$, noting that the target estimands are simply the empirical average of these expressions across the analytic sample.

\begin{proposition}[Mean potential outcomes]\label{prop:mpo}
Let $W^a = [Y^a, M_1^a, M_2^a]$. Under assumptions (\ref{ass:nocarryover}), (\ref{ass:consistency}) (\ref{ass:nuc}), (\ref{ass:amarx1}), and (\ref{ass:positivity}) $\mathbb{E}[W^a \mid X, Z = 1] = \mathbb{E}[W \mid X, A = a, Z = 1]$.
\end{proposition}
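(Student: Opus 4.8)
The plan is to show that both sides of the claimed identity reduce to the same quantity, namely $\mathbb{E}[W^a \mid X, V = 1]$, the mean of the single-period potential outcome and potential mediators among the vaccine-accepting population. Throughout, no carryover effects together with consistency play a bookkeeping role: they guarantee that the cross-sectionally observed vector $W = [Y, M_1, M_2]$ equals the single-period potential vector $W^a = [Y^a, M_1^a, M_2^a]$ on the event $\{A = a\}$, so that the potential-outcome notation is well defined given only cross-sectional data. Positivity guarantees that every conditioning event below has positive probability, so that each conditional expectation is well defined.

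First I would reduce the right-hand side. Recalling $Z = RSV$, write $\mathbb{E}[W \mid X, A = a, Z = 1] = \mathbb{E}[W \mid X, A = a, R = 1, S = 1, V = 1]$. On $\{A = a\}$, consistency gives $M_1 = M_1^a$, $M_2 = M_2^a$, and $Y = Y^{a M_1^a M_2^a} = Y^a$, hence $W = W^a$, so the last display equals $\mathbb{E}[W^a \mid X, A = a, R = 1, S = 1, V = 1]$. Because random sample selection supplies $(S, R) \perp W^a \mid (A, X, V = 1)$ (noting that $W^a$ is a deterministic function of the potential variables appearing in that assumption), the conditioning on $R = 1, S = 1$ may be dropped, leaving $\mathbb{E}[W^a \mid X, A = a, V = 1]$. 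Finally, no unmeasured confounding gives $W^a \perp A \mid (X, V = 1)$, so the conditioning on $A = a$ may be dropped, yielding $\mathbb{E}[W^a \mid X, V = 1]$.

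The harder part is the left-hand side, where the absence of any conditioning on $A$ blocks a direct application of random sample selection, since that assumption conditions on $A$. To handle this I would expand over $A$ by the law of total probability,
\begin{align*}
P(W^a \mid X, R = 1, S = 1, V = 1) &= \sum_{a'} P(W^a \mid A = a', X, R = 1, S = 1, V = 1) \\
&\quad \times P(A = a' \mid X, R = 1, S = 1, V = 1).
\end{align*}
Inside the sum, random sample selection removes the $R = 1, S = 1$ conditioning in the first factor, and no unmeasured confounding then removes the $A = a'$ conditioning, giving $P(W^a \mid A = a', X, R = 1, S = 1, V = 1) = P(W^a \mid X, V = 1)$ for every $a'$. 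Since this factor no longer depends on $a'$, it factors out of the sum while the remaining probabilities in $a'$ sum to one, so $\mathbb{E}[W^a \mid X, Z = 1] = \mathbb{E}[W^a \mid X, V = 1]$ as well.

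Combining the two reductions yields $\mathbb{E}[W^a \mid X, Z = 1] = \mathbb{E}[W \mid X, A = a, Z = 1]$. The one genuine subtlety, and the step I expect to require the most care, is the left-hand side marginalization: it shows that random sample selection and no unmeasured confounding \emph{together} imply the stronger selection ignorability $(S, R) \perp W^a \mid (X, V = 1)$ that the listed assumptions do not state directly, and this combined independence is exactly what licenses dropping the sample-inclusion conditioning when $A$ is not conditioned upon.
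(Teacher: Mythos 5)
Your proof is correct and takes essentially the same approach as the paper's: the paper likewise expands $\mathbb{E}[W^a \mid X, Z = 1]$ over $A$ by iterated expectations, applies random sample selection and no unmeasured confounding inside the sum (relying, as you do, on the observation that $W^a$ is a function of the potential variables named in the stated assumptions, via $Y^a = \sum_{m_1, m_2}\mathds{1}(M_1^a = m_1, M_2^a = m_2)Y^{am_1m_2}$), and finishes with random sample selection and consistency. The only difference is organizational --- you reduce both sides to the common anchor $\mathbb{E}[W^a \mid X, V = 1]$, whereas the paper writes one chain of equalities passing through $\mathbb{E}[W^a \mid X, V = 1, A = a]$; your closing remark that the two assumptions jointly yield $(S, R) \perp W^a \mid (X, V = 1)$ is also correct (it is the contraction property of conditional independence) and is implicit in the paper's chain.
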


\begin{proof}[Proof of Proposition~\ref{prop:mpo}]\label{proof:mpo}
First, notice that $Y^a = \sum_{m_1', m_2'}\mathds{1}(M_1^a = m_1, M_2^a = m_2)Y^{am_1m_2}$. Therefore, $(Y^{am_1m_2}, M_1^a, M_2^a) \perp A \mid X \implies (Y^a, M_1^a, M_2^a) \perp A \mid X$. Then: 

\begin{align*}
\mathbb{E}[W^a \mid Z = 1, X] &= \sum_{a'}\mathbb{E}[W^a \mid Z = 1, X, A = a']P(A = a' \mid X, Z = 1) \\
&= \sum_{a'}\mathbb{E}[W^a \mid X, V = 1, A = a']P(A = a' \mid X, Z = 1) \\
&= \sum_{a'}\mathbb{E}[W^a \mid X, V = 1, A = a]P(A = a' \mid X, Z = 1) \\
&= \mathbb{E}[W^a \mid X, V = 1, A = a] \\
&= \mathbb{E}[W^a \mid X, Z = 1, A = a] \\ 
&= \mathbb{E}[W \mid X, Z = 1, A = a]
\end{align*}
where the first equality holds by the Law of Iterated Expectations, the second by random sample selection, the third by no unmeasured confounding, the fourth follows directly by the third, the fifth by no sample selection, and the sixth by consistency.
\end{proof}

\begin{proposition}[Joint counterfactual outcomes]\label{prop:jointpo}
The following identification result holds:
\begin{align*}
\mathbb{E}[Y^{am_1m_2} \mid Z = 1, X] = \mathbb{E}[Y \mid Z = 1, X, A = a, M_1 = m_1, M_2 = m_2]
\end{align*}
\end{proposition}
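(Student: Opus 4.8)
The plan is to mirror the structure of the proof of Proposition~\ref{prop:mpo} in two phases: first peel off the treatment assignment and sample-selection indicators to reduce the left-hand side to a quantity conditioned on $A = a$ and $V = 1$, and then peel off the mediators using Y-M ignorability. Throughout, the positivity conditions in Assumption~\ref{ass:positivity}—in particular mediator positivity \eqref{eqn:positivity2}—guarantee that every conditioning event has positive probability, so that each conditional expectation is well-defined.

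For the first phase, I would begin with $\mathbb{E}[Y^{am_1m_2} \mid Z = 1, X]$, apply the Law of Iterated Expectations over $A$, use random sample selection (Assumption~\ref{ass:amarx1}) to replace conditioning on $Z = 1 = RSV$ by conditioning on $V = 1$, and then invoke no unmeasured confounding (Assumption~\ref{ass:nuc})—noting, exactly as in Proposition~\ref{prop:mpo}, that $(Y^{am_1m_2}, M_1^a, M_2^a) \perp A \mid (X, V=1)$ lets me replace the summation index $a'$ by the fixed level $a$. Re-aggregating the propensity weights collapses the sum and yields $\mathbb{E}[Y^{am_1m_2} \mid Z = 1, X] = \mathbb{E}[Y^{am_1m_2} \mid X, V = 1, A = a]$. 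This step is essentially Proposition~\ref{prop:mpo} applied to the richer counterfactual $Y^{am_1m_2}$ in place of $W^a$.

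For the second phase, working within the stratum $A = a$, I would again apply the Law of Iterated Expectations, now over $(M_1, M_2)$, and then use Y-M ignorability (Assumption~\ref{ass:myi}), which states $Y^{am_1m_2} \perp (M_1, M_2) \mid (A = a, X, V = 1)$, to conclude that $\mathbb{E}[Y^{am_1m_2} \mid X, V = 1, A = a, M_1 = m_1', M_2 = m_2']$ does not depend on $(m_1', m_2')$ and hence equals its value at $(m_1, m_2)$. Re-aggregating collapses the sum to $\mathbb{E}[Y^{am_1m_2} \mid X, V = 1, A = a, M_1 = m_1, M_2 = m_2]$. Consistency (Assumption~\ref{ass:consistency}), together with no carryover effects (Assumption~\ref{ass:nocarryover}) to collapse the full vaccination/mediator history onto the single observed time point, then lets me replace $Y^{am_1m_2}$ by the observed $Y$ on the event $\{A = a, M_1 = m_1, M_2 = m_2\}$. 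A final application of random sample selection reinstates the conditioning on $Z = 1$, delivering the claim.

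The main obstacle is the careful bookkeeping needed to move between conditioning on the analytic sample ($Z = 1$) and the vaccine-accepting population ($V = 1$), since the ignorability assumptions are stated only under $V = 1$. The subtle point arises at the final selection step, where I need $(S, R) \perp Y^{am_1m_2} \mid (A = a, X, M_1 = m_1, M_2 = m_2, V = 1)$ even though Assumption~\ref{ass:amarx1} is phrased in terms of the potential mediators $(M_1^a, M_2^a)$. This follows because joint independence of $(S,R)$ from the whole vector $(Y^{am_1m_2}, M_1^a, M_2^a)$ implies conditional independence of $(S,R)$ from $Y^{am_1m_2}$ given $(M_1^a, M_2^a)$, and consistency identifies $M_1^a = M_1$ and $M_2^a = M_2$ within the stratum $A = a$. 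A parallel remark governs the Y-M ignorability step: the assumption delivers independence with respect to the observed mediators, which consistency again ties back to the potential versions.
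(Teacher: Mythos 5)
Your proposal is correct and follows essentially the same route as the paper's proof: reduce $\mathbb{E}[Y^{am_1m_2} \mid Z = 1, X]$ to $\mathbb{E}[Y^{am_1m_2} \mid X, V = 1, A = a]$ by the Proposition~\ref{prop:mpo} argument, add the mediator conditioning via Y-M ignorability, and reinstate the selection indicators using exactly the paper's weak-union step, $(R, S) \perp (Y^{am_1m_2}, M_1^a, M_2^a) \mid (X, A, V = 1) \implies (R, S) \perp Y^{am_1m_2} \mid (M_1^a, M_2^a, X, A = a, V = 1)$, combined with consistency to identify $(M_1^a, M_2^a)$ with $(M_1, M_2)$ on the stratum $A = a$. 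The only difference — you apply consistency to replace $Y^{am_1m_2}$ by $Y$ before rather than after reinstating $Z = 1$ — is an immaterial reordering of the paper's last two equalities.
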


\begin{proof}[Proof of Proposition~\ref{prop:jointpo}]
Let $M = M_1 \times M_2$. Then:
\begin{align*}
\mathbb{E}[Y^{am} \mid Z = 1, X] &= \mathbb{E}[Y^{am} \mid X, V = 1, A = a] \\
&= \mathbb{E}[Y^{am} \mid X, V = 1, A = a, M = m] \\ 
&= \mathbb{E}[Y^{am} \mid X, A = a, M^a = m, V = 1, RS = 1] \\ 
&= \mathbb{E}[Y \mid X, A = a, M = m, Z = 1]  
\end{align*}
where the first equality holds by LIE, no sample selection, and no unmeasured confounding (following the first four equalities in Proof~\ref{proof:mpo}); the second by Y-M ignorability, the third by consistency and the fact that $(R, S) \perp (Y^{am}, M^a) \mid X, A, V = 1 \implies (R, S) \perp Y^{am} \mid (M^a = m, X, V = 1, A = a)$, and the final line by consistency.
\end{proof}

\begin{proposition}[Interventional effects]\label{prop:int}
We provide the identifying expression for $\psi_{M_1}$, noting that the other interventional effects can be derived similarly.

\begin{align*}
    &\sum_{m_1, m_2}\mathbb{E}[Y^{1m_1m_2} \mid X, Z = 1]\{p(M_1^1 = m_1 \mid X, Z = 1) \\ &- p(M_1^0 = m_1 \mid X, Z = 1)\}p(M_2^0 = m_2 \mid X, Z = 1) \\
    &= \sum_{m_1, m_2}\mathbb{E}[Y \mid X, 1, m_1, m_2, Z = 1]\{p(m_1 \mid X, 1, Z = 1) \\
    &- p(m_1 \mid X, 0, Z = 1)\}p(m_2 \mid X, 0, Z = 1)
\end{align*}
\end{proposition}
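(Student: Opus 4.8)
The plan is to read the left-hand (counterfactual) side of the claimed identity as a product of three algebraically separate factors and to identify each one in turn using the two preceding propositions, after which the identity follows by a termwise substitution inside the double sum. The three factors are: the joint counterfactual outcome mean $\mathbb{E}[Y^{1m_1m_2} \mid X, Z = 1]$; the contrast of counterfactual marginal mediator laws $p(M_1^1 = m_1 \mid X, Z = 1) - p(M_1^0 = m_1 \mid X, Z = 1)$; and the single counterfactual marginal $p(M_2^0 = m_2 \mid X, Z = 1)$. Because these factors multiply within each summand and each is a separate function of the counterfactual distribution, it suffices to identify them one at a time.

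First I would dispatch the outcome factor, which is immediate: Proposition~\ref{prop:jointpo} gives $\mathbb{E}[Y^{1m_1m_2} \mid X, Z = 1] = \mathbb{E}[Y \mid X, A = 1, M_1 = m_1, M_2 = m_2, Z = 1]$, which is exactly the factor written $\mathbb{E}[Y \mid X, 1, m_1, m_2, Z = 1]$ on the right-hand side. No further work is needed here beyond invoking that proposition with $a = 1$.

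Next I would identify the counterfactual mediator marginals, $p(M_1^a = m_1 \mid X, Z = 1)$ for $a \in \{0,1\}$ and $p(M_2^0 = m_2 \mid X, Z = 1)$. The key point is that Proposition~\ref{prop:mpo} identifies only the \emph{mean} of $W^a = [Y^a, M_1^a, M_2^a]$, whereas I need the full marginal law of a single potential mediator. The remedy is to re-run the chain of equalities in Proof~\ref{proof:mpo} with the indicator $\mathds{1}(M_1^a = m_1)$, a deterministic function of $M_1^a$, in place of $W^a$. Each step there used only that $(M_1^a, M_2^a) \perp A \mid (X, V = 1)$ (directly part of Assumption~\ref{ass:nuc}, so the outcome component is not even required), random sample selection (Assumption~\ref{ass:amarx1}), and consistency; all of these apply verbatim to any measurable function of the potential values. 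Taking the expectation of the indicator then yields $p(M_1^a = m_1 \mid X, Z = 1) = p(M_1 = m_1 \mid X, A = a, Z = 1)$, and the identical argument gives $p(M_2^0 = m_2 \mid X, Z = 1) = p(M_2 = m_2 \mid X, A = 0, Z = 1)$.

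Finally I would substitute the three identified factors into the counterfactual expression summand by summand; since each factor is matched to its observed-data counterpart, the double sum on the left rewrites as the double sum on the right, completing the argument. The main obstacle is the mediator step: Proposition~\ref{prop:mpo} as stated delivers only means, so the identification of the counterfactual mediator \emph{distributions} is not literally a corollary but requires re-deriving the identity for indicator functions. Once it is recognized that the independence and consistency assumptions hold for arbitrary functions of the potential values, this is routine, and everything that remains is a purely termwise rewrite.
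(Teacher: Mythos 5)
Your proposal is correct and takes essentially the same route as the paper, whose entire proof is the one-line remark that the result follows directly from Propositions~\ref{prop:mpo} and~\ref{prop:jointpo}. Your additional step---re-running the chain of equalities from the proof of Proposition~\ref{prop:mpo} with the indicators $\mathds{1}(M_1^a = m_1)$ to upgrade the identified \emph{mean} of $W^a$ to the full marginal law of each discrete mediator---is precisely the routine verification the paper leaves implicit, so you have simply filled in the details the authors elide.
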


\begin{proof}[Proof of Proposition~\ref{prop:int}]
The proof follows directly from Propositions \ref{proposition:mpo} and \ref{proposition:jointpo}.
\end{proof}

\begin{remark}
Equation \ref{eqn:positivity2} is technically stronger than necessary to identify $\psi_{M_1}$. For example, for any $x$, we only need that $p(m_1, m_2 \mid 1, x) > 0$ whenever $p(m_2 \mid 0, x) > 0$. 
\end{remark}

\begin{proposition}[Incremental effects]\label{prop:incfx}
Assume \eqref{eqn:marax}
\begin{align}\label{eqn:marax}
(R, S) \perp (A, Y^a) \mid (X, V = 1)
\end{align}

In other words, missingness is independent of treatment assignment given covariates and vaccine-acceptance. Then:

\begin{align*}
\mathbb{E}[Y^{Q(\delta)} \mid X, Z = 1] &= \frac{\delta P(A = 1 \mid X, Z = 1)\mathbb{E}[Y \mid A = 1, X, Z = 1]}{\delta P(A = 1 \mid X, Z = 1) + P(A = 0 \mid X, Z = 1)} \\
&+ \frac{P(A = 0 \mid X, Z = 1)\mathbb{E}[Y \mid A = 0, X, Z = 1]}{\delta P(A = 1 \mid X, Z = 1) + P(A = 0 \mid X, Z = 1)}
\end{align*}
\end{proposition}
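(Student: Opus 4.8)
The plan is to reduce the conditional counterfactual mean to a convex combination of the two mean potential outcomes, and then to identify each of the three ingredients (the two mean potential outcomes and the mixing weights) separately. Starting from the definition of the incremental estimand in \eqref{eqn:estimand:incremenetaleffects}, observe that $\pi_1(X)$ is a fixed function of $X$, so conditioning on $(X, Z = 1)$ leaves the weights outside the expectation. This yields
\begin{align*}
\mathbb{E}[Y^{Q(\delta)} \mid X, Z = 1] = \frac{\delta \pi_1(X)\,\mathbb{E}[Y^1 \mid X, Z = 1] + (1 - \pi_1(X))\,\mathbb{E}[Y^0 \mid X, Z = 1]}{\delta \pi_1(X) + 1 - \pi_1(X)},
\end{align*}
so that the remaining task is to rewrite $\mathbb{E}[Y^a \mid X, Z = 1]$ and $\pi_1(X)$ in terms of the observed-data distribution.

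For the two mean potential outcomes I would invoke the outcome component of Proposition~\ref{prop:mpo}, which already gives $\mathbb{E}[Y^a \mid X, Z = 1] = \mathbb{E}[Y \mid X, A = a, Z = 1]$ under the standing assumptions (no unmeasured confounding, consistency, random sample selection, and positivity). No new work is needed here beyond citing that result.

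The crux is the propensity-score step: I must show that the population propensity $\pi_1(X) = P(A = 1 \mid X, V = 1)$ that defines the intervention coincides with the sample-restricted propensity $P(A = 1 \mid X, Z = 1)$. Writing $Z = RSV$, I would condition on $(X, V = 1)$ and use that \eqref{eqn:marax} implies $(R, S) \perp A \mid (X, V = 1)$; hence $P(A = a \mid X, Z = 1) = P(A = a \mid X, V = 1, R = 1, S = 1) = P(A = a \mid X, V = 1) = \pi_a(X)$ for each $a$. I expect this to be the main obstacle, and it is precisely the step motivating the additional assumption: the standing random sample selection condition (Assumption~\ref{ass:amarx1}) conditions on $A$ and therefore says nothing about the association between selection and treatment, so without the strictly stronger \eqref{eqn:marax} the sample propensity could differ from the population propensity and the intervention $Q(\delta)$ would be misaligned with the observed data.

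Finally, I would substitute $\pi_1(X) = P(A = 1 \mid X, Z = 1)$, $1 - \pi_1(X) = P(A = 0 \mid X, Z = 1)$, and $\mathbb{E}[Y^a \mid X, Z = 1] = \mathbb{E}[Y \mid X, A = a, Z = 1]$ into the displayed convex combination, collecting terms over the common denominator $\delta P(A = 1 \mid X, Z = 1) + P(A = 0 \mid X, Z = 1)$. This reproduces the claimed two-term expression, and taking the empirical average $\mathbb{P}_n$ over the analytic sample then recovers the unconditional estimand.
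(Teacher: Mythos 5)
Your proposal is correct and follows essentially the same route as the paper: use the independence $(R,S) \perp A \mid (X, V=1)$ implied by \eqref{eqn:marax} to equate the sample propensity $P(A = a \mid X, Z = 1)$ with $\pi_a(X)$, identify $\mathbb{E}[Y^a \mid X, Z = 1]$ by the observed regression, and substitute both into the convex-combination form of the estimand. The one small difference is that you cite Proposition~\ref{prop:mpo} (which rests on the standing Assumption~\ref{ass:amarx1}) for the outcome step, whereas the paper rederives $\mathbb{E}[Y^a \mid X, Z = 1] = \mathbb{E}[Y \mid X, A = a, Z = 1]$ directly from \eqref{eqn:marax}, no unmeasured confounding, and consistency, making its proof self-contained in \eqref{eqn:marax}; since Assumption~\ref{ass:amarx1} holds throughout the paper anyway, your version is equally valid.
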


\begin{proof}[Proof of Proposition~\ref{prop:incfx}]
Under \eqref{eqn:marax}, $P(A = 1 \mid X, V = 1) = P(A = 1 \mid X, Z = 1)$. Moreover,

\begin{align*}
\mathbb{E}[Y^a \mid X, Z = 1] &= \mathbb{E}[Y^a \mid X, V = 1] \\
&= \mathbb{E}[Y^a \mid X, V = 1, A = a] \\
&= \mathbb{E}[Y^a \mid X, Z = 1, A = a] \\
&= \mathbb{E}[Y \mid X, Z = 1, A = a]
\end{align*}
where the first equality follows by \eqref{eqn:marax}, the second by no unmeasured confounding, the third by \eqref{eqn:marax}, and the final line by consistency. The result follows by combining these results.
\end{proof}

\begin{remark}
This estimand requires no positivity assumption with respect to $\pi_a(X)$ (\eqref{eqn:positivity1}) for identification (\cite{kennedy2019nonparametric}).
\end{remark}

\begin{remark}
Unlike our other analyses, we additionally assume \eqref{eqn:marax}. This precludes the missingness process or sample selection process from being a function of treatment assignment given the covariates and vaccine-acceptance. This assumption may not hold: for example, \cite{bradley2021unrepresentative} shows that the CTIS overrepresents vaccine-uptake compared to the U.S. population, suggesting that perhaps there is selection into the sample based on this variable. On the other hand, \cite{salomon2021us} also shows that the CTIS overrepresents individuals with higher education, a variable also associated with higher vaccination rates. As long as higher vaccine-uptake among CTIS respondents can by fully explained by other observed covariates, then \eqref{eqn:marax} could hold.

Alternatively, we could imagine these shifts as the result of an intervention that only targeted respondents to the CTIS, so that the intervention is on $P(A = 1 \mid X, Z = 1)$ rather than $P(A = 1 \mid X, V = 1)$. However, no feasible real-world intervention on the treatment assignment process would likely correspond to such an intervention, rendering such an estimand arguably less useful to consider.
\end{remark}

\begin{proposition}[Pandemic bounds]\label{prop3}
Assume the following:

\begin{assumption}\label{assumption:pandemic}
\begin{align}
\label{eqn:pandemica1}&\mathbb{E}[Y^{A = 1} \mid X = x, V = 1] \ge \mathbb{E}[Y^{C = 0, A = 0} \mid X = x, V = 1] \qquad \forall x 
\end{align}
\end{assumption}

It follows that:

\begin{align}
&\mathbb{E}[Y^{A = 0} - Y^{A = 1} \mid X = x, V = 1] \le \mathbb{E}[Y^{C = 1, A = 0} - Y^{C = 0, A = 0} \mid X = x, V = 1] \\
&\frac{\mathbb{E}[Y^{A = 0} \mid X = x, V = 1]}{\mathbb{E}[Y^{A = 1} \mid X = x, V = 1]} \le \frac{\mathbb{E}[Y^{C = 1, A = 0} \mid X = x, V = 1]}{\mathbb{E}[Y^{C = 0, A = 0} \mid X = x, V = 1]}
\end{align}
\end{proposition}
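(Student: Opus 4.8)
The plan is to reduce both displayed inequalities to the single ordering asserted by Assumption~\ref{assumption:pandemic}, after making explicit the consistency step that identifies each factual potential outcome with its pandemic-world counterpart. The key observation is that every quantity written as $\mathbb{E}[Y^{A=a} \mid X=x, V=1]$ refers to the observed world, in which the pandemic is present ($C=1$); hence as potential outcomes $Y^{A=0} = Y^{C=1,A=0}$ and $Y^{A=1} = Y^{C=1,A=1}$. I would state this correspondence as a sentence before any calculation, since it is precisely what allows the bound to be expressed entirely in terms of quantities we already identify and estimate, namely $\mathbb{E}[Y^{A=0}\mid X,V=1]$ and $\mathbb{E}[Y^{A=1}\mid X,V=1]$.

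For the risk-difference bound, I would first rewrite the left-hand side as $\mathbb{E}[Y^{C=1,A=0}\mid X=x,V=1] - \mathbb{E}[Y^{C=1,A=1}\mid X=x,V=1]$ and the right-hand side as $\mathbb{E}[Y^{C=1,A=0}\mid X=x,V=1] - \mathbb{E}[Y^{C=0,A=0}\mid X=x,V=1]$. Subtracting the common term $\mathbb{E}[Y^{C=1,A=0}\mid X=x,V=1]$ from both sides reduces the claim to $-\mathbb{E}[Y^{C=1,A=1}\mid X=x,V=1] \le -\mathbb{E}[Y^{C=0,A=0}\mid X=x,V=1]$, i.e. to $\mathbb{E}[Y^{C=1,A=1}\mid X=x,V=1] \ge \mathbb{E}[Y^{C=0,A=0}\mid X=x,V=1]$, which is exactly Assumption~\ref{assumption:pandemic} once $Y^{A=1}$ is identified with $Y^{C=1,A=1}$.

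For the risk-ratio bound, I would again use $\mathbb{E}[Y^{A=0}\mid X=x,V=1] = \mathbb{E}[Y^{C=1,A=0}\mid X=x,V=1]$ so that the two fractions share a common, strictly positive numerator. Dividing both sides by that numerator reduces the claim to $1/\mathbb{E}[Y^{A=1}\mid X=x,V=1] \le 1/\mathbb{E}[Y^{C=0,A=0}\mid X=x,V=1]$, which holds, by reversal of order under reciprocals for positive quantities, whenever $\mathbb{E}[Y^{A=1}\mid X=x,V=1] \ge \mathbb{E}[Y^{C=0,A=0}\mid X=x,V=1]$; this is again Assumption~\ref{assumption:pandemic}. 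Here I would note that the strict positivity of both denominators follows from the positivity of the outcome prevalences, so that the reciprocal step is legitimate.

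The routine content is merely cancellation of a common term and reciprocal order-reversal for positive numbers; there is no nuisance estimation or limiting argument involved. The one substantive point — and the step where I would be most careful — is the implicit consistency identification of the factual potential outcome $Y^{A=a}$ with $Y^{C=1,A=a}$, since everything downstream is elementary once that correspondence is in place. I would therefore foreground that identification rather than leave it tacit.
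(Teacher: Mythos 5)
Your proof is correct and takes essentially the same route as the paper's: both rest on the single identity $\mathbb{E}[Y^{A=0} \mid X = x, V = 1] = \mathbb{E}[Y^{C=1, A=0} \mid X = x, V = 1]$, which holds because the factual world is one with the pandemic, after which each displayed inequality reduces by cancellation of the common term (or reciprocal order-reversal for positive quantities) to Assumption~\ref{assumption:pandemic}. You simply spell out the elementary algebra and the consistency identification that the paper's proof leaves as ``follows directly.''
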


\begin{proof}[Proof of Proposition~\ref{prop3}]
We use the identity that for all $x$: 

\begin{align*}
\mathbb{E}[Y^{A = 0} \mid V = 1, X = x] &= \mathbb{E}[Y^{C = 1, A = 0} \mid V = 1, X = x] 
\end{align*}

This holds by definition: our estimates all take place in a world in which there is a COVID-19 pandemic. The remainder of the proof follows directly from \eqref{eqn:pandemica1}.
\end{proof}

\begin{remark}
Equation~\ref{eqn:pandemica1} is perhaps better motivated by the following assumed inequalities:

\begin{align*}
\mathbb{E}[Y^{A = 1} \mid V = 1, X = x] &= \mathbb{E}[Y^{C = 1, A = 1} \mid V = 1, X = x] \\
&\ge \mathbb{E}[Y^{C = 0, A = 1} \mid V = 1, X = x] \\
&= \mathbb{E}[Y^{C = 0, A = 0} \mid V = 1, X = x] 
\end{align*}

The first equality holds by definition. The second inequality imagines a counterfactual world where everyone were vaccinated for COVID-19 and there were no COVID-19 pandemic. While perhaps a strange counterfactual, we believe this is reasonable: all else equal, rates of depression would on average be lower absent a COVID-19 pandemic, regardless of COVID-19 vaccination status. The final equality states that these rates are equal those in a world with no COVID-19 pandemic and no COVID-19 vaccinations. This seems reasonable if we believe that the effect of COVID-19 vaccines on mental health arguably only operate during a COVID-19 pandemic. Assumption~\ref{assumption:pandemic} then follows.
\end{remark}

\begin{remark}
If we further assume random sample selection, then the same the identifying expressions for $\mathbb{E}[Y^a \mid X, Z = 1]$ detailed Proposition~\ref{proposition:mpo} allows us to identify the bounds conditional on our sample. This then allows us to interpret our sample-average effect estimates, for example, as lower absolute magnitude bounds on the effects of the COVID-19 pandemic on mental health among respondents in our sample.
\end{remark}

\begin{remark}
Using potential outcomes in this setting is perhaps philosophically challenging since ``the COVID-19 pandemic'' is arguably not a well-defined exposure, nor is it manipulable. Nevertheless, in general terms we can consider: what would have happened if COVID-19 and any of the associated responses had never happened? Equation~\ref{assumption:pandemic} simply states that expected mental health in this world would be at least as ``good'' as in a version of our world where COVID-19 emerged but everyone were vaccinated by February 2021. This assumption implicitly precludes the possibility that this counterfactual world would have realized some other large negative shock to mental health.
\end{remark}

\clearpage

\section{Estimation}\label{app:estimation}

This section provides additional details on our estimation approach. As noted previously, we use influence-function based estimators throughout.\footnote{Technically, these estimators are derived treating the covariates as if representative from some super-population with covariate distribution $dP(X \mid Z = 1)$.} These estimators provide consistent estimates of the sample average treatment effects and possibly conservative variance estimates (\cite{imbens2004nonparametric}). Specifically, all of our estimators take the form:

\begin{align*}
    \frac{1}{n}\sum_{i=1}^n\frac{\mathds{1}(R_i = 1, V_i = 1)}{P(R_i = 1, V_i = 1 \mid S_i = 1)}\hat{\varphi}(O)
\end{align*}
where $\hat{\varphi}(O)$ represents a scaled estimate of the uncentered influence function for any of the functionals considered in Section~\ref{app:identification}. To be precise, we use estimates of $\varphi$ for the average potential outcome under treatment $A = a$, the incremental effects, and interventional effects via $M_1$, respectively: 

\begin{align*}
    \varphi_{1, a}(O) = \frac{\mathds{1}(A = a)}{\pi_a(X)}\left(Y - \mu_a(X)\right) + \mu_a(X)
\end{align*}

\begin{align*}
    \varphi_2(O) = \frac{\delta \pi_1(X)\varphi_{1, 1}(Z) + \pi_0(X)\varphi_{1, 0}(Z)}{\delta \pi_1(X) + \pi_0(X)} + \frac{\delta\{\mu_1(X) - \mu_0(X)\}(A - \pi_1(X))}{\{\delta \pi_1(X) + \pi_0(X)\}^2}
\end{align*}

\begin{align*}
    \varphi_3(O) &= \frac{\mathds{1}(A = a)}{\pi_a(X)}\frac{\{p(M_1 \mid a, X) - p(M_1 \mid a', X)\}p(M_2 \mid a', X)}{p(M_1, M_2, \mid a, X)}(Y - \mu_a(M_1, M_2, X)) \\
    \nonumber &+ \frac{\mathds{1}(A = a)}{\pi_a(X)}\{\mu_{a, M_2^'}(M_1, X) - \mu_{a, M_1\times M_2^'}(X)\} \\ 
    \nonumber &- \frac{\mathds{1}(A = a')}{\pi_{a'}(X)}\{\mu_{a, M_2^'}(M_1, X) - \mu_{a, M_1^'\times M_2^'}(X)\} \\
    \nonumber &+ \frac{\mathds{1}(A = a')}{\pi_{a'}(X)}\left(\mu_{a, M_1}(M_2, X) - \mu_{a, M_1\times M_2'}(X) - (\mu_{a, M_1^'}(M_2, X) - \mu_{a, M_1^'\times M_2^'}(X))\right) \\
    \nonumber &+ \mu_{a, M_1\times M_2^'}(X) - \mu_{a, M_1^'\times M_2^'}(X)
\end{align*}
where

\begin{align*}
&\pi_a(X) = P(A = a \mid X, Z = 1) \\
&\mu_a(X) = \mathbb{E}[Y \mid X, A = a, Z = 1] \\
&\mu_a(M_1, M_2, X) = \mathbb{E}[Y \mid X, M_1, M_2, A = a, Z = 1] \\
&p(. \mid a, X) = p(. \mid a, X, Z = 1) \\
&\mu_{a, M_1\times M_2'}(X) = \sum_{m_1, m_2}\mu_a(m_1, m_2, X)p(m_1 \mid a, X)p(m_2 \mid a', X)  \\
&\mu_{a, M_1^'}(M_2, X) = \sum_{m_1}\mu_a(m_1, M_2, X)p(m_1 \mid a', X)
\end{align*}
with other terms defined analogously. The result for $\varphi_1(O)$ is well-known (see, e.g. \cite{kennedy2019nonparametric}). The derivation of $\varphi_2(O)$ is contained in \cite{kennedy2019nonparametric} and for $\varphi_3(O)$ in \cite{benkeser2021nonparametric}. We refer to \cite{benkeser2021nonparametric} for the expressions for the one-step estimators for the remaining interventional effects.

To obtain estimates of the effects across all vaccine-accepting respondents (e.g. averaged over the entire stratum where $SV = 1$), we multiply the bias-correction terms (specifically, the terms involving inverse-probability weights) by $\frac{\mathds{1}(R = 1)}{P(R = 1 \mid X, V = 1, S = 1)}$ and average the resulting function over all observations where $V = 1$ (and $S = 1$) to obtain our estimates. Our variance estimates are estimated by taking the empirical variance of the estimated influence functions across our analytic sample.

We use logistic regression and XGBoost to estimate all nuisance functions. In the former case we estimate each function on the full-sample, and in the latter we use cross-fitting. Additionally, for the XGBoost estimates, we estimate twenty specifications across a grid of hyper-parameters using the ``xgboost'', ``tidymodels'', and ``tune'' packages, and stack the resulting models using the ``stacks'' package in R.

\clearpage

\section{Additional results}\label{app:results}

This section contains additional results not displayed in the main paper. We first present model diagnostics, and then present additional tables and figures with respect to our outcomes analysis, mediation analysis, and incremental effects analysis. Results from the incremental effects analysis are only contained in this Appendix.

\subsection{Model diagnostics}\label{app:results-diagnostics}

We next examine the distribution of propensity score (``pi'') and complete-case weights (``eta''), noting again that the complete-case weights are only used for our analyses in Appendix~\ref{app:robustness}. Table~\ref{tab:weightdiagnostics} displays the quantiles of the distribution for each of these models.

  \begingroup%
  \renewcommand\normalsize{\footnotesize}% Specify your font modification
  \input{tables/weight_table}%
  \endgroup%

Figure~\ref{fig:calibration} displays the calibration curves with respect to our nuisance functions for our outcomes analysis. We display results for both our GLM and XGBoost models. $R$ reflects the calibration of the complete case model; $A$ represents the vaccination model; $RA$ represents the product of vaccination model with the complete case model (and $(1-A)R$ analogously); $Y$ represents the depression; $M1$ the (binary) isolation model; and $M2$ represents the (binary) worries about health model. The complete-case model -- used only for the full sample estimates displayed in Appendix~\ref{app:robustness} -- appear to under-predict the probabilities of being a complete-case at the lower end of the estimated probabilities. On the other hand, the product of the complete-case model and the propensity score models are well-calibrated with respect to the observed data. 

\begin{figure}[H]
\begin{center}
    \includegraphics[scale=0.5]{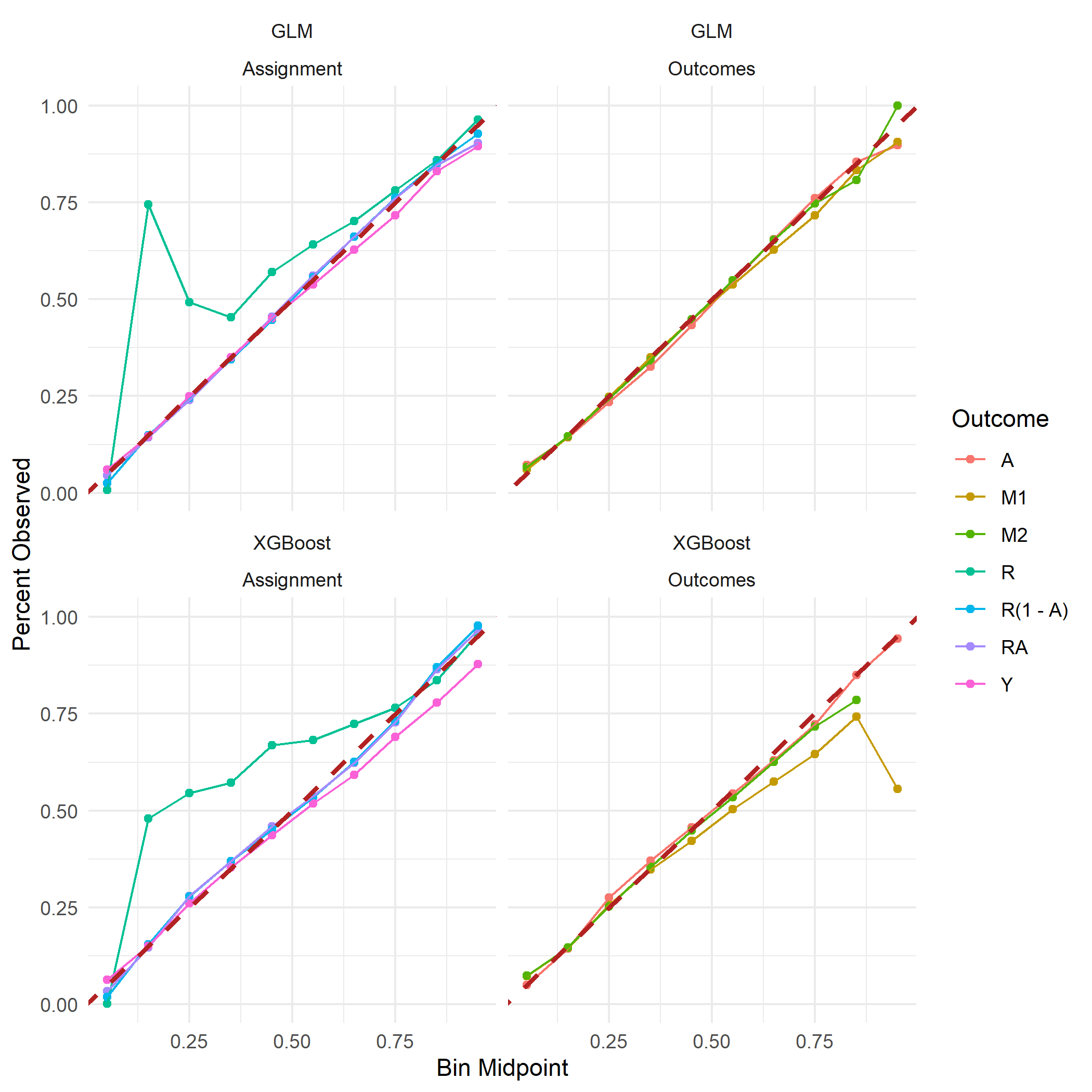}
    \caption{Model calibration}
    \label{fig:calibration}
\end{center}
\end{figure}

%Tables~\ref{tab:calibglm} and ~\ref{tab:calibxgb} present these same numbers. 

%\myinput{tables/calibration-glm}

%\myinput{tables/calibration-xbg}

\subsection{Outcomes analysis}\label{app:results-outcomes}

Figure~\ref{fig:htemedprops} displays our heterogeneity analysis for pre-specified subgroups for our outcomes analysis using our XGBoost estimates, analogous to Figure~\ref{fig:alloutcomes} in Section~\ref{sec:results}. 

The following tables display the subgroup estimates for the pre-specified subgroups with respect to our outcomes analysis. The first table presents results on the risk-differences scale and the second on the risk-ratio scale. The second two tables are identical but using XGBoost. In general we see that the results are quite similar except for the non-response category, which is much smaller for the XGBoost than the logistic regression models. 

\begin{figure}[H]
\begin{center}
    \includegraphics[scale=0.5]{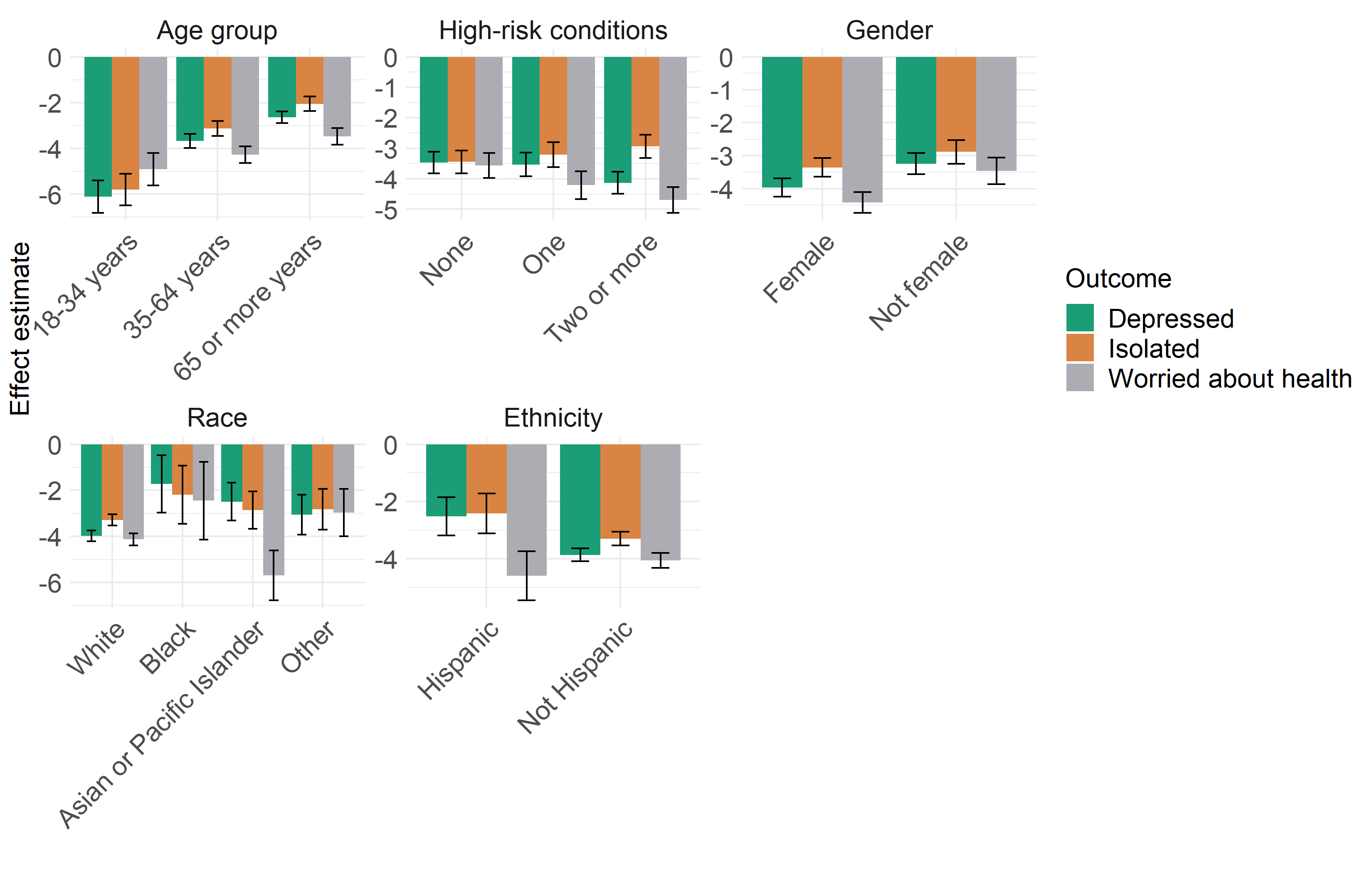}
    \caption{Outcome heterogeneity: complete-case estimates (XGBoost estimates)}
    \label{fig:htemedprops}
\end{center}
\end{figure}

\begin{landscape}

  \begingroup%
  \renewcommand\normalsize{\footnotesize}% Specify your font modification
  \input{tables/outcomes-hte-table-rd}%
  \endgroup%

  \begingroup%
  \renewcommand\normalsize{\footnotesize}% Specify your font modification
  \input{tables/outcomes-hte-table-rr}%
  \endgroup%

  \begingroup%
  \renewcommand\normalsize{\footnotesize}% Specify your font modification
  \input{tables/outcomes-hte-table-rd-xgb}%
  \endgroup%

  \begingroup%
  \renewcommand\normalsize{\footnotesize}% Specify your font modification
  \input{tables/outcomes-hte-table-rr-xgb}%
  \endgroup%

\end{landscape}

\newpage

This next set of tables display the results of the subgroup estimates for the data-driven groups for our outcomes analysis. The first set of tables displays the results for the groups that we expected to be heterogeneous with respect to depression, which we discuss in the primary paper. In addition to this analysis, we also present the results for the subgroups that we examined for isolation and worries about health, though these results were largely inconclusive. All estimates are using XGBoost. We also estimates these numbers using logistic regression models. Many of these patterns in heterogeneity no longer appear as prominent, likely in part because we do not include any interactions between covariates in any of these models. These results are available on request.

\begin{landscape}

  \begingroup%
  \renewcommand\normalsize{\footnotesize}% Specify your font modification
  \input{tables/dep-tree-table-rd-xgb}%
  \endgroup%

  \begingroup%
  \renewcommand\normalsize{\footnotesize}% Specify your font modification
  \input{tables/dep-tree-table-rr-xgb}%
  \endgroup%

  \begingroup%
  \renewcommand\normalsize{\footnotesize}% Specify your font modification
  \input{tables/health-tree-table-xgb}%
  \endgroup%

  \begingroup%
  \renewcommand\normalsize{\footnotesize}% Specify your font modification
  \input{tables/isolation-tree-table-xgb}%
  \endgroup%

\end{landscape}

\subsection{Interventional effects}\label{app:results-mediation}

Figure~\ref{fig:xghteprops} displays the heterogeneity results from our XGBoost estimates, analogous to Figure~\ref{fig:htemedpropspaper} in Section~\ref{sec:results}. The subsequent tables present results pertaining to our mediation analysis. Table~\ref{tab:medtab1} contains all point estimates on average across our analytic sample using our GLM estimates, while Table~\ref{tab:medtab6} presents our XGBoost estimates.

\begin{figure}[H]
\begin{center}
    \includegraphics[scale=0.5]{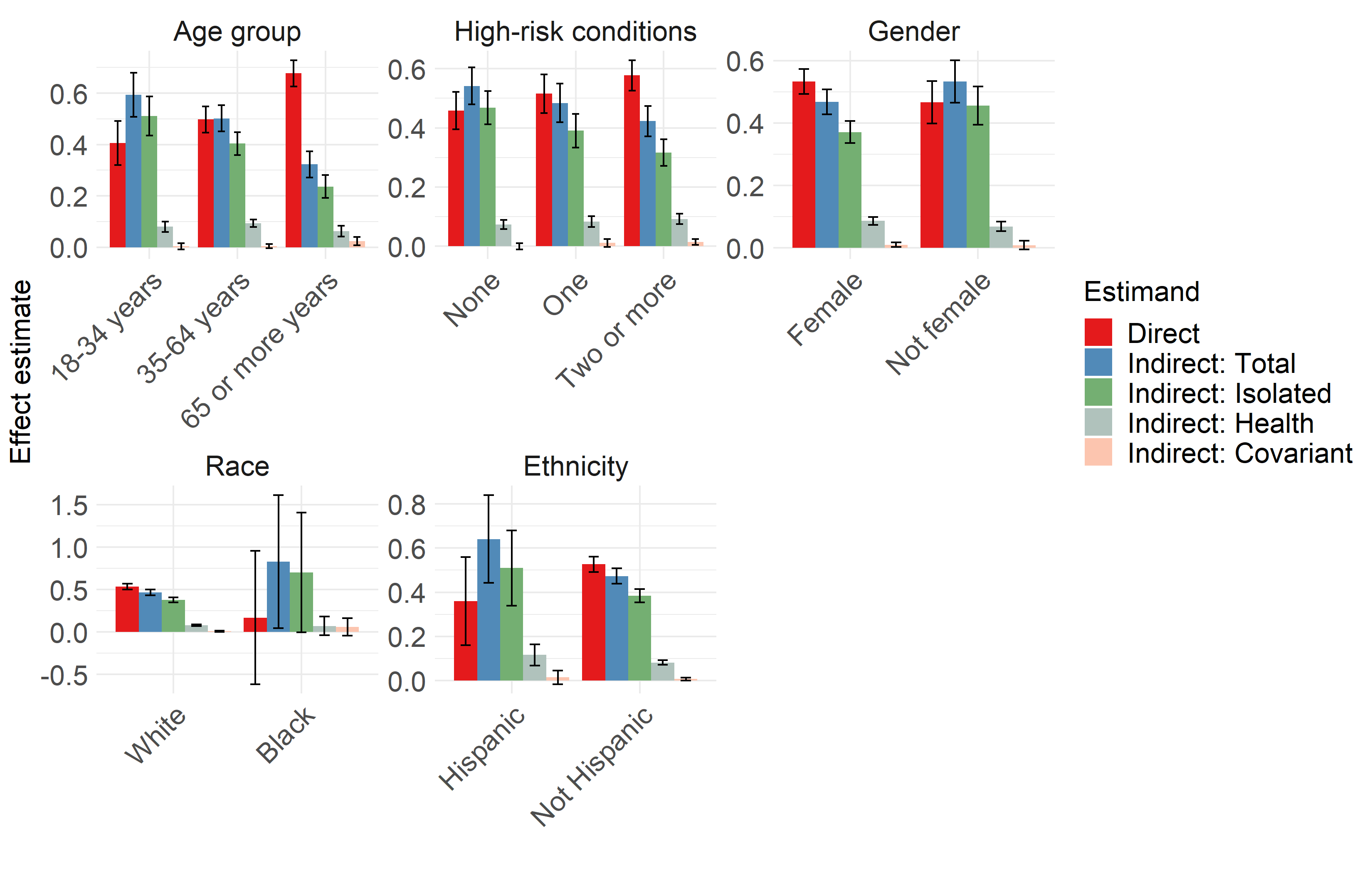}
    \caption{Mediation heterogeneity, complete-case estimates (XGBoost estimates)}
    \label{fig:xghteprops}
\end{center}
\end{figure}

  \begingroup%
  \renewcommand\normalsize{\footnotesize}% Specify your font modification
  \begin{table}[!h]

\caption{Mediation analysis: GLM estimates (95\% CI) \label{tab:medtab1}}
\centering
\begin{threeparttable}
\begin{tabular}[t]{lllll}
\toprule
Effect & Ref = 0 & Proportion (Ref = 0) & Ref = 1 & Proportion (Ref = 1)\\
\midrule
IDE & -1.93 (-2.16, -1.70) & 51.76 (48.33, 55.18) & -2.06 (-2.31, -1.82) & 55.37 (51.54, 59.20)\\
IIE & -1.80 (-1.91, -1.68) & 48.24 (44.82, 51.67) & -1.66 (-1.72, -1.61) & 44.63 (40.80, 48.46)\\
IIE - Isolated & -1.45 (-1.56, -1.35) & 39.07 (36.07, 42.08) & -1.39 (-1.50, -1.28) & 37.42 (34.18, 40.66)\\
IIE - Health & -0.31 (-0.34, -0.28) & 8.28 (7.29, 9.28) & -0.24 (-0.29, -0.20) & 6.57 (5.22, 7.92)\\
IIE - Covariant & -0.03 (-0.06, -0.01) & 0.89 (0.25, 1.53) & -0.02 (-0.15, 0.10) & 0.65 (-0.79, 2.08)\\
\bottomrule
\end{tabular}
\begin{tablenotes}
\item IDE represents the interventional direct effect; IIE represents the total interventional indirect effect; IIE - X represents interventional indirect effect via the indicated channel X; Ref represents the reference category for the indirect effects
\end{tablenotes}
\end{threeparttable}
\end{table}
  \endgroup%

  \begingroup%
  \renewcommand\normalsize{\footnotesize}% Specify your font modification
  \begin{table}[!h]

\caption{Mediation analysis: XGBoost estimates (95\% CI) \label{tab:medtab6}}
\centering
\begin{threeparttable}
\begin{tabular}[t]{lllll}
\toprule
Effect & Ref = 0 & Proportion (Ref = 0) & Ref = 1 & Proportion (Ref = 1)\\
\midrule
IDE & -1.85 (-2.04, -1.65) & 49.43 (46.37, 52.50) & -2.10 (-2.32, -1.88) & 56.16 (52.63, 59.69)\\
IIE & -1.89 (-1.99, -1.78) & 50.57 (47.50, 53.63) & -1.64 (-1.69, -1.59) & 43.84 (40.31, 47.37)\\
IIE - Isolated & -1.63 (-1.73, -1.54) & 43.71 (41.00, 46.42) & -1.64 (-1.74, -1.54) & 43.97 (40.80, 47.13)\\
IIE - Health & -0.33 (-0.37, -0.30) & 8.93 (7.92, 9.94) & -0.28 (-0.32, -0.24) & 7.47 (6.24, 8.70)\\
IIE - Covariant & 0.08 (0.05, 0.10) & -2.07 (-2.78, -1.37) & 0.28 (0.17, 0.40) & -7.60 (-9.06, -6.13)\\
\bottomrule
\end{tabular}
\begin{tablenotes}
\item IDE represents the interventional direct effect; IIE represents the total interventional indirect effect; IIE - X represents interventional indirect effect via the indicated channel X; Ref represents the reference category for the indirect effects
\end{tablenotes}
\end{threeparttable}
\end{table}
  \endgroup%

The next sets of tables below display results contain results our subgroup effect estimates with respect to our pre-specified subgroups. The first two tables are on the risk-differences scale and set the indirect effect reference category at 0 and 1, respectively. The second two tables express the quantities as a proportion of the total effect. The final four tables repeat these analyses using our XGBoost estimates.

\begin{landscape}

  \begingroup%
  \renewcommand\normalsize{\footnotesize}% Specify your font modification
  \input{tables/mediation-hte-table-0}%
  \endgroup%

  \begingroup%
  \renewcommand\normalsize{\footnotesize}% Specify your font modification
  \input{tables/mediation-hte-table-1}%
  \endgroup%

  \begingroup%
  \renewcommand\normalsize{\footnotesize}% Specify your font modification
  \input{tables/mediation-hte-prop-table-0}%
  \endgroup%

  \begingroup%
  \renewcommand\normalsize{\footnotesize}% Specify your font modification
  \input{tables/mediation-hte-prop-table-1}%
  \endgroup%

\end{landscape}

\begin{landscape}

  \begingroup%
  \renewcommand\normalsize{\footnotesize}% Specify your font modification
  \input{tables/mediation-hte-table-0-xgb}%
  \endgroup%

  \begingroup%
  \renewcommand\normalsize{\footnotesize}% Specify your font modification
  \input{tables/mediation-hte-table-1-xgb}%
  \endgroup%

  \begingroup%
  \renewcommand\normalsize{\footnotesize}% Specify your font modification
  \input{tables/mediation-hte-prop-table-0-xgb}%
  \endgroup%

  \begingroup%
  \renewcommand\normalsize{\footnotesize}% Specify your font modification
  \input{tables/mediation-hte-prop-table-1-xgb}%
  \endgroup%

\end{landscape}

\subsection{Other estimands}\label{app:results-incremental}

We consider the effects of the observed distribution of vaccinations and incremental effects on the propensity scores. Table~\ref{tab:inctabxgbcc} shows the results from the XGBoost models, displaying the estimated rate of each outcome and uniform confidence bands setting $\delta = 0$ to $\delta = 5$. Again $\delta$ is a parameter that multiplies each individual's odds of being vaccinated, and reflects the expected prevalence of each outcome under the corresponding intervention $\mathbb{P}_n[Y^{Q(\delta)}]$. We also consider the effect of the observed distribution of vaccinations, which is simply the contrast $\mathbb{P}_n[Y^{Q(1)} - Y^{Q(0)}]$. 

  \begingroup%
  \renewcommand\normalsize{\footnotesize}% Specify your font modification
  \input{tables/outcomes-inc-table-xgb-cc.tex}%
  \endgroup%

We estimate that the observed distribution of vaccinations in February 2021 caused a -1.2 (-1.3, -1.2), -1.1 (-1.2, -1.0), and -1.4 (-1.5, -1.3) percentage point change in depression, isolation, and worries about health, respectively. Using our GLM models we estimate -1.2 (-1.3, -1.1), -1.1 (-1.2, -1.0), and -1.4 (-1.6, -1.3) percentage point decreases in each outcome (see Table~\ref{tab:inctabglmcc} below). This implies that the observed distribution of vaccinations captured around one-third of the total estimated effects for each outcome. By comparison, 36.7 percent of this sample was vaccinated.

The incremental effect estimates exhibit diminishing marginal returns with respect to increasing the odds of vaccination. For example, increasing the odds of vaccination two-fold would have yielded only approximately 0.3 to 0.4 percentage point reductions in the prevalence of each outcome, while increasing the odds five-fold would only yield total reductions of approximately 1 percentage point for each outcome. Taken together with our average effect estimates, these estimates suggest that further investments in vaccine production or distribution during this time-frame may not have been ``worth it'' in terms of mental health past some point. Of course, a formal cost-benefit analysis would both need to quantify the both the costs of such an intervention, and the total benefits -- not only in terms of mental health -- but economic and health benefits. 

We also consider these estimands by age group and ethnicity. Figure~\ref{fig:incplothet} illustrates the incremental effect curves across a range of $\delta$. The curves look somewhat different for each group: this reflects both the heterogeneity in the treatment effects and the different distribution of propensity scores across both groups. The observed distribution of vaccinations changed depression by -1.3 (-1.9, -0.7) percentage points among the youngest age group, accounting for approximately 11 percent of the total effect among this group. The observed distribution of vaccinations reduced depression by -1.3 (-1.4, -1.2) percentage points among non-Hispanic adults aged 25 or older. On the other hand, a five-fold increase in the odds of treatment by five among the youngest group would have further decreased depression by an additional 3 percentage points, while a five-fold increase in the treatment odds would have only decreased depression by an additional 0.9 percentage points. The effect curves also illustrate that the returns to increasing the odds of vaccination decrease most slowly among the youngest respondents relative to the other two groups. 

These results are consistent with our previous findings that the youngest respondents were most likely to benefit from COVID-19 vaccinations, while they were also least likely to be vaccinated: 18.0 percent of the youngest group were vaccinated, while 25.9 percent and 38.9 percent of the older Hispanic and non-Hispanic groups were vaccinated. Our incremental effects analysis is consistent with this finding, and further illustrates that the youngest respondents would obtain the largest relative returns from an intervention increasing everyone's odds of vaccination.

\begin{figure}[H]
\begin{center}
    \includegraphics[scale=0.4]{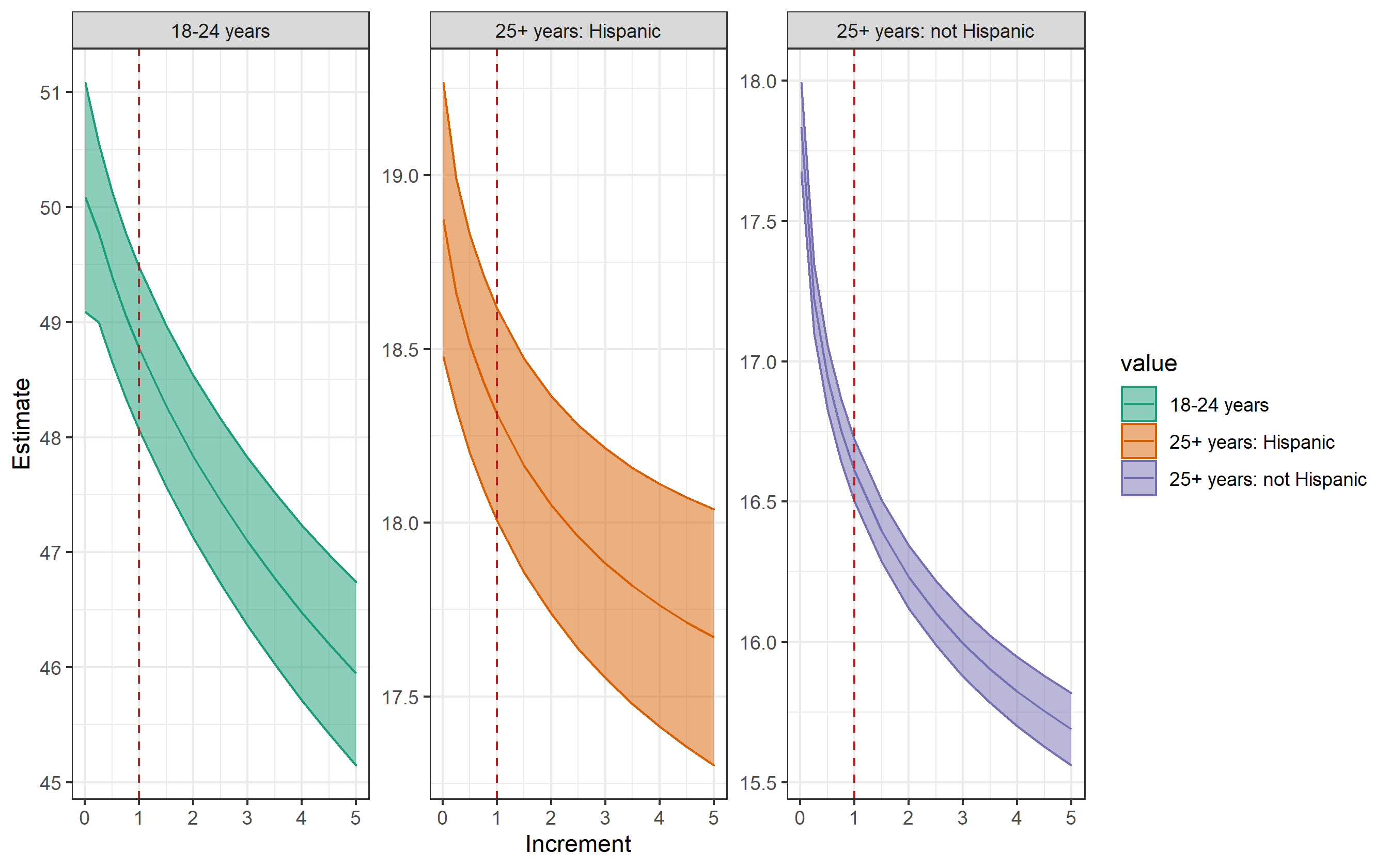}
    \caption{Heterogeneous incremental effects, February 2021}
    \label{fig:incplothet}
\end{center}
\end{figure}

The table below displays the incremental effects estimates using the GLM estimates.

  \begingroup%
  \renewcommand\normalsize{\footnotesize}% Specify your font modification
  \input{tables/outcomes-inc-table-glm-cc}%
  \endgroup%

The following tables display the incremental effect estimates using XGBoost models, stratified by subgroup. The GLM estimates are available on request.

  \begingroup%
  \renewcommand\normalsize{\footnotesize}% Specify your font modification
  \input{tables/age-depression-inc-table-xgb}%
  \endgroup%

  \begingroup%
  \renewcommand\normalsize{\footnotesize}% Specify your font modification
  \input{tables/age-isolation-inc-table-xgb}%
  \endgroup%

  \begingroup%
  \renewcommand\normalsize{\footnotesize}% Specify your font modification
  \input{tables/age-health-inc-table-xgb}%
  \endgroup%

\clearpage

\section{Sensitivity analysis}\label{app:sensitivity}

This section contains two propositions: first, one containing the derivation of our sensitivity analysis used in Section~\ref{sec:limitations}. This result follows almost immediately from results previously shown in \cite{luedtke2015statistics}. Our second proposition provides a simple extension to generate bounds on an estimand that does not assume random sample selection.

Throughout we let $\bar{\mu}_a(x, a') = \mathbb{E}[Y^a \mid X = x, A = a', Z = 1]$ and again let $\mu_a(x) = \mathbb{E}[Y^a \mid X = x, Z = 1]$. We define the true (conditional) target of inference as $\psi(x) = \mu_1(x) - \mu_0(x)$ and the biased target of the observed data functionals as $\bar{\psi}(x) = \bar{\mu}_1(x, 1) - \bar{\mu}_0(x, 0)$. Our goal is to bound $\mathbb{P}_n[\psi(X)]$ using only functions of the observed data.

\subsubsection{Sensitivity analysis 1}

In this subsection we propose a sensitivity analysis for the primary estimand $\mathbb{P}_n[\mu_1(X) - \mu_0(X)]$. Recall the assumption in \eqref{eqn:sensitivity1} that there is some known $\tau$ satisfying:

\begin{align*}
    \left(\frac{1}{1-\tau}\right) \ge \frac{\bar{\mu}_a(x, a')}{\bar{\mu}_a(x, a)} \ge 1-\tau \qquad \forall (x, a, a') 
\end{align*}

\begin{proposition}\label{prop1}
For a fixed $\tau$ satisfying \eqref{eqn:sensitivity1}, we obtain the bounds:

\begin{align*}
\psi(x) &\le \bar{\psi}(x) + \left(\frac{\tau}{1 - \tau}\right)\bar{\mu}_1(x, 1) P(A = 0 \mid X = x, Z = 1) \\
&+ \tau \bar{\mu}_0(x, 0)P(A = 1 \mid X = x, Z = 1) \\
\psi(x) &\ge \bar{\psi}(x) - \tau \bar{\mu}_1(x, 1)P(A = 0 \mid X = x, Z = 1) \\
&- \left(\frac{\tau}{1 - \tau}\right)\bar{\mu}_0(x, 0) P(A = 1 \mid X = x, Z = 1)
\end{align*}
\end{proposition}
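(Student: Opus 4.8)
The plan is to write the true conditional effect $\psi(x)$ as the observed-data target $\bar\psi(x)$ plus a correction term composed entirely of the unidentified counterfactual means, and then to control that correction using the ratio restriction \eqref{eqn:sensitivity1}. First I would apply the law of iterated expectations to $\mu_a(x) = \mathbb{E}[Y^a \mid X = x, Z = 1]$, conditioning on the observed treatment within the analytic sample, to obtain
\begin{align*}
\mu_a(x) = \bar\mu_a(x, a)\, P(A = a \mid X = x, Z = 1) + \bar\mu_a(x, a')\, P(A = a' \mid X = x, Z = 1),
\end{align*}
with $a' \ne a$. Here $\bar\mu_1(x,1)$ and $\bar\mu_0(x,0)$ are identified from the observed data (by consistency they equal $\mathbb{E}[Y \mid X = x, A = a, Z = 1]$), so $\bar\psi(x)$ is the naive observed contrast, while the cross terms $\bar\mu_1(x,0)$ and $\bar\mu_0(x,1)$ are precisely the unidentified counterfactuals that $\tau$ is meant to bound.

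Next I would form $\psi(x) = \mu_1(x) - \mu_0(x)$, subtract $\bar\psi(x) = \bar\mu_1(x,1) - \bar\mu_0(x,0)$, and simplify using $P(A = 0 \mid X = x, Z = 1) = 1 - P(A = 1 \mid X = x, Z = 1)$. Writing $p_1 = P(A = 1 \mid X = x, Z = 1)$ and $p_0 = 1 - p_1$ for brevity, the algebra collapses to the clean identity
\begin{align*}
\psi(x) - \bar\psi(x) = p_0\big[\bar\mu_1(x,0) - \bar\mu_1(x,1)\big] + p_1\big[\bar\mu_0(x,0) - \bar\mu_0(x,1)\big].
\end{align*}
The entire deviation from the observed-data target is thus a propensity-weighted combination of the two counterfactual gaps, each of which I can bound separately.

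Then I would invoke \eqref{eqn:sensitivity1} twice. Taking $a = 1, a' = 0$ gives $(1-\tau)\bar\mu_1(x,1) \le \bar\mu_1(x,0) \le \tfrac{1}{1-\tau}\bar\mu_1(x,1)$, hence $-\tau\,\bar\mu_1(x,1) \le \bar\mu_1(x,0) - \bar\mu_1(x,1) \le \tfrac{\tau}{1-\tau}\bar\mu_1(x,1)$; taking $a = 0, a' = 1$ gives symmetrically $-\tfrac{\tau}{1-\tau}\bar\mu_0(x,0) \le \bar\mu_0(x,0) - \bar\mu_0(x,1) \le \tau\,\bar\mu_0(x,0)$. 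Substituting the two upper bounds into the identity (weighted by the nonnegative quantities $p_0, p_1$) yields the stated upper bound on $\psi(x)$, and substituting the two lower bounds yields the stated lower bound.

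The step I expect to require the most care is converting the multiplicative ratio restriction into additive gaps while keeping the inequality directions correct. This hinges on the outcome $Y$ being binary, so that each $\bar\mu_a(x,a')$ is a probability in $[0,1]$ and in particular nonnegative: that nonnegativity is what lets me clear the denominator in \eqref{eqn:sensitivity1} without flipping inequalities, and what ensures the weights $p_0, p_1$ preserve signs when the bounds are plugged back in. I would also note that positivity \eqref{eqn:positivity1} guarantees $p_0, p_1$ are well defined and strictly positive at every $x$ in the support, so the decomposition and the weighting are valid throughout.
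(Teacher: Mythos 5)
Your proposal is correct and takes essentially the same route as the paper's proof: both apply the law of iterated expectations to write $\mu_a(x)$ as a propensity-weighted mixture of $\bar{\mu}_a(x,a)$ and $\bar{\mu}_a(x,a')$, convert the multiplicative constraint \eqref{eqn:sensitivity1} into the additive bounds $-\tau\,\bar{\mu}_a(x,a) \le \bar{\mu}_a(x,a') - \bar{\mu}_a(x,a) \le \left(\frac{\tau}{1-\tau}\right)\bar{\mu}_a(x,a)$, and then combine the upper bound for $\mu_1(x)$ with the lower bound for $\mu_0(x)$ (and vice versa). Your only differences are cosmetic --- you collect both treatment arms into a single identity for $\psi(x) - \bar{\psi}(x)$ before substituting, and you make explicit the nonnegativity of the conditional means (binary $Y$) that licenses clearing the denominator, a point the paper leaves implicit.
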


\begin{proof}[Proof of Proposition~\ref{prop1}]
\begin{align}
    \nonumber&\mathbb{E}[Y^a \mid X = x, Z = 1] \\
    \nonumber&= \sum_{a^\star \in \{a, a'\}} \bar{\mu}_a(x, a^\star)P(A = a^\star \mid X = x, Z = 1) \\
    \label{eqn:propproof1}&= \bar{\mu}_a(x, a) + P(A = a' \mid X = x, Z = 1)[\bar{\mu}_a(x, a') - \bar{\mu}_a(x, a)] 
\end{align}
where the first equality holds by the law of iterated expectations, and the second by rearranging terms. For a fixed $\tau$, \eqref{eqn:sensitivity1} implies that:

\begin{align*}
\left(\frac{\tau}{1 - \tau}\right)\bar{\mu}_a(x, a) \ge \bar{\mu}_a(x, a') - \bar{\mu}_a(x, a) \ge - \bar{\mu}_a(x, a)\tau
\end{align*}

Plugging this into \eqref{eqn:propproof1}, we obtain the following inequalities:

\begin{align}
    \nonumber&\left(\frac{\tau}{1 - \tau}\right) \bar{\mu}_a(x, a) P(A = a' \mid X = x, Z = 1) \\
    \nonumber&\ge \mu_a(x) - \bar{\mu}_a(x, a) \\
     \nonumber&\ge -\tau \bar{\mu}_a(x, a)P(A = a' \mid X = x, Z = 1) 
\end{align}

The result follows by taking the upper bound of $\mu_1(x)$ minus the lower bound for $\mu_0(x)$ to an upper bound on $\psi(x)$, and vice versa for the lower bound.
\end{proof}

\begin{remark}
To obtain bounds on the average effects, we can simply choose a $\tau$ valid for all values of $(x, a)$ and average the expression over the relevant covariate distribution.
\end{remark}

\begin{remark}
These bounds are expressed in terms of the observed data distribution. In order to estimate them we need to account for the uncertainty in our estimates of the nuisance functions. We use influence-function based estimators of these quantities (averaged over the empirical covariate distribution). We refer to \cite{luedtke2015statistics} for more details.
\end{remark}

\subsection{Sensitivity analysis 2}

Our target estimand is defined with respect to the observed sample and the above sensitivity analysis does not account for the fact that the prevalence of potential outcomes in our sample may not be representative. For instance, consider the following two estimands:

\begin{align}
    &\label{eqn:estimand1}\mathbb{P}_n(\mathbb{E}[Y^1 - Y^0 \mid X, V = 1]) \\
    &\label{eqn:estimand2}\mathbb{P}_n(\mathbb{E}[Y^1 - Y^0 \mid X, Z = 1])
\end{align}

Random sample selection implies that equations are equal (see Appendix~\ref{app:identification}). This equality implies that while the CTIS may have a non-representative covariate distribution, our statistical target of inference generalizes across covariate distributions of CTIS non-respondents, and that we could reweight our estimates to generalize to populations with different covariate distributions.

We now instead consider the case where we these quantities are not equal due to non-random sample selection. For example, if the selection process were a function of observed or potential outcomes our sample estimand be large in magnitude while the effects among non-respondents were zero. To be clear, this would imply that our effect estimates cannot generalize to a comparable covariate distribution of CTIS non-respondents. 

We therefore consider targeting \eqref{eqn:estimand1} instead of \eqref{eqn:estimand2}. To obtain a bound on this estimand, we only assume the following:

\begin{align}\label{eqn:sensitivity2}
    \frac{1}{1-\tau} \ge \frac{\mathbb{E}[Y^a \mid X = x, V = 1, Q_a = a']}{\mathbb{E}[Y \mid X = x, V = 1, Q_a = a]} \ge 1-\tau \qquad \forall (x, a, a') 
\end{align}
where $Q_a = \mathds{1}(A = a)RS$. We next derive bounds for equation~(\ref{eqn:estimand1}) while relaxing the assumption of Random Sample Selection (equation~(\ref{eqn:amarx})).

\begin{proposition}\label{prop2}
Define $\tilde{\mu}_a(x, a') = \mathbb{E}[Y^a \mid X = x, V = 1, Q_a = a']$, $\tilde{\psi} = \tilde{\mu}_1(x, 1) - \tilde{\mu}_0(x, 0)$, and let $\dot{\mu}_a(x) = \mathbb{E}[Y^a \mid X, V = 1]$. Assume that \eqref{eqn:sensitivity2} holds for a known $\tau$ for all $(x, a, a')$. Then for any $X = x$:

\begin{align*}
    &\dot{\mu}_1(x) - \dot{\mu}_0(x) \le \tilde{\psi}(x) + \tilde{\mu}_1(x, 1)\left(\frac{1}{1 - \tau}\right) + \tau \tilde{\mu}_0(x, 0) \\
    &\dot{\mu}_1(x) - \dot{\mu}_0(x) \ge \tilde{\psi}(x) - \tau \tilde{\mu}_1(x, 1) - \tilde{\mu}_0(x, 0)\left(\frac{\tau}{1 - \tau}\right) 
\end{align*}
\end{proposition}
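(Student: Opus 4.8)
The plan is to mirror the proof of Proposition~\ref{prop1}, but to decompose over the joint selection-and-treatment indicator $Q_a = \mathds{1}(A = a)RS$ instead of over $A$, and to condition on $V = 1$ rather than on $Z = 1$ throughout. Writing $p_a = P(Q_a = 1 \mid X = x, V = 1)$, the first step is the law of iterated expectations,
$$\dot{\mu}_a(x) = \tilde{\mu}_a(x, 1)\,p_a + \tilde{\mu}_a(x, 0)\,(1 - p_a).$$
The point of this split is that the $Q_a = 1$ piece is identified: when $Q_a = 1$ we have $A = a$ and $RS = 1$, so Consistency (Assumption~\ref{ass:consistency}) gives $Y = Y^a$ and hence $\tilde{\mu}_a(x, 1) = \mathbb{E}[Y \mid X = x, V = 1, Q_a = 1]$ is a functional of the observed law. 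The $Q_a = 0$ piece averages $Y^a$ over individuals who either received the other treatment or were not complete cases, and is therefore not identified; it is exactly the quantity that \eqref{eqn:sensitivity2} is designed to control.

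Second, I would use \eqref{eqn:sensitivity2} to trap the unidentified conditional mean. The assumption confines the ratio $\tilde{\mu}_a(x, 0)/\tilde{\mu}_a(x, 1)$ to the interval $[\,1 - \tau,\ 1/(1-\tau)\,]$; since this interval is closed under reciprocals, the reverse ratio obeys the same two-sided bound, which lets me bound either conditional mean by a multiple of the other in whichever direction I need. Substituting the appropriate one-sided inequality into the display above, and bounding the mixing weights $p_a$ and $1 - p_a$ by their extreme value of one (legitimate because $Y$ is binary, so every $\tilde{\mu}_a \ge 0$), produces a one-sided envelope for $\dot{\mu}_a(x)$ that no longer contains $p_a$: an upper bound proportional to $\tilde{\mu}_a$ with factor involving $1/(1-\tau)$, and a lower bound with factor $1 - \tau$.

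Third, I would assemble the two-sided bound on the contrast. For the upper bound on $\dot{\mu}_1(x) - \dot{\mu}_0(x)$ I combine the upper envelope for $\dot{\mu}_1(x)$ (kept in terms of the observed $\tilde{\mu}_1(x, 1)$) with the lower envelope for $\dot{\mu}_0(x)$ (kept in terms of $\tilde{\mu}_0(x, 0)$), and then regroup around $\tilde{\psi}(x) = \tilde{\mu}_1(x, 1) - \tilde{\mu}_0(x, 0)$ to match the stated form; the lower bound is symmetric, swapping the roles of the two envelopes. Writing everything relative to $\tilde{\psi}(x)$ and collecting the correction terms reproduces the coefficients $1/(1-\tau)$, $\tau$, and $\tau/(1-\tau)$ appearing in the proposition.

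The main thing to watch is conceptual rather than computational: unlike in Proposition~\ref{prop1}, where the off-diagonal mean $\bar{\mu}_a(x, a')$ is still an average of $Y^a$ over a single treatment arm, here the $Q_a = 0$ stratum pools both the opposite treatment arm and all non--complete-case units, so no part of $\tilde{\mu}_a(x, 0)$ is pinned down by consistency and the ratio assumption is the only available handle. Consequently the bound legitimately retains the unidentified quantity $\tilde{\mu}_0(x, 0)$; to turn it into something estimable one must further substitute a worst-case value for $\tilde{\mu}_0(x, 0)$ (any number in $[0, 1]$, since $Y$ is binary), which is the source of the conservativeness noted in the main text. The only other care points are tracking the direction of each inequality when passing through the reciprocal ratio, and averaging the conditional bounds over the empirical covariate distribution to recover the claim for $\mathbb{P}_n$ and the estimand in \eqref{eqn:estimand1}.
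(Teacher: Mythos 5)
Your core derivation matches the paper's proof step for step: condition on $Q_a$ via iterated expectations, identify the $Q_a = 1$ stratum by consistency, use \eqref{eqn:sensitivity2} to trap the unidentified $Q_a = 0$ stratum mean, bound the unknown selection probability $P(Q_a = 0 \mid X = x, V = 1)$ by one (valid because the multiplied terms are nonnegative, as you note --- the paper is terser on this point, saying only that the probability ``must always be less than 1''), and then combine the upper envelope for $\dot{\mu}_1(x)$ with the lower envelope for $\dot{\mu}_0(x)$ and vice versa. One detail: your regrouping actually yields the coefficient $\tau/(1-\tau)$ on $\tilde{\mu}_1(x,1)$ in the upper bound, mirroring the lower bound and the structure of Proposition~\ref{prop1}; the proposition's stated coefficient $1/(1-\tau)$ is strictly larger, so your (sharper) bound implies the stated one --- the stated form is most plausibly a typo, and your claim that regrouping ``reproduces'' the $1/(1-\tau)$ coefficient glosses over this mismatch.

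Your closing paragraph, however, contains a genuine error: $\tilde{\mu}_0(x,0)$ is not an unidentified quantity requiring a worst-case substitution over $[0,1]$. The paper's notation is admittedly confusing, but the second argument equal to the first ($a' = a$) indexes the \emph{observed} stratum $Q_a = 1$. This is visible in \eqref{eqn:sensitivity2} itself, whose denominator is $\mathbb{E}[Y \mid X = x, V = 1, Q_a = a]$ with the observed outcome $Y$ (meaningful precisely where $A = a$ and $RS = 1$, where consistency gives $Y = Y^a$), and in the remark following the proposition, which computes the explain-away value $\tau^\star = 1 - \sqrt{\tilde{\psi}_1/\tilde{\psi}_0}$ from $\tilde{\psi}_a = \mathbb{E}[\tilde{\mu}_a(X,a)]$ --- impossible if those quantities were unidentified. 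So $\tilde{\mu}_0(x,0)$ is the observed mean among unvaccinated complete cases, $\tilde{\psi}(x)$ is the observed contrast, and the bounds are directly estimable via the influence-function estimators the paper describes; the conservativeness comes from replacing $P(Q_a = 0 \mid X = x, V = 1)$ by one, not from any residual unidentified term in the display. Note that your own third step already treats $\tilde{\mu}_0(x,0)$ as the observed envelope, so the final paragraph contradicts your derivation; the inequalities you prove are correct, but the interpretation you append to them would wrongly render the sensitivity analysis inoperable as stated.
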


\begin{proof}[Proof of Proposition~\ref{prop2}]\label{prop2}
The result follows by following the same steps in the proof of Proposition~\ref{prop1}, but iterating expectation on the variable $Q_a$. The challenge is that we neither know nor can estimate $P(Q_a = 1 \mid X = x, V = 1)$, since we do not observe any characteristics about survey non-respondents. On the other hand, this value must always be less than 1. This then yields the following inequality:

\begin{align*}
    &\left(\frac{\tau}{1 - \tau}\right) \tilde{\mu}_a(x, a) \ge \left(\frac{\tau}{1 - \tau}\right)\tilde{\mu}_a(x, a)P(Q_a = a' \mid X = x, V = 1) \\
    \nonumber&\ge \dot{\mu}_a(x) - \tilde{\mu}_a(x, a) \\
     \nonumber&\ge -\tau \tilde{\mu}_a(x, a)P(Q_a = a' \mid X = x, V = 1) \ge -\tau \bar{\mu}_a(x, a)
\end{align*}

The result follows by taking the upper bound for $\dot{\mu}_1(x)$ and subtracting the lower bound for $\dot{\mu}_0(x)$ for the upper bound of $\dot{\psi}(x)$, and vice versa for the lower bound.
\end{proof}

\begin{remark}
We can again average this expression over the empirical covariate distribution to obtain bounds on $\dot{\psi}$, and can estimate these bounds using influence-function based estimators to account for the uncertainty in the estimates of the nuisance functions.
\end{remark}

\begin{remark}
For our application we are only interested in an upper bound, and, in particular, calculating a value of $\tau^\star$ that would explain away the estimated treatment effect. Letting $\tilde{\psi}_a = \mathbb{E}[\tilde{\mu}_a(X, a)]$, this would occur where:

\begin{align*}
    \tau^\star = 1 - \sqrt{\frac{\tilde{\psi}_1}{\tilde{\psi}_0}}
\end{align*}
While a simple formula, this ignores the uncertainty in the estimates.
\end{remark}

\clearpage

\section{Robustness checks}\label{app:robustness}

This section presents several robustness checks. These primarily involve running our analyses on different subsets of the data, and / or including / excluding different covariates. We describe each check in detail below. We rerun our mediation analyses across all of the specified checks. The results, when not provided, were similar to the primary analyses and are available on request.

\subsection{Vaccine-accepting sample}

Consider the target estimand:

\begin{align*}
\psi^{va} = \frac{1}{\sum_{i=1}^nV}\sum_{i: V = 1}^n\mathbb{E}[Y^1 - Y^0 \mid X, VS = 1]
\end{align*}

In contrast to our primary estimand, this estimand includes the incomplete-cases. As a first approach we directly derive point estimates for this quantity. These estimates require no additional assumptions beyond those previously invoked, but do require using inverse-weighting with respect to the probability of being a complete-case (see Appendix~\ref{app:estimation}). As a second approach we instead derive bounds on this quantity.

\subsubsection{Point estimates}

Our estimates are qualitatively similar to our primary analysis; however, the variance estimates increase substantially. Table~\ref{tab:medtab1a} displays the outcome analysis using logistic regression while Table~\ref{tab:medtab7} displays the results using XGBoost. For our mediation analysis, Table~\ref{tab:medtab1a} displays the results using logistic regression, and Table~\ref{tab:medtab7} displays the results using XGBoost. The XGBoost estimates are all statistically insignificant. This appears primarily driven by the more extreme complete-case weights that we estimate using these models, suggesting positivity violations. We present plots displaying the full data heterogeneity analysis below in Figures~\ref{fig:alloutcomes-full-glm}, ~\ref{fig:alloutcomes-full-xgb}, ~\ref{fig:htemedprops-glm-full}, ~\ref{fig:htemedprops-xgb-full}. 

  \begingroup%
  \renewcommand\normalsize{\footnotesize}% Specify your font modification
  \input{tables/risk-difference-table-full}%
  \endgroup%

  \begingroup%
  \renewcommand\normalsize{\footnotesize}% Specify your font modification
  \input{tables/risk-difference-table-xgb-full}%
  \endgroup%

  \begingroup%
  \renewcommand\normalsize{\footnotesize}% Specify your font modification
  \begin{table}[!h]

\caption{Mediation analysis: GLM estimates, full data (95\% CI) (95\% CI) \label{tab:medtab1a}}
\centering
\begin{threeparttable}
\begin{tabular}[t]{lllll}
\toprule
Effect & Ref = 0 & Proportion (Ref = 0) & Ref = 1 & Proportion (Ref = 1)\\
\midrule
IDE & -2.91 (-3.25, -2.57) & 69.77 (65.31, 74.24) & -1.90 (-2.29, -1.50) & 45.50 (39.38, 51.62)\\
IIE & -1.26 (-1.47, -1.05) & 30.23 (25.76, 34.69) & -2.27 (-2.37, -2.17) & 54.50 (48.38, 60.62)\\
IIE - Isolated & -1.53 (-1.70, -1.37) & 36.77 (32.89, 40.64) & -1.53 (-1.68, -1.38) & 36.64 (32.51, 40.78)\\
IIE - Health & -0.41 (-0.48, -0.34) & 9.90 (7.92, 11.88) & -0.25 (-0.31, -0.19) & 5.92 (4.25, 7.59)\\
IIE - Covariant & 0.69 (0.58, 0.79) & -16.44 (-19.75, -13.13) & -0.50 (-0.69, -0.30) & 11.94 (9.45, 14.43)\\
\bottomrule
\end{tabular}
\begin{tablenotes}
\item IDE represents the interventional direct effect; IIE represents the total interventional indirect effect; IIE - X represents interventional indirect effect via the indicated channel X; Ref represents the reference category for the indirect effects
\end{tablenotes}
\end{threeparttable}
\end{table}
  \endgroup%

  \begingroup%
  \renewcommand\normalsize{\footnotesize}% Specify your font modification
  \begin{table}[!h]

\caption{Mediation analysis: XGBoost estimates, full data (95\% CI) \label{tab:medtab7}}
\centering
\begin{threeparttable}
\begin{tabular}[t]{lllll}
\toprule
Effect & Ref = 0 & Proportion (Ref = 0) & Ref = 1 & Proportion (Ref = 1)\\
\midrule
IDE & -1.58 (-3.10, -0.06) & 76.50 (34.24, 118.76) & -1.17 (-2.73, 0.39) & 56.73 (23.78, 89.69)\\
IIE & -0.48 (-1.65, 0.68) & 23.50 (-18.76, 65.76) & -0.89 (-1.19, -0.59) & 43.27 (10.31, 76.22)\\
IIE - Isolated & -0.58 (-1.67, 0.51) & 28.23 (-9.91, 66.37) & -0.88 (-1.73, -0.03) & 42.51 (4.10, 80.92)\\
IIE - Health & -0.47 (-0.79, -0.15) & 22.60 (-7.40, 52.60) & -0.27 (-0.48, -0.06) & 13.18 (0.98, 25.37)\\
IIE - Covariant & 0.56 (0.12, 1.01) & -27.33 (-67.20, 12.53) & 0.26 (-0.71, 1.22) & -12.42 (-35.01, 10.16)\\
\bottomrule
\end{tabular}
\begin{tablenotes}
\item IDE represents the interventional direct effect; IIE represents the total interventional indirect effect; IIE - X represents interventional indirect effect via the indicated channel X; Ref represents the reference category for the indirect effects
\end{tablenotes}
\end{threeparttable}
\end{table}
  \endgroup%

\begin{figure}[H]
\begin{center}
    \includegraphics[scale=0.5]{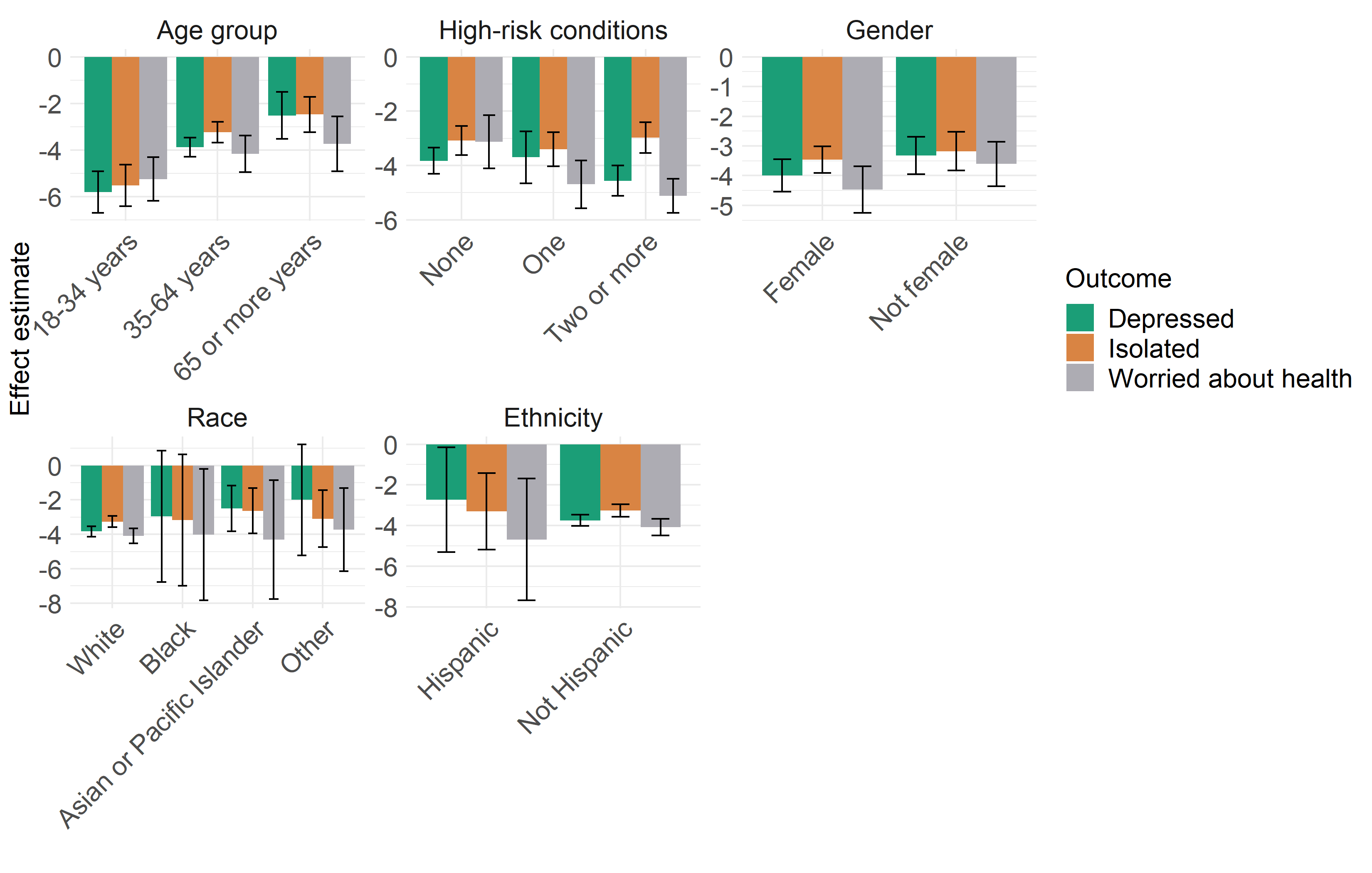}
    \caption{Outcome heterogeneity, full data estimates (GLM models)}
    \label{fig:alloutcomes-full-glm}
\end{center}
\end{figure}

\begin{figure}[H]
\begin{center}
    \includegraphics[scale=0.5]{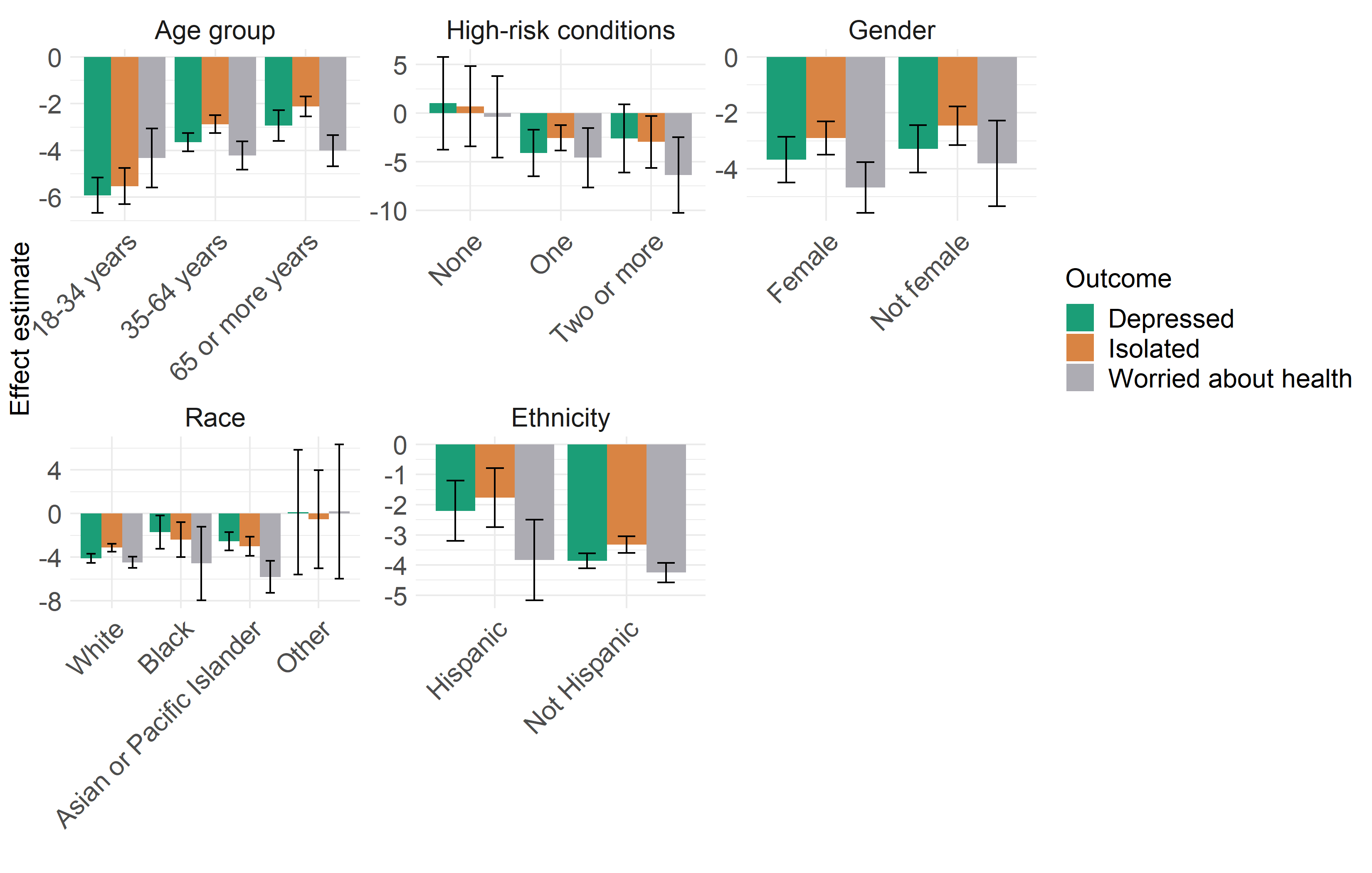}
    \caption{Outcome heterogeneity, full data estimates (XGBoost models)}
    \label{fig:alloutcomes-full-xgb}
\end{center}
\end{figure}

\begin{figure}[H]
\begin{center}
    \includegraphics[scale=0.5]{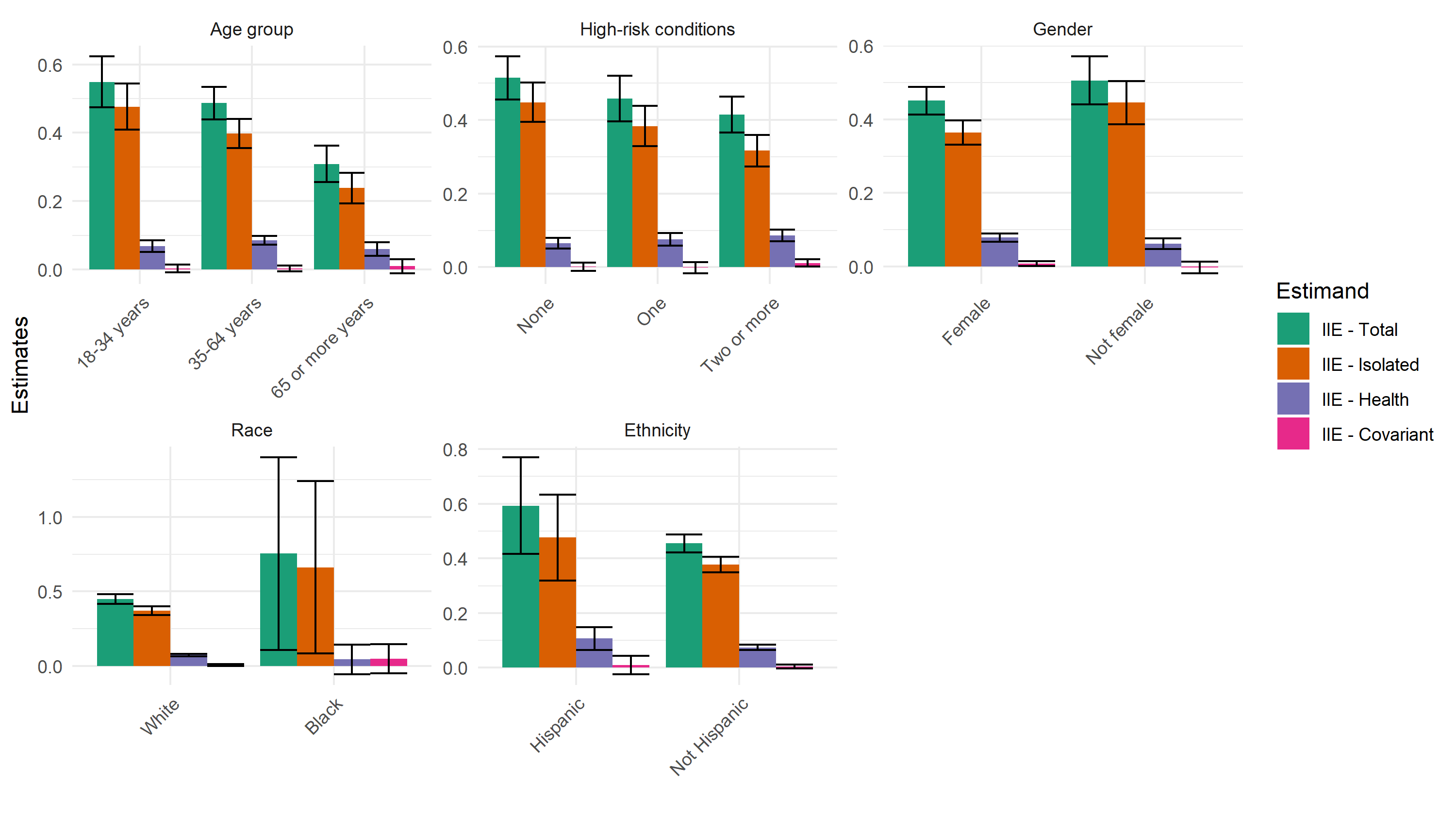}
    \caption{Mediation heterogeneity: full data estimates (GLM models)}
    \label{fig:htemedprops-glm-full}
\end{center}
\end{figure}

\begin{figure}[H]
\begin{center}
    \includegraphics[scale=0.5]{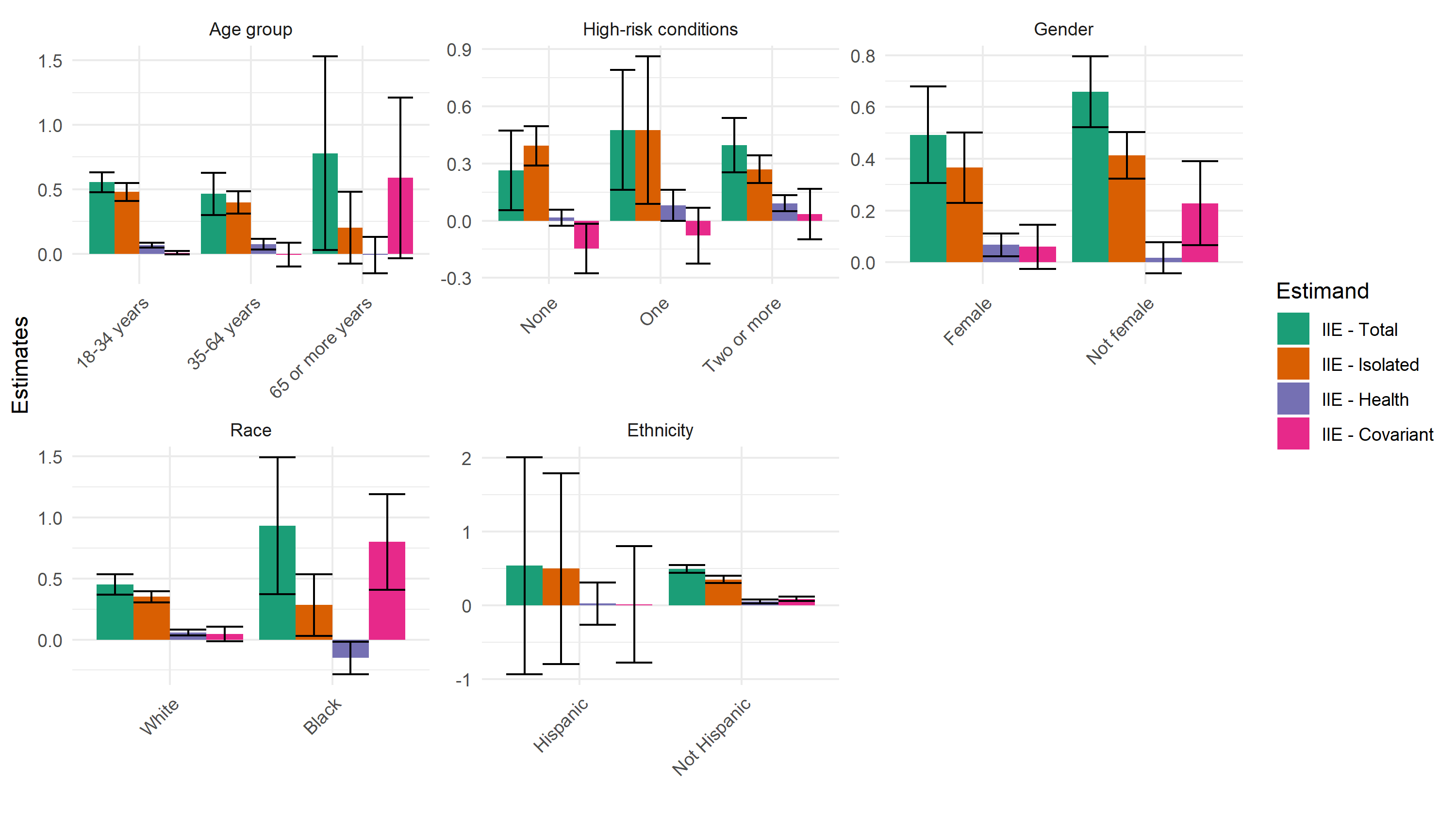}
    \caption{Mediation heterogeneity: full data estimates (XGBoost models)}
    \label{fig:htemedprops-xgb-full}
\end{center}
\end{figure}

\subsubsection{Bounds}

We can obtain upper bounds on $\psi^{va}$ by assuming that effects on the $R = 0$ stratum are not greater than zero on average. This removes the positivity requirement required to estimate $\psi^{va}$ -- \eqref{eqn:positivity3} in Appendix~\ref{app:identification}. 

Letting $p_r$ indicate the proportion of complete-cases over the total number of vaccine-accepting respondents, we can scale our estimates of $\psi$ by $p_r$ (approximately 86 percent) to obtain a valid upper bound on $\psi^{va}$.\footnote{This percent excludes non-respondents to the vaccine-hesitancy question. Technically, $p_r$ is not point identified.} We can generalize this approach to cover any other portion of the data that we do not include in our primary sample (e.g. those with missing ZIP code information, who live in Alaska, or who are vaccine-hesitant). Notably, we can never fully explain away our effect estimates under this assumption: we would scale them by at most 64\%, the complete-case percent of the initial sample.

We may also wish to make inferences while allowing our estimates to be biased, either due to violations of no unmeasured confounding or random sample-selection. Fortunately, the sensitivity parameters we estimated previously in Section~\ref{sec:limitations} remain valid for $\psi^{va}$. This follows since, by assumption, the effect on the unobserved portion of the data is not greater than zero. This logic generalizes to any portion of the data excluded from our primary analysis.

\subsection{Hesitant included}

Table~\ref{tab:senstab1} presents our primary estimates when including vaccine-hesitant respondents. Consistent with our expectations detailed in Section~\ref{sec:limitations}, the effect estimates for the outcomes anlaysis move closer to zero. 

  \begingroup%
  \renewcommand\normalsize{\footnotesize}% Specify your font modification
  \input{tables/senstab1}%
  \endgroup%

%\myinput{tables/mediation-table-two}

\subsection{Bad controls}

Table~\ref{tab:senstab5} presents our results when we do not control for occupation, employment status, or worries about finances. Consistent with our expectations detailed in Section~\ref{sec:limitations}, these results move further away from zero. 

  \begingroup%
  \renewcommand\normalsize{\footnotesize}% Specify your font modification
  % latex table generated in R 4.2.0 by xtable 1.8-4 package
% Wed Feb  1 12:24:30 2023
\begin{table}[ht]
\centering
\caption{Estimate sensitivity: employment status, financial worry, 
 occupation not controlled} 
\label{tab:senstab5}
\begin{tabular}{ll}
  \hline
Outcome & Effects \\ 
  \hline
Depression & -5.53 (-5.72, -5.33) \\ 
  Isolation & -6.05 (-6.26, -5.85) \\ 
  Health & -6.90 (-7.14, -6.67) \\ 
   \hline
\end{tabular}
\end{table}
  \endgroup%

\subsection{Including hesitancy non-response}

Our primary analysis excludes all individuals who report vaccine hesitancy, or who do not respond to the question. Implicitly, this analysis assumes that all non-responses were vaccine-hesitant. We run our analysis when including the 75,568 respondents who did not respond to the vaccine hesitancy question (and who either did not respond to the vaccination question or indicated that they had not previously received the COVID-19 vaccine). However, this only adds 1,053 respondents to our analytic sample. Table~\ref{tab:senstab4} compares the primary outcome analysis to this analysis, and shows that the estimates are consistent. 

  \begingroup%
  \renewcommand\normalsize{\footnotesize}% Specify your font modification
  % latex table generated in R 4.2.0 by xtable 1.8-4 package
% Fri Jan 27 13:43:33 2023
\begin{table}[ht]
\centering
\caption{Estimate sensitivity: missing hesitancy included} 
\label{tab:senstab4}
\begin{tabular}{lll}
  \hline
Outcome & Analytic data & Missing hesitancy included \\ 
  \hline
Depressed & -3.72 (-3.98, -3.47) & -3.72 (-3.97, -3.47) \\ 
  Isolated & -3.26 (-3.53, -2.99) & -3.26 (-3.53, -2.99) \\ 
  Worried about health & -4.26 (-4.55, -3.96) & -4.24 (-4.54, -3.94) \\ 
   \hline
\end{tabular}
\end{table}
  \endgroup%

%\myinput{tables/mediation-table-four}

\subsection{Omitting suspicious responses}

We also run our analysis excluding individuals who give suspicious responses. To identify these observations, we roughly follow a script used in \cite{bilinski2021better}, and exclude people who report any values that fall outside the range of the responses:

\begin{itemize}
    \item Household size between 0 and 30
    \item Number of sick in household between 0 and 30
    \item Number of work contacts between 0 and 100
    \item Number of shopping contacts between 0 and 100
    \item Number of work contacts between 0 and 100
    \item Number of social contacts between 0 and 100
    \item Number of other contacts between 0 and 100
    \item Number of high-risk health conditions less than all options minus 1
    \item Number of symptoms less than all options minus 1
\end{itemize}

These omissions account for approximately 2.5 percent of responses on our primary dataset, and approximately 2.1 percent of the complete-case subset. Table~\ref{tab:senstab2} displays the results when excluding these responses. The results are quite close to our primary estimates. Of course, our classifications of unreasonable answers might be incorrect, and any other forms of measurement error are also possible. However, our results demonstrate some robustness to the possibility of measurement error. 

  \begingroup%
  \renewcommand\normalsize{\footnotesize}% Specify your font modification
  % latex table generated in R 4.2.0 by xtable 1.8-4 package
% Fri Jan 27 13:43:32 2023
\begin{table}[ht]
\centering
\caption{Estimate sensitivity: suspicious responses excluded} 
\label{tab:senstab2}
\begin{tabular}{lll}
  \hline
Outcome & Analytic data & Suspicious responses excluded \\ 
  \hline
Depressed & -3.72 (-3.98, -3.47) & -3.81 (-4.06, -3.55) \\ 
  Isolated & -3.26 (-3.53, -2.99) & -3.34 (-3.61, -3.07) \\ 
  Worried about health & -4.26 (-4.55, -3.96) & -4.16 (-4.46, -3.86) \\ 
   \hline
\end{tabular}
\end{table}
  \endgroup%

%\myinput{tables/mediation-table-three} 

\end{document}